\newif\ifcombinedsubmission
\combinedsubmissiontrue

\documentclass[%
  aps,
  prl,
  reprint,
  superscriptaddress,
  longbibliography,
  floatfix
]{revtex4-2}

\ifcombinedsubmission\else
  \usepackage{xr}
\fi
\usepackage[T1]{fontenc}
\usepackage[utf8]{inputenc}
\usepackage{microtype}
\usepackage{amsmath,amssymb,amsthm}
\usepackage{mathtools}
\usepackage{braket}
\usepackage{graphicx}
\usepackage{xcolor}
\usepackage{hyperref}
\usepackage{dsfont}

\hypersetup{
  colorlinks=true,
  linkcolor=blue!40!black,
  citecolor=blue!40!black,
  urlcolor=blue!40!black,
  plainpages=false,
  pdfpagelabels=true
}

\newtheorem{theorem}{Theorem}
\newtheorem{lemma}{Lemma}
\newtheorem{proposition}{Proposition}
\newtheorem{corollary}{Corollary}
\theoremstyle{definition}
\newtheorem{definition}{Definition}
\newtheorem{assumption}{Assumption}
\theoremstyle{remark}
\newtheorem{remark}{Remark}
\newtheorem{observation}{Numerical observation}
\theoremstyle{plain}

\DeclareMathOperator{\Tr}{Tr}
\DeclareMathOperator{\spr}{spr}
\DeclareMathOperator{\rank}{rank}

\newcommand{\Id}{\mathds{1}}

\newcommand{\cH}{\mathcal{H}}
\newcommand{\cB}{\mathcal{B}}
\newcommand{\cR}{\mathcal{R}}
\newcommand{\eps}{\varepsilon}
\newcommand{\dTV}{d_{\mathrm{TV}}}
\newcommand{\RFQ}{R_F^{(Q)}}
\newcommand{\hRFQ}{\widehat{R}_F^{(Q)}}
\newcommand{\OmCM}{\Omega_{CM}^{(\cR)}}
\newcommand{\rhoM}{\rho_{M}^{(\cR)}}

\newcommand{\past}[1]{\overleftarrow{#1}}
\newcommand{\fut}[1]{\overrightarrow{#1}}

\ifcombinedsubmission\else
\fi
\newcommand{\suppref}[1]{%
  \ifcombinedsubmission\ref{#1}\else\ref*{supp-#1}\fi}

\begin{document}

\title{Dimension Reduction for Quantum Adaptive Agents}

\author{Rishi Sundar}
\affiliation{Department of Physics \& Astronomy, University of Manchester,
             Manchester M13 9PL, United Kingdom}
\affiliation{Centre for Quantum Science and Engineering, University of Manchester,
             Manchester M13 9PL, United Kingdom}
\author{Thomas J. Elliott}
\affiliation{Department of Physics \& Astronomy, University of Manchester,
             Manchester M13 9PL, United Kingdom}
\affiliation{Department of Mathematics, University of Manchester,
             Manchester M13 9PL, United Kingdom}
\affiliation{Centre for Quantum Science and Engineering, University of Manchester,
             Manchester M13 9PL, United Kingdom}

\date{}

\begin{abstract}
Adaptive agents realise complex reactive behaviours by using a memory of past input stimuli and output actions to guide structured future responses. Quantum adaptive agents can operate while storing less information in memory than optimal classical counterparts; yet, this does not necessarily translate into a reduced dimension of the memory that must be physically realised. We introduce a route-truncate-repair procedure that converts entropic quantum memory advantages into reductions in memory dimension. Routing a reference input process through an agent yields a temporal matrix product state representation whose canonical bond is identified with the agent's memory. Truncating this bond and locally repairing the resulting dynamics produces a smaller, physically-valid agent that remains capable of responding to arbitrary input sequences. A fidelity-divergence certificate quantifies the resulting trade-off between accuracy and memory dimension. Benchmark adaptive processes exhibit substantial dimension reduction whilst preserving the underlying behaviour with high fidelity. These results establish a route from entropic memory advantages to practical, dimension-reduced adaptive quantum agents.
\end{abstract}
\maketitle

Physical systems rarely act in isolation; rather, they are driven by interactions with and stimuli from their surroundings. A
sensor is probed, a controller is programmed, an organism encounters its environment,
or a learning agent receives observations; in each case, what happens next can
depend on the whole history of what was supplied and what was returned. Adaptive
agents are systems that realise such input--output processes by conditioning each of their output
actions on the stream of input stimuli they receive. 

A (discrete-time) stochastic process, as for example, arising from a hidden Markov model (HMM)~\cite{BoxJenkins2015,Kalman1960,Rabiner1989}, assigns a probability law to its possible output strings, $p(y_{0:L})$.  An input-output process goes beyond this, assigning a conditional family of distributions
$p(y_{0:L}\mid x_{0:L})$, one for each possible series of inputs.  An adaptive
agent is a sequential physical realisation of this behaviour: At time~$t$ it
receives input stimulus $x_t$, produces output action $y_t$, and updates an internal memory that mediates
between past and future [Fig.~\ref{fig:agent_schematic}(a)], constituting an input-output HMM, or \emph{transducer}~\cite{Ljung1999,Bengio1996IOHMM,Littman2001PSR,Barnett_2015,Elliott2021QuantumAgents,rosas2025ai,kechrimparis2025}. The capacity of an agent's memory constrains the possible input-output processes it can realise, and thus the complexity of the behaviour it can exhibit in response to environmental stimuli.

Many internal models can generate the same observable process.  The computational mechanics programme~\cite{shalizi2001,crutchfield2012between} of complexity science merges histories that lead to identical conditional futures into equivalence classes, thus identifying \emph{causal states} for passive stochastic processes~\cite{crutchfield1989} and input-output processes~\cite{Barnett_2015} alike as the minimal predictive states of the process' behaviour. In turn, these causal states can be used to prescribe the minimal memory (classical) predictive models. Quantum models of stochastic processes have been shown to be capable of reproducing the same observable behaviour with smaller memory usage, by encoding causal states as overlapping quantum states~\cite{Gu_2012, mahoney2016occam, aghamohammadi2017extreme, Binder_2018, Loomis_2019, Liu_2019, ghafari2019dimensional, Elliott_2020,wu2023implementing}. Recently, analogous quantum memory advantages were proven possible for quantum adaptive agents realising input-output processes~\cite{Thompson_2017, Elliott2021QuantumAgents, thompson2025energeticadvantagesquantumagents}. Note that in all these works -- and in the present work -- the processes considered are entirely classical, such that the inputs and outputs correspond to classical variables; quantum physics enters only through the information storage and dynamics of the internal memory system employed by a model or agent.

\begin{figure}[t]
  \centering
  \includegraphics[width=.94\columnwidth]{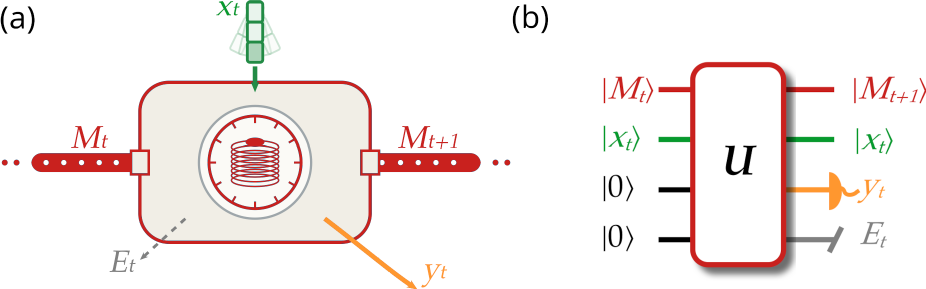}
  \caption{\textbf{Adaptive agent primitive.}
  (a) At each timestep~$t$ an adaptive agent receives an input stimulus~$x_t$, emits an output action~$y_t$, and
  updates its memory state from~$M_t$ to~$M_{t+1}$.  Inaccessible degrees of
  freedom are discarded into the environment,~$E_t$.  (b) Unitary circuit implementation of a single timestep for a quantum adaptive agent.}
  \label{fig:agent_schematic}
\end{figure}

Such quantum memory advantages have practical applications.  Feedback controllers,
adaptive sensors and metrological protocols, and learning agents all preserve
state between interventions~\cite{Lloyd2000,Hentschel2011,Briegel2012}.  As remarked above, with greater expressivity at fixed
memory size, quantum agents can thus support richer adaptive strategies, with online
execution~\cite{Elliott2021QuantumAgents,thompson2025energeticadvantagesquantumagents}. A key impediment to such applications is that the aforementioned quantum memory advantages are often framed in terms of information (i.e., entropic) efficiency; meanwhile, the associated quantum memory states may still span a high-dimensional
state space. That is, the information costs measure how much memory is occupied; dimension measures how large a Hilbert space the memory must be embedded within. This motivates the search for means to convert entropic quantum memory advantages into a practical dimension reduction, by trading off a small loss in accuracy for a significant reduction in memory dimension.

As we demonstrate herein, tensor networks~\cite{Verstraete_2006,Cirac_2021} provide the means to execute this trade-off in a controlled manner. 
Analogous approaches have been developed for the restricted setting of quantum models of passive stochastic processes based on matrix product state (MPS) representations~\cite{Yang2018MPSStoch} and their truncation~\cite{yang2024dimensionreductionquantumsampling,sundar2026quantumdimensionreductionhidden}, with their canonical bond spectrum coinciding with that of the model's steady-state memory, thus providing a natural compression target. However, as an input-output process branches out with different input stimuli, corresponding to counterfactual histories, they adopt a tree-like structure rather than a linear MPS [Fig.~\ref{fig:routed_history_network}(a)]. Such a tree can be a faithful representation of the agent with conditioning on a realised input string to select the correct dynamics. Yet, with it lacking a single transfer operator or canonical bond, it does not present a readily-accessible compression target. Moreover, a truncated tree network may not represent a physically-implementable agent, as the reduced maps are not guaranteed to constitute valid quantum instruments.

We resolve both these obstacles by adopting a route--truncate--repair construction.
\emph{Route}: Contract a stationary, passive reference input process as input to the agent, via a control rail.  The weighted branching family becomes a canonical, linear
temporal network whose bond is the steady-state of the memory under such driving [Fig.~\ref{fig:routed_history_network}(b)].  \emph{Truncate}: Retain the dominant part of the agent-memory information.  \emph{Repair}: Heal each projected
stimulus-conditioned update locally, producing a smaller, physically-valid instrument. The truncated agent is operable with arbitrary input sequences, beyond the reference process; that is, the reference determines which
histories the compression prioritises, but crucially, does not limit which stimuli the reduced agent may
accept. Collectively, these components constitute our main result: That quantum adaptive agents can be mapped to an MPS representation when routed via an input driving process, which may then be truncated to devise a compressed quantum agent with dimension-reduced memory with certified accuracy guarantees. 
We demonstrate application of our approach with two exemplar input-output processes, showing that significant dimension reduction is achievable with little distortion to the behaviour of the agent.

\begin{figure}[t]
  \centering
  \includegraphics[width=.94\columnwidth]{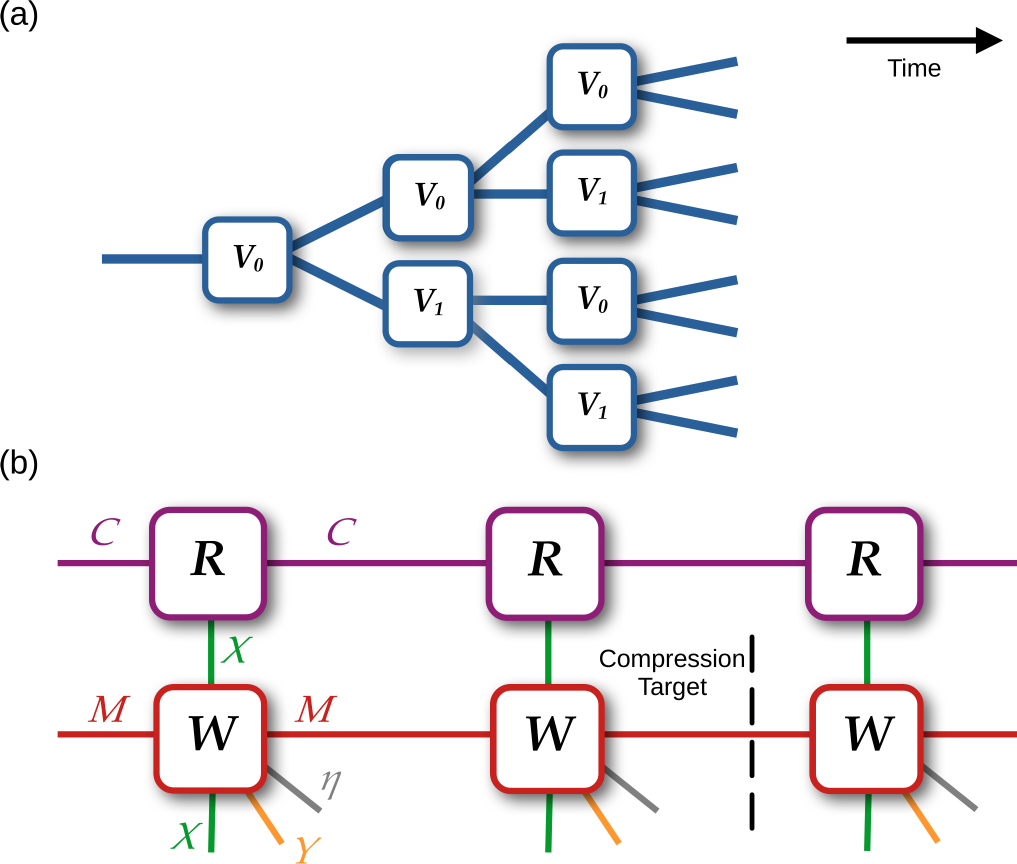}
  \caption{\textbf{Branching tree to routed MPS representation.}
  (a)~Treating the stimulus as a branch label produces $|\mathcal{X}|^L$ possible input
  histories after $L$ steps.  Each realised input string selects the correct
  conditional dynamics by following the path along the tree, but the unconditioned tree has no single transfer
  operator as a compression target.  (b)~Routing the stimulus as an additional wire supporting a stationary
  reference input process~$\cR$ contracted into the matrix product operator representation of the input-output process results in a uniform MPS representation of the joint process. The virtual bonds of the agent present the compression target for dimension-reduced quantum adaptive agents.}
  \label{fig:routed_history_network}
\end{figure}

\textit{Framework.} Formally, a discrete-time input-output process is a family of conditional distributions that map input variables $X$ to output variables $Y$: $p(\fut{Y}_t|\past{X}_t,\past{Y}_t,\fut{X}_t)$, where $\past{X}_t$ corresponds to the entire string of past inputs at time $t$ (likewise $\past{Y}_t$ for outputs), and $\fut{X}_t$ and $\fut{Y}_t$ the corresponding future strings~\cite{Barnett_2015,rosas2025ai}. We use upper case to denote random variables, and lower case a corresponding realisation; collectively, $(\past{x}_t,\past{y}_t)$ is referred to as a history. The one-step description of the process, the distributions $p(Y_t|\past{X}_t, \past{Y}_t,X_t)$ are referred to as a strategy. We assume stationarity (i.e., time invariance), such that we may drop the time indexing.

A strategy can be implemented by an adaptive agent. A discrete-time adaptive agent is specified by the tuple $(\mathcal{X},\mathcal{Y},\{\sigma_m\},f,\Lambda)$. Here, $\mathcal{X}$ corresponds to the alphabet of input stimuli the agent can recognise, and $\mathcal{Y}$ the alphabet of output actions it can perform. To implement the strategy, it is endowed with a set of memory states $\{\sigma_m\}$ stored in a memory system $\mathsf{M}$, a memory encoding function $f:(\past{\mathcal{X}},\past{\mathcal{Y}})\to\{\sigma_m\}$ that maps histories to memory states, and a policy $\Lambda:\mathcal{X}\times\{\sigma_m\}\to\mathcal{Y}\times\{\sigma_m\}$ that, based on the input, conditionally produces an output and updates the memory state. The agent is said to be a faithful implementation of an input-output process if $\left(p_\Lambda(\fut{Y}|\sigma_m,\!\fut{x})\!=\!p(\fut{Y}|\past{x}\!,\past{y}\!,\fut{x})|f(\past{x}\!,\past{y})\!=\!\sigma_m\!\right)\! \forall \past{x}\!,\past{y}\!,\fut{x}\!.$

For a classical adaptive agent~\cite{Barnett_2015}, the memory states consist of a set of orthogonal states $\mathcal{S}$, and the policy a set of input-conditioned substochastic transition matrices $\{T^{y|x}\}$. For a quantum adaptive agent~\cite{Elliott2021QuantumAgents} the memory states are quantum states on a Hilbert space, and the policy a set of quantum instruments~\cite{Kretschmann2005} with Kraus operators
$\{K^{(x)}_{y,\eta}\}$ satisfying $\sum_{y,\eta}K^{(x)\dagger}_{y,\eta}K^{(x)}_{y,\eta}=\Id_M,$ where~$\eta$ labels inaccessible environment outcomes. Equivalently, one step is
the isometry
$V_x=\sum_{y,\eta}K^{(x)}_{y,\eta}\otimes\ket{y}_Y\otimes\ket{\eta}_E$.
The routed isometry $W=\sum_{x\in \mathcal{X}}\ket{x}\!\bra{x}_X\otimes V_x$ treats
the stimulus as a wire: it reads the input~$x$, applies the corresponding update, and
forwards~$x$ unchanged. 

Stimuli are supplied by an input process $\cR$~\footnote{More generally, an input strategy, wherein the input distribution is itself conditioned on the previous outputs of the agent~\cite{Thompson_2017}.}. We here consider input processes as generated by (finite) classical HMMs with alphabet $\mathcal{X}$, orthogonal memory states $c\in\mathcal{C}$ on a memory system $\mathsf{C}$, and normalised transition probabilities $p_R(x,c'\mid c)$. Routing the process timestep by timestep through~$W$ gives joint Kraus operators on the combined memory system
$\mathsf{C}\otimes \mathsf{M}$, along with the input, output, and inaccessible environment registers:
\begin{equation}
L_{x,y,\eta,c,c'}
=\sqrt{p_R(x,c'\mid c)}\,\ket{c'}\!\bra{c}\otimes K^{(x)}_{y,\eta}.
\label{eq:routed_kraus}
\end{equation}
We introduce the shorthand $\omega=(x,y,\eta,c,c')$. Collectively, these Kraus operators define a quantum channel,
$\Phi_{\cR}(\cdot)=\sum_\omega L_\omega\cdot L_\omega^\dagger$.

\begin{theorem}[Canonical routed MPS]
\label{thm:routed_history}
Let a finite-dimensional adaptive agent be driven by a stationary,
finite-memory, input process~$\cR$.  The routed
operators in Eq.~\eqref{eq:routed_kraus} generate a uniform MPS with bond space $\mathsf{C}\otimes \mathsf{M}$, and
\begin{equation}
\sum_\omega L_\omega^\dagger L_\omega=\Id_{CM}.
\label{eq:left_canonical}
\end{equation}
Hence the MPS is in left-canonical form.  If $\Phi_{\cR}$ is primitive, its canonical
bond state in the stationary history is the unique joint fixed point
$\OmCM$, with $\Phi_{\cR}(\OmCM)=\OmCM$ and $\Tr\OmCM=1$.  The eigenvalues of
$\OmCM$ are the squared Schmidt coefficients across a virtual bond cut.  For a
memoryless input process, $|\mathcal{C}|=1$ and the bond state reduces to the driven
stationary agent memory $\rho_\star^{(\cR)}$.
\end{theorem}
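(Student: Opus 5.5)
The proof is a sequence of direct verifications, carried out in the order of the statement.

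\emph{Step 1 (uniform MPS).} We write the joint probability of a routed string $\omega_{0:L}$, marginalised over nothing, as a product of tensors. Concretely, the physical index is $\omega$ (or, after tracing the environment, the observable pair $(x,y)$ together with the control labels) and the bond space is $\mathsf{C}\otimes\mathsf{M}$. The map $\omega\mapsto L_\omega$ is a fixed, site-independent set of matrices on this bond. Iterating the routed isometry $\sum_\omega L_\omega\otimes\ket{\omega}$ from an initial bond state then yields the state $\sum_{\omega_{0:L}} L_{\omega_{L}}\cdots L_{\omega_0}\ket{\psi_0}\otimes\ket{\omega_{0:L}}$, which is a uniform MPS by definition. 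This holds because stationarity of $\cR$ and time-invariance of the agent make the same $L_\omega$ appear at every site.

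\emph{Step 2 (left-canonical form).} We compute Eq.~\eqref{eq:left_canonical} directly. Because $\ket{c'}\!\bra{c}^\dagger\ket{c'}\!\bra{c}=\ket{c}\!\bra{c}$, we have
\[
\sum_\omega L_\omega^\dagger L_\omega=\sum_{c}\ket{c}\!\bra{c}\otimes\sum_{x}\Big(\sum_{c'}p_R(x,c'\mid c)\Big)\sum_{y,\eta}K^{(x)\dagger}_{y,\eta}K^{(x)}_{y,\eta}.
\]
The instrument condition turns the inner sum into $\Id_M$ for each $x$. Normalisation $\sum_{x,c'}p_R(x,c'\mid c)=1$ then gives $\Id_C\otimes\Id_M$. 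Hence $\Phi_\cR$ is trace preserving, which is precisely the left-canonical gauge condition.

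\emph{Step 3 (fixed point and Schmidt spectrum).} Since $\Phi_\cR$ is a CPTP map on a finite-dimensional space, Brouwer's theorem, or the spectral properties of positive maps, gives at least one density-matrix fixed point. Primitivity, via quantum Perron--Frobenius, makes this fixed point unique and full rank; call it $\OmCM$. Every initial bond state then converges to it, so it is the bond state of the stationary (bi-infinite) history. For the Schmidt statement, we cut the left-canonical MPS at a bond. The left block $A_{\text{left}}=\sum L_{\omega_{-1}}\cdots L_{\omega_{-k}}\otimes\ket{\omega_{-k:-1}}$ acts as an isometry onto the physical past, by repeated use of Step 2. The right half, in the stationary regime, has reduced density operator on the bond equal to $\OmCM$. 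Purifying, with the stationary state written as $\sum_{\text{bond}}$ left isometry $\otimes$ right vectors, the Gram matrix of the right vectors is $\OmCM$ (or its transpose). Because the left map is isometric, the squared Schmidt coefficients across the cut are the eigenvalues of $\OmCM$.

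This identification is the main subtlety: it requires orienting conventions so that the fixed point of $\Phi_\cR$, not of its dual, is the relevant Gram matrix. I would handle it by writing the half-infinite past as the limit of $\Phi_\cR^{k}$ applied to any initial state, and checking that the reduced state on the bond after tracing the physical past is exactly $\Phi_\cR^k(\rho_0)\to\OmCM$. The left isometry then makes the bond-versus-past decomposition a Schmidt decomposition up to a unitary.

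\emph{Step 4 (memoryless case).} When $|\mathcal{C}|=1$, $p_R(x,c'\mid c)=p(x)$ and $\Phi_\cR(\rho)=\sum_x p(x)\sum_{y,\eta}K^{(x)}_{y,\eta}\rho K^{(x)\dagger}_{y,\eta}$. This is the agent's driven memory channel, so the fixed point is $\rho_\star^{(\cR)}$ by definition.
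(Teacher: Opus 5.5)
Your Steps 1, 2 and 4, and the existence/uniqueness part of Step 3, follow the paper's proof. That is: trace preservation from instrument completeness plus kernel normalisation, iteration of the isometry $U_{\cR}=\sum_\omega L_\omega\otimes\ket{\omega}$, and primitivity of $\Phi_{\cR}$.

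The problem is the Schmidt identification, where you have the two halves reversed. $L_\omega$ maps the earlier memory to the later one, and $\sum_\omega L_\omega^\dagger L_\omega=\Id_{CM}$. Repeated use of Step 2 therefore makes a block of consecutive steps an isometry out of its \emph{earliest} bond. Your past block $A_{\text{left}}$ is isometric from the bond at time $-k$, not from the cut bond. It cannot supply the orthonormal family needed in a decomposition $\sum_\alpha(\cdot)_\alpha\otimes(\cdot)_\alpha$ indexed by the cut bond.

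Explicitly, write the purified window state as $\sum_\alpha\ket{P_\alpha}\otimes\ket{F_\alpha}$, with $\alpha$ the cut-bond index. One finds $\braket{P_\alpha|P_\beta}=\bra{\beta}\Phi_{\cR}^{k}(\rho_0)\ket{\alpha}$, so the past vectors have Gram matrix $[\Phi_{\cR}^{k}(\rho_0)]^{T}\to(\OmCM)^{T}$. The future vectors instead satisfy $\braket{F_\alpha|F_\beta}=\delta_{\alpha\beta}$. Both of your claims, that the past map is the isometry and that the future vectors have Gram matrix $\OmCM$, are therefore false as stated. The conclusion comes out right only because swapping both roles leaves the spectrum unchanged.

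Your fallback has the correct first half. Tracing the physical past and the purifier leaves the cut bond in $\Phi_{\cR}^{k}(\rho_0)\to\OmCM$, so the squared Schmidt coefficients of the split (past $\mid$ bond) are the eigenvalues of $\OmCM$. But the closing appeal to ``the left isometry'' cannot move that cut to (past $\mid$ future). The missing step is that the future block $\sum L_{\omega_m}\cdots L_{\omega_0}\otimes\ket{\omega_{0:m}}$ is an isometry out of the cut bond, by Step 2 applied to the sites after the cut. It acts on the bond alone, so replacing the bond by the future registers leaves the Schmidt coefficients unchanged.

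With that orientation fixed, your Step 3 becomes an explicit version of the paper's ``left fixed point $\Id_{CM}$, right fixed point equals the bond state, purity gives the Schmidt spectrum''. A minor point on Step 1: the MPS carries amplitudes with the full label $\omega$ as physical index. Tracing out $\eta$ gives a mixed state rather than an MPS, so drop that parenthetical alternative.
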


\noindent\textit{Proof sketch.} Completeness of each instrument and
normalisation of~$p_R$ give Eq.~\eqref{eq:left_canonical}, making
$U_{\cR}=\sum_\omega L_\omega\otimes\ket{\omega}$ an isometry whose
repetition produces the uniform pure history. Tracing a physical site of the
doubled tensor yields $\Phi_{\cR}$ because the $\ket{\omega}$ are orthogonal.
In left-canonical form the right fixed point is the bond density operator;
primitivity and purity identify it with~$\OmCM$ and its spectrum with squared
Schmidt coefficients. See Supplemental Material~\cite{SupplementalMaterial},
Secs.~\suppref{sm:sec:reference}--\suppref{sm:sec:canonical}, for the full proof.

This theorem identifies the compression target: The bond of the routed MPS. That is, for an input-output process subject to a particular driving input process,
the carried memory is the temporal virtual bond across $\mathsf{M}$. The choice of driving input process will affect the steady-state adopted by the memory, and thus will in general impact upon the structure of the compressed model. In particular, the statistical fidelity will be preferentially preserved with respect to this input driving, and so in practical deployment the reference driving input process used for compression should represent an expected typical input process. As such, the reference process determines what behaviour is prioritised in compression, analogous to a training set. Nevertheless, the reconstructed, compressed agents we develop here are capable of responding to any input process.

\textit{Agent-local truncation.} With respect to a temporally-correlated (i.e., memoryful) reference input process the canonical bond of the routed MPS sits across
$\mathsf{C}\otimes \mathsf{M}$. However, only the latter subsystem is part of the agent itself, and so a physically-meaningful compression of the agent cannot modify the external memory of the generator of the input process $\mathcal{R}$. That is, directly compressing the complete bond of the routed MPS but truncating its spectrum is not tenable, since it would also involve compression of the input reference process. Hence truncations must be of the form $\Id_C\otimes P_M$, with the relevant spectrum being that of the driven agent's memory state marginal
$\rhoM=\Tr_C\OmCM=\sum_i\lambda_i\ket{i}\!\bra{i}$, with eigenvalues assigned in decreasing order.

\begin{corollary}[Optimal agent-local truncation]
\label{cor:discarded_weight_bound}
Among rank-$\tilde{d}_q$ projectors on the agent memory, the projector retaining maximal
stationary canonical weight is $P_M=\sum_{i\le \tilde{d}_q}\ket{i}\!\bra{i}$, up to
degeneracies, with discarded weight
\begin{equation}
\eps_M(\tilde{d}_q)=\Tr[(\Id_M-P_M)\rhoM]=\sum_{i>\tilde{d}_q}\lambda_i.
\label{eq:discarded_weight}
\end{equation}
\end{corollary}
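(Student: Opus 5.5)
The plan is to reduce the statement to Ky Fan's maximum principle for the marginal $\rhoM$. First I will make precise what "stationary canonical weight retained" by an agent-local truncation $\Id_C\otimes P_M$ means. By Theorem~\ref{thm:routed_history}, the canonical bond state of the routed MPS is $\OmCM$, and its eigenvalues are the squared Schmidt coefficients across a virtual bond cut. The weight that survives inserting the projector $\Id_C\otimes P_M$ on that bond is therefore
\begin{equation*}
w(P_M)=\Tr\!\left[(\Id_C\otimes P_M)\OmCM\right],
\end{equation*}
which is the squared norm of the projected stationary history state. Because the projector acts trivially on $\mathsf{C}$, the partial trace over $\mathsf{C}$ gives $w(P_M)=\Tr[P_M\rhoM]$ with $\rhoM=\Tr_C\OmCM$. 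The discarded weight is then $1-w(P_M)=\Tr[(\Id_M-P_M)\rhoM]$, using $\Tr\OmCM=1$. So the optimisation depends only on the marginal $\rhoM$ and not on the correlations with $\mathsf{C}$; this is why the restriction to agent-local projectors makes $\rhoM$ the relevant spectrum.

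Next I will maximise $\Tr[P\rhoM]$ over rank-$\tilde d_q$ orthogonal projectors $P$. Write $\rhoM=\sum_i\lambda_i\ket{i}\!\bra{i}$ with $\lambda_1\ge\lambda_2\ge\cdots$. Then $\Tr[P\rhoM]=\sum_i\lambda_i p_i$, where $p_i=\bra{i}P\ket{i}$ satisfies $0\le p_i\le 1$ and $\sum_i p_i=\Tr P=\tilde d_q$. A linear functional with decreasing coefficients, maximised over this capped simplex, attains its maximum at $p_i=1$ for $i\le\tilde d_q$ and $p_i=0$ otherwise. This gives $\Tr[P\rhoM]\le\sum_{i\le\tilde d_q}\lambda_i$, with equality for $P_M=\sum_{i\le\tilde d_q}\ket{i}\!\bra{i}$. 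Alternatively, I could simply cite the Ky Fan principle (equivalently, Schur--Horn majorisation of the diagonal by the spectrum).

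Substituting back gives $\eps_M(\tilde d_q)=1-\sum_{i\le\tilde d_q}\lambda_i=\sum_{i>\tilde d_q}\lambda_i$, which is Eq.~\eqref{eq:discarded_weight}.

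For the "up to degeneracies" clause, I will note that the equality conditions force $P$ to contain the eigenspace of every $\lambda_i>\lambda_{\tilde d_q}$ and to lie within the span of the eigenspaces with $\lambda_i\ge\lambda_{\tilde d_q}$. The optimiser is therefore unique exactly when $\lambda_{\tilde d_q}>\lambda_{\tilde d_q+1}$; otherwise any choice of subspace within the boundary eigenspace is equally optimal.

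I expect the only real subtlety to be the first step: justifying that "stationary canonical weight" under an agent-local truncation is exactly $\Tr[(\Id_C\otimes P_M)\OmCM]$. This rests on Theorem~\ref{thm:routed_history} identifying $\OmCM$ as the bond density operator in left-canonical form. The squared norm of the history state after projecting one bond is then the trace of the projector against that bond density. The rest is standard linear algebra.
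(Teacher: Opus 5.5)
Your proposal is correct and follows the paper's proof essentially step for step: first the reduction $\Tr[(\Id_C\otimes P_M)\OmCM]=\Tr[P_M\rhoM]$ by partial trace, then the capped-simplex (Ky Fan) argument on the diagonal entries $\langle i|P_M|i\rangle\in[0,1]$ with total budget $\tilde d_q$. Your explicit characterisation of the optimal projectors when $\lambda_{\tilde d_q}$ is degenerate is a correct sharpening of the paper's ``up to degeneracies'' clause.
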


Indeed, $\Tr[(\Id_C\otimes P_M)\OmCM]=\Tr(P_M\rhoM)$. Since $P_M$ has rank
$\tilde{d}_q$, this trace is maximised by placing its support on the $\tilde{d}_q$ largest
eigenvalues of~$\rhoM$ (Ky Fan's maximum principle~\cite{fan1951maximum}), giving
Eq.~\eqref{eq:discarded_weight}. This is the temporal analogue of density-matrix
truncation in DMRG~\cite{Schollwoeck2011,Verstraete_2006}: The reduced dimension allowance is concentrated
in the span most occupied by the steady-state of the driven memory. Note that this optimality is with respect to the
retained stationary weight in the memory; behavioural accuracy in reproducing the target strategy is certified separately.

\begin{figure*}[t]
  \centering
  \includegraphics[width=.92\textwidth]{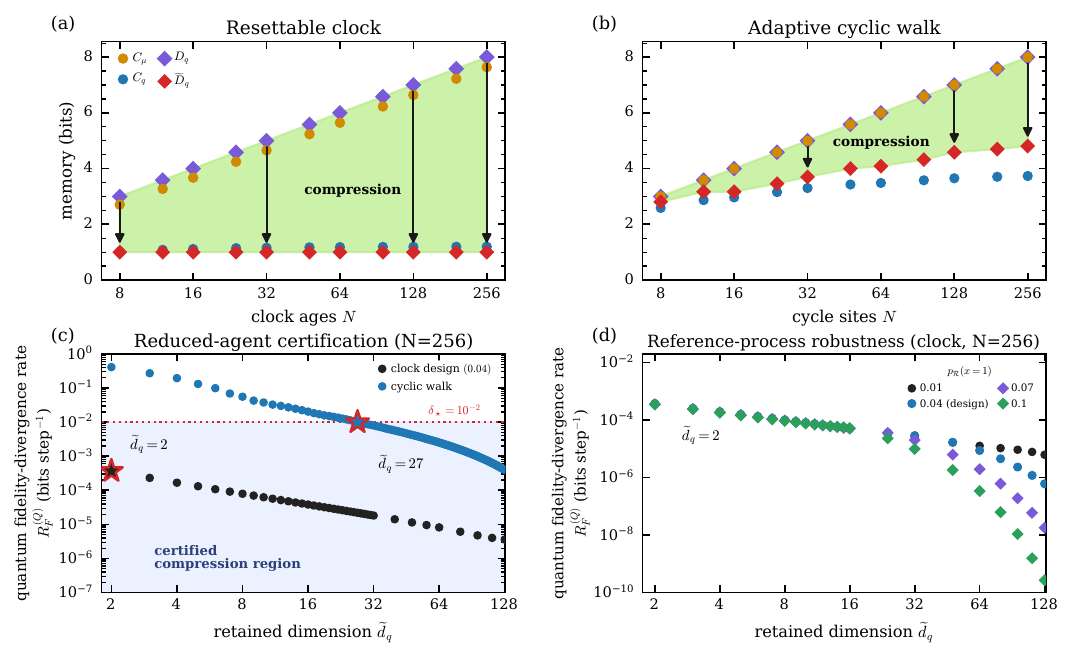}
\caption{\textbf{Benchmarking quantum adaptive agent compression.}
\textbf{(a,b)}~Memory resources for the benchmark processes:
Entropic memory costs (classical $C_\mu$, quantum $C_q$) of agents, and log memory dimension of the quantum agents (uncompressed $D_q=\log_2N$, compressed $\tilde{D}_q=\log_2\tilde{d}_q$). Compressions taken at smallest model reaching target fidelity $\RFQ=\delta_\star=10^{-2}$ bits per timestep. 
\textbf{(c)}~Compressed quantum agent fidelity divergence rates with varying compression dimension $\tilde{d}_q$, under driving from the reference input process. 
\textbf{(d)}~Compressed quantum agent fidelity divergence rates when implementing the resettable renewal clock, under differing input processes from the reference process.}
  \label{fig:compression_results}
\end{figure*}

\textit{Repair to a valid compressed agent.} Such projection alone does not result in a physically-valid quantum agent: Information can leak from the retained subspace to the discarded subspace, such that the projected Kraus operators will in general not be trace preserving. To reconstitute a valid compressed quantum agent, we must repair the Kraus operators such that they represent a set of valid quantum instruments. To do so, for each stimulus we define and construct:
\begin{equation}
\begin{aligned}
\bar K^{(x)}_{y,\eta}&:=P_MK^{(x)}_{y,\eta}P_M,\\
G_x&:=\sum_{y,\eta}\bar K^{(x)\dagger}_{y,\eta}
    \bar K^{(x)}_{y,\eta},\\
\widetilde K^{(x)}_{y,\eta}&:=\bar K^{(x)}_{y,\eta}G_x^{-1/2}.
\end{aligned}
\label{eq:polar_repair}
\end{equation}
The inverse is taken on the support of~$G_x$~\footnote{When $G_x > 0$, this polar factor represents the (unique) nearest isometry to the unrepaired projected update in Frobenius norm, making the choice of decomposition principled.}; Supplemental Material, Sec.~\suppref{sm:sec:repair}, provides the resolution for the case of rank-deficiency.

\begin{theorem}[Repair and asymptotic accuracy certificate]
\label{thm:repair_certificate}
\textit{(i) Validity.} For every stimulus~$x$, Eq.~\eqref{eq:polar_repair},
with the kernel completion of Supplemental Material, Sec.~\suppref{sm:sec:repair}, when needed, satisfies
$\sum_{y,\eta}\widetilde K^{(x)\dagger}_{y,\eta}
\widetilde K^{(x)}_{y,\eta}=P_M$ and therefore defines a physically-valid dimension
$\tilde{d}_q=\rank(P_M)$ quantum adaptive agent capable of responding to arbitrary stimulus trajectories.

\textit{(ii) Certification.}  Fix a common Kraus-label alphabet for the
original and reduced agents.  Under the mixing and nonzero-overlap conditions
of Supplemental Material, Sec.~\suppref{sm:sec:certificates}, their routed purified MPS representations $|\Psi_{\cR,L}\rangle$ and $|\widetilde\Psi_{\cR,L}\rangle$ obey
\begin{equation}
\begin{aligned}
\RFQ&:=-\lim_{L\to\infty}\frac{1}{2L}\log_2
  |\langle\widetilde\Psi_{\cR,L}|\Psi_{\cR,L}\rangle|
  =-\tfrac12\log_2\mu,\\
\mu&=\spr\!\left(Z\mapsto
\sum_{\omega}\widetilde L_\omega Z L_\omega^{\dagger}\right).
\end{aligned}
\label{eq:exact_certificate}
\end{equation}
Here, $\RFQ$ is the quantum fidelity divergence rate of the MPSs~\cite{Vanhecke2021Truncation,yang2024dimensionreductionquantumsampling}, and $\mathrm{spr}(\cdot)$ is the spectral radius. The corresponding statistical fidelity divergence rate~\cite{Yang_2020} of the realised classical
input--output processes, under driving by reference input process~$\cR$, is no greater than~$\RFQ$.
\end{theorem}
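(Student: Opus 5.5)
For part (i) I would compute directly. Write $\bar K^{(x)}_{y,\eta}=P_MK^{(x)}_{y,\eta}P_M$, so that $G_x=\sum_{y,\eta}\bar K^{(x)\dagger}_{y,\eta}\bar K^{(x)}_{y,\eta}$ is positive semidefinite and supported inside $P_M$. First consider the case where $G_x$ is invertible on $\operatorname{ran}P_M$, and let $G_x^{-1/2}$ denote the inverse square root on that subspace. Then
\[
\sum_{y,\eta}\widetilde K^{(x)\dagger}_{y,\eta}\widetilde K^{(x)}_{y,\eta}=G_x^{-1/2}G_xG_x^{-1/2}=P_M .
\]
If $G_x$ is rank deficient, let $Q_x$ be the projector onto $\ker G_x\cap\operatorname{ran}P_M$. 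On this kernel the projected operators vanish: $\bar K^{(x)}_{y,\eta}Q_x=0$ for all $y,\eta$. Take the kernel completion to be any extra Kraus operator $J_x=\ket{\phi}\!\bra{\phi'}$-type isometry from $Q_x$ into $\operatorname{ran}P_M$, carrying a fixed reserved label. Adding $J_x^\dagger J_x=Q_x$ to $G_x^{-1/2}G_xG_x^{-1/2}=P_M-Q_x$ restores $P_M$. Restricted to $\operatorname{ran}P_M\cong\mathbb C^{\tilde d_q}$, each family $\{\widetilde K^{(x)}_{y,\eta}\}$ is therefore a complete instrument. Completeness holds for every $x$ separately and no reference to $\cR$ enters, so the routed isometry $\widetilde W=\sum_x\ket{x}\!\bra{x}\otimes\widetilde V_x$ is well defined and accepts arbitrary stimulus strings.

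For part (ii) I would start from Theorem~\ref{thm:routed_history}, applied to both agents with the same reference $\cR$. Both $\{L_\omega\}$ and $\{\widetilde L_\omega=\sqrt{p_R}\ket{c'}\!\bra{c}\otimes\widetilde K^{(x)}_{y,\eta}\}$ give left-canonical uniform MPSs over a common label alphabet; the original $K$'s are padded with zero for the reserved label. Fix boundary vectors $\ket{l},\ket{\tilde l}$ and $\ket{r},\ket{\tilde r}$. The overlap is
\[
\langle\widetilde\Psi_{\cR,L}|\Psi_{\cR,L}\rangle=\Tr\!\big[B\,\mathcal E^{L}(A)\big],\qquad \mathcal E(Z)=\sum_\omega \widetilde L_\omega Z L_\omega^\dagger ,
\]
where $A,B$ are rank-one boundary operators built from these vectors. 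The mixed transfer map $\mathcal E$ has a spectral (Jordan) decomposition. Under the mixing assumption the peripheral spectrum of $\mathcal E$ consists of a single eigenvalue of modulus $\mu=\spr\mathcal E$, and it is nondegenerate in the sense of having trivial Jordan block and no other eigenvalues on $|z|=\mu$. Then $\mathcal E^L(A)=\mu^L\big(\Tr[\Lambda A]\,R+o(1)\big)$, where $R$ and $\Lambda$ are the right and left leading eigenvectors. The nonzero-overlap condition is exactly $\Tr[\Lambda A]\Tr[BR]\neq0$. Taking $-\frac{1}{2L}\log_2|\cdot|$ and letting $L\to\infty$ gives $\RFQ=-\frac12\log_2\mu$. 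I would also note that $\mu\le1$, by Cauchy--Schwarz on the Kraus sum together with the trace-nonincreasing property of both channels, so that $\RFQ\ge0$.

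For the classical statement I would first identify the realised input--output distributions under driving by $\cR$ as the probabilities obtained by measuring $\ket{x,y}$ on the physical legs and discarding $\eta,c,c'$. The measurement in the label basis is a CPTP map on the MPS purifications, so data processing for fidelity applies: the Bhattacharyya overlap of the marginals is at least $|\langle\widetilde\Psi_{\cR,L}|\Psi_{\cR,L}\rangle|$. Because of the $\frac12$ convention this is the quantity whose rate defines the statistical fidelity divergence. Uhlmann's theorem fixes the direction of the inequality, since the purified overlap lower-bounds the fidelity of any reductions. Taking $-\frac{1}{2L}\log_2$ and the limit then yields statistical rate $\le\RFQ$.

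The main obstacle is the spectral step in (ii). Primitivity of $\Phi_\cR$ alone does not control the non-Hermitian mixed map $\mathcal E$, whose peripheral spectrum could be degenerate or complex-phased. I would try first to state the mixing hypothesis directly as a simple dominant eigenvalue of $\mathcal E$ and to justify it perturbatively: when $\eps_M(\tilde d_q)$ is small, $\widetilde K\approx K$ on the dominant support, so $\mathcal E$ is close to $\Phi_\cR$ restricted appropriately. A secondary subtlety is phases, since $\langle\cdot|\cdot\rangle$ may be complex; taking the modulus handles the case where the leading eigenvalue has modulus $\mu$ but a nontrivial argument.
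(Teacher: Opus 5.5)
Your proposal is correct and follows essentially the same route as the paper: part (i) uses the polar identity $G_x^{-1/2}G_xG_x^{-1/2}$ plus a kernel completion whose squared sum is $P_M-\Pi_x$, and part (ii) writes the window overlap as a boundary contraction of powers of the rectangular mixed transfer, reads the rate off its dominant eigenvalue, and then uses monotonicity of the root fidelity under the readout map. Your explicit hypothesis (a simple, isolated peripheral eigenvalue with nonvanishing boundary overlaps) is a sharpened form of the paper's normal/primitive and nonzero-overlap assumptions, which the theorem takes as given, so the perturbative justification you sketch is not needed.
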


\noindent\textit{Proof sketch.}
Part~(i) is the polar identity
$G_x^{-1/2}G_xG_x^{-1/2}=P_M$ on the retained support, followed by the kernel
completion when necessary.  For part~(ii), every finite-history overlap is a
boundary contraction of powers of the rectangular mixed transfer operator in
Eq.~\eqref{eq:exact_certificate}; under the stated conditions its dominant
eigenvalue fixes the asymptotic rate, while boundary terms are subextensive.
Tracing out inaccessible environment registers and reading out the classical symbols cannot
decrease fidelity, which gives the data-processing bound for the statistical fidelity divergence rate. Details are in
Supplemental Material, Secs.~\suppref{sm:sec:certificates}--\suppref{sm:sec:repair}.

\textit{Benchmarks.} We apply our compression and repair pipeline to quantum adaptive agents tasked with implementing two exemplar input-output processes: A resettable renewal clock~\cite{Elliott2021QuantumAgents}, and an input-dependent cyclic random walk~\cite{Garner_2017}. Details of the associated processes are given in the End Matter. For each benchmark, we compare the stored information (i.e., entropic memory cost) of both the minimal classical agents ($C_\mu$) and uncompressed quantum agents ($C_q$), alongside the (log-)memory dimension of the uncompressed and compressed quantum agents, $D_q=\log_2\dim(\mathrm{span}(\{\sigma_m\}))$ and $\tilde{D}_q=\log_2\tilde{d}_q$, respectively [Fig.~\ref{fig:compression_results}(a,b)]. Uncompressed quantum agents are constructed using the procedure of Ref.~\cite{Elliott2021QuantumAgents} (see End Matter); for both examples these bear no dimensional memory advantage over minimal classical models prior to compression. For the cyclic walk we display the smallest agent memory dimension below $\RFQ=\delta_\star=10^{-2}$ bits per timestep. For the clock we impose the explicitly stated nontrivial coherent-memory restriction $\tilde d_q\ge2$; the valid rank-one memoryless model is therefore not the reported compression.

Figure~\ref{fig:compression_results}(c) fixes $N=256$. For clarity, its clock curve uses a fixed reset probability $p_{\cR}(x=1)=0.04$, while the cyclic-walk reference selects either stimulus uniformly. Within the stated clock restriction, a single qubit ($\tilde d_q=2$) lies below threshold, a $128\times$ reduction from $d_q=256$. The cyclic walk crosses the threshold at $\tilde d_q=27$, a $256/27\simeq9.48\times$ reduction. Figure~\ref{fig:compression_results}(d) tests the clock at the same fixed size using $p_{\cR}(x=1)\in\{0.01,0.04,0.07,0.10\}$; the weak variation demonstrates that the compression does not require fine-tuning of the reference input process.

\textit{Outlook.}  By introducing a route–truncate–repair procedure we have created a pipeline by which driven quantum adaptive agents can be represented by an MPS, and subsequently compressed whilst retaining high accuracy. This provides a controlled route from quantum advantages in information-theoretic memory efficiency for implementing complex strategies to practical reductions in memory dimension, bringing quantum adaptive agents closer to implementation on near-term hardware. Crucially, the resulting compressed agents remain physically valid -- not merely smaller representations -- and can respond to arbitrary input sequences, not just the driving reference.

Beyond the spectral truncation considered here, the routed MPS representation opens the door to more sophisticated tensor network optimisation methods, including variational compression at fixed dimension~\cite{Vanhecke2021Truncation,ZaunerStauber2018VUMPS}, which has proven effective for compressing quantum models of stochastic processes~\cite{martathesis,banchi2024accuracy,yang2024dimensionreductionquantumsampling}. More broadly, the same routing construction extends naturally to coherent input processes, yielding stationary quantum combs/process tensors~\cite{Chiribella_2009,Pollock_2018,MilzModi2021} and suggesting deeper connections between adaptive quantum agents, quantum stochastic processes, and temporal quantum information.

\begin{acknowledgments}
\textit{Acknowledgments.} This work was funded by the University of Manchester Dame Kathleen Ollerenshaw Fellowship.
\end{acknowledgments}

\textit{Data Availability.}
The source code, figure-source data, validation outputs, and scripts required
to reproduce the numerical results are openly available in the versioned
Zenodo archive of Ref.~\cite{SundarElliott2026Software}.

\section*{End Matter}

\subsection{Classical agent construction}

A classical adaptive agent may be represented as an input--output transducer~\cite{rosas2025ai} with stimulus
alphabet \(\mathcal{X}\), action alphabet \(\mathcal{Y}\), internal state set \(\mathcal{S}\), and transition probabilities
\begin{equation}
T^{y|x}_{k j}
:=
p(y,\; k \mid x, j),
\qquad
\sum_{y,k}T^{y|x}_{k j}=1\forall j\in\mathcal{S}, x\in\mathcal{X}.
\label{em:eq:classical_transducer}
\end{equation}
Here \(j\) labels the incoming memory state and \(k\) the outgoing memory state.

A faithful classical (orthogonal-memory) embedding of a transducer employs one Kraus operator for each allowed transition,
\begin{equation}
K^{y,k,j|x}
=
\sqrt{T^{y|x}_{k j}}\;\ket{k}\!\bra{j}.
\label{em:eq:orthogonal_embedding}
\end{equation}
The output action is \(y\); hidden state labels \(j,k\) are discarded into the inaccessible environment. This is a
special case of the general quantum-instrument notation with
\[
\eta=(k,j),
\qquad
K^{(x)}_{y,\eta}=K^{y,k,j|x}.
\]
A transducer is said to be unifilar when the transitions between the memory states are deterministic given the output, i.e., \(k\) is fixed by \(j,x,y\). This parallels the notion of belief updating in reinforcement learning~\cite{cassandra1994acting}. Across all unifilar classical agents/transducers for stationary input-output processes, the $\varepsilon$-transducer of computational mechanics~\cite{Barnett_2015} is the provably memory-minimal classical model. It assigns each of the possible input-output pairs $(\past{x},\past{y})$ to a \emph{causal state}, according to an equivalence relation based on future output probabilities:
\begin{equation}
(\past{x}\!,\past{y})\!\sim\!(\past{x}'\!,\past{y}')\!\Leftrightarrow\! p(\fut{Y}|\past{x}\!,\past{y}\!,\fut{x})\!=\!p(\fut{Y}|\past{x}'\!,\past{y}'\!,\fut{x})\forall\fut{x}\!.
\end{equation}
The causal states thus represent equivalence classes. The causal states can be mapped, in one-to-one correspondence, to memory states of an agent (these memory states are also referred to as causal states). Endowing these memory states with conditional edge-emitting transitions $T_{s's}^{y|x}=p(y,s'|x,s)$ defined according to the conditional distributions associated with the constituent paths then prescribes the $\varepsilon$-transducer. See Refs.~\cite{Barnett_2015, Elliott2021QuantumAgents} for further details.

\subsection{Quantum agent construction}

\begin{figure*}[t]
  \centering
  \includegraphics[width=.88\textwidth]{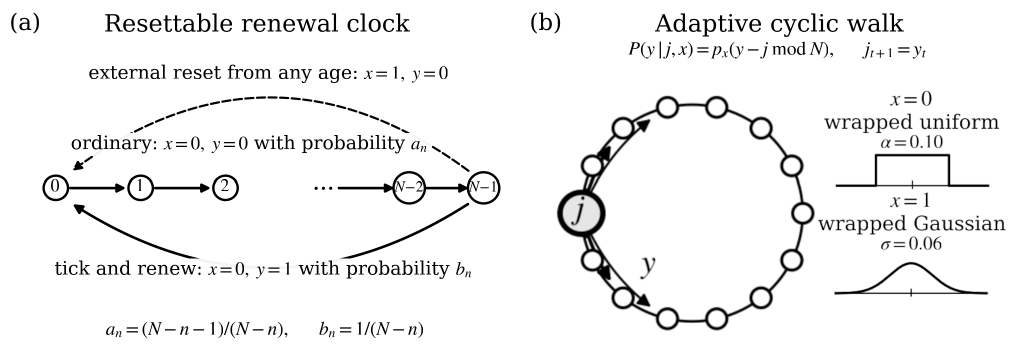}
  \caption{\textbf{Transition structures of benchmark processes.}
  (a)~Under ordinary evolution ($x=0$), a clock of age~$n$ either `survives' (emits $y=0$) and
  advances to $n+1$, or ticks ($y=1$) and renews to age zero.  The external stimulus~$x=1$ invokes a reset to age zero from any age, without a tick ($y=0$). 
  (b)~In the  cyclic walk process, a `walker' occupies a discrete position~$j$ on a ring.  The stimulus selects a  translationally-invariant shift drawn from distribution $p_x(r)$ and announces its new position $y=j+r\mathrm{mod}N$: The shift distribution is a uniform top hat for input  $x=0$, and a Gaussian for $x=1$.}
  \label{fig:benchmark_models}
\end{figure*}

It has been shown that the memory-minimal (unifilar) quantum agent realising a stationary input-output process can be realised as an input-conditioned isometry acting on the memory states $\{\ket{\sigma_s}\}$, with the memory states pure and in one-to-one correspondence with the causal states of the process. This can equivalently be stated as a unitary [Fig.~\ref{fig:agent_schematic}(b)], defined implicitly through its action:
\begin{equation}
U\ket{\sigma_s}\ket{x}\ket{0}\ket{0}=\sum_{ys'}\sqrt{T_{s's}^{y|x}}\ket{\sigma_{s'}}\ket{x}\ket{y}\ket{\eta(x,y,s)}.
\end{equation}
Here, the subspaces, in order, correspond to the memory register, input register, output register, and environment register. Observe that the unitary leaves the input register unchanged; it treats the input as a conditioning. The specific form of the final environment states $\{\ket{\eta(x,y,s)}\}$ are the only tunable parameters. From this equation, one can deduce the overlaps of the quantum memory states, assign them to a basis following a reverse Gram-Schmidt procedure, and then use this to define the specific form of $U$. See Ref.~\cite{Elliott2021QuantumAgents} for further details.

A specific assignment of the environment states, valid for starting from any $\varepsilon$-transducer, is as follows~\cite{Elliott2021QuantumAgents}: Consider an effective decomposition of the quantum memory states into a product of memory states specialised for each input, such that $\ket{\sigma_s}=\bigotimes_{x}\ket{\sigma_s^x}$; then, assign environment states such that $\ket{\eta(x,y,s)}=\bigotimes_{x'\neq x}\ket{\sigma_s^{x'}}$. This effective representation of the memory states can then be mapped to a Hilbert space of at most $|\mathcal{S}|$ dimensions following the reverse Gram-Schmidt procedure. We use this construction for the explicit examples in this work.

\subsection*{Benchmark processes}

Here we briefly outline the details of the input-output processes used as benchmarks for our compression scheme. Their input-dependent HMM representations are depicted in Fig.~\ref{fig:benchmark_models}. See Supplemental Material, Sec.~\suppref{sm:sec:numerics}, for further details.

\paragraph*{Resettable renewal clock.}
Renewal processes with a binary output alphabet $\{0,1\}$ consist of a reset state, and a linear array of age states corresponding to the time (in the discrete-time case, number of timesteps) since the last 1 was emitted. Thus, in discrete-time the memory states can be labelled $n=0,\ldots,N-1$, corresponding to the age, with $n=0$ the reset state.  The survival probability $\Phi(n)$ represents the probability that any given tick will take at least $n$ timesteps to occur~\cite{marzen2015informational}. For a uniform waiting-time distribution~\cite{elliott2018superior}, such that the tick is equally-likely to occur at any age in the interval $[0,N-1]$, the survival probability is $\Phi(n)=N-n$. This leads to a HMM transition structure where $a_n:=P(0,n+1|n)=\Phi(n+1)/\Phi(n)$ and $b_n:=P(1,0|n)=1-a_n$. We endow this process with an input-dependence through a reset mechanism, wherein upon input $x=0$ the process undergoes evolution as above; upon input $x=1$ it is forced to immediately transition to the reset state $n=0$ while emitting output $y=0$~\cite{Elliott2021QuantumAgents}.

Following Ref.~\cite{Elliott2021QuantumAgents} using the construction above, the corresponding quantum agent has memory states $\{\ket{\sigma_n}\}$ with overlaps $c_{nm}:=\braket{\sigma_n|\sigma_m}=(N-\max(n,m))/\sqrt{(N-n)(N-m)}$. The resource scaling in Fig.~\ref{fig:compression_results}(a) uses the eleven system sizes shown there, from $N=8$ through $N=256$. Its driving reference is a memoryless process that supplies a reset symbol $x=1$ with probability $r_N=1-e^{-1/(2N)}$ per timestep, leading to steady-state distributions of the age states $\pi_n^{(\cR)}\propto(1-r_N)^n\Phi(n)$ and quantum agent memory $\rhoM=\sum_n\pi_n^{(\cR)}\ket{\sigma_n}\!\bra{\sigma_n}$. By contrast, Figs.~\ref{fig:compression_results}(c,d) fix $N=256$: panel~(c) uses $p_{\cR}(x=1)=0.04$, and panel~(d) uses the four reset probabilities listed above.

\paragraph*{Input-dependent cyclic walk.}
The system consists of a position $j$ on a unit circle, which is discretised into $N$ contiguous bins of uniform size. At each timestep the system undergoes a probabilistic shift given by a (bin-integrated) distribution that is site-invariant, such that if the system is in site $j$, and the shift is of size $k$ sites to the right, the system transitions to site $j+k\mathrm{mod}N$, while emitting an output corresponding to the new site~\cite{Garner_2017}. We consider two shift distributions: A uniform, top-hat distribution of half-width~$0.10$ (active on input $x=0$); and a zero-mean Gaussian of width $\sigma=0.06$ (on input $x=1$).  

For each stimulus define $\ket{\sigma_j^{(x)}}=\sum_y\sqrt{p_x(y-j)}\ket{y}$ and
$\ket{\sigma_j}=\ket{\phi_j^{(0)}}\otimes\ket{\phi_j^{(1)}}$ for the quantum memory states, following the effective representation above. Thus, the quantum memory state overlaps are $c_{jk}:=B_0(k-j)B_1(k-j)$, where $B_x(d)=\sum_r\sqrt{p_x(r)p_x(r-d)}$. Figure~\ref{fig:compression_results}(b) uses the same eleven system sizes, while panel~(c) fixes $N=256$. In both cases the reference input distribution is memoryless and uniform across the two stimuli.

\bibliographystyle{apsrev4-2}
\bibliography{references}

\ifcombinedsubmission
  \clearpage
  \onecolumngrid
  \def\supplementincluded{1}
\ifdefined\supplementincluded
\else
  \documentclass[
    aps,
    prl,
    10pt,
    superscriptaddress,
    nofootinbib,
    longbibliography,
    floatfix
  ]{revtex4-2}
  \usepackage{xr}

  \makeatletter
  \begingroup
    \let\bibcite\@gobbletwo
    \externaldocument[main-]{main}
  \endgroup
  \begingroup
    \let\newlabel\@gobbletwo
    \def\bibcite#1#2{\expandafter\gdef\csname b@#1\endcsname{#2}}
    \externaldocument{main}
  \endgroup
  \makeatother

  \begin{document}
  \title{Supplemental Material:\\ Dimension Reduction for Quantum Adaptive Agents}
  
  \date{}
  \maketitle
\fi

\ifdefined\supplementincluded
  \newcommand{\mainref}[1]{\ref{#1}}
  \begin{center}
    {\large\bfseries Supplemental Material\\[0.35em]
    Dimension Reduction for Quantum Adaptive Agents}
  \end{center}
\else
  \newcommand{\mainref}[1]{\ref*{main-#1}}
\fi

\setcounter{section}{0}
\setcounter{subsection}{0}
\setcounter{equation}{0}
\setcounter{figure}{0}
\setcounter{table}{0}
\setcounter{theorem}{0}
\setcounter{lemma}{0}
\setcounter{proposition}{0}
\setcounter{corollary}{0}
\setcounter{definition}{0}
\setcounter{assumption}{0}
\setcounter{remark}{0}
\setcounter{observation}{0}
\setcounter{page}{1}
\setcounter{secnumdepth}{2}

\renewcommand{\thesection}{SM.\Alph{section}}
\renewcommand{\thesubsection}{\thesection.\arabic{subsection}}
\renewcommand{\theequation}{S\arabic{equation}}
\renewcommand{\thefigure}{S\arabic{figure}}
\renewcommand{\thetable}{S\arabic{table}}
\renewcommand{\thepage}{S\arabic{page}}
\renewcommand{\thetheorem}{S\arabic{theorem}}
\renewcommand{\thelemma}{S\arabic{lemma}}
\renewcommand{\theproposition}{S\arabic{proposition}}
\renewcommand{\thecorollary}{S\arabic{corollary}}
\renewcommand{\thedefinition}{S\arabic{definition}}
\renewcommand{\theassumption}{S\arabic{assumption}}
\renewcommand{\theremark}{S\arabic{remark}}
\renewcommand{\theobservation}{S\arabic{observation}}
\makeatletter
\renewcommand{\p@subsection}{}
\makeatother

\providecommand{\theHsection}{}
\providecommand{\theHsubsection}{}
\providecommand{\theHequation}{}
\providecommand{\theHfigure}{}
\providecommand{\theHtable}{}
\providecommand{\theHtheorem}{}
\providecommand{\theHlemma}{}
\providecommand{\theHproposition}{}
\providecommand{\theHcorollary}{}
\providecommand{\theHdefinition}{}
\providecommand{\theHassumption}{}
\providecommand{\theHremark}{}
\providecommand{\theHobservation}{}
\renewcommand{\theHsection}{SM.\Alph{section}}
\renewcommand{\theHsubsection}{\theHsection.\arabic{subsection}}
\renewcommand{\theHequation}{SM.\arabic{equation}}
\renewcommand{\theHfigure}{SM.\arabic{figure}}
\renewcommand{\theHtable}{SM.\arabic{table}}
\renewcommand{\theHtheorem}{SM.\arabic{theorem}}
\renewcommand{\theHlemma}{SM.\arabic{lemma}}
\renewcommand{\theHproposition}{SM.\arabic{proposition}}
\renewcommand{\theHcorollary}{SM.\arabic{corollary}}
\renewcommand{\theHdefinition}{SM.\arabic{definition}}
\renewcommand{\theHassumption}{SM.\arabic{assumption}}
\renewcommand{\theHremark}{SM.\arabic{remark}}
\renewcommand{\theHobservation}{SM.\arabic{observation}}

\section{Notation, formalisms, and basic wiring}
\label{sm:sec:notation}

This Supplement employs a particular object throughout: the routed one-step isometry. Different
sections wire this object in different ways. We first fix the viewpoints and notation so
that the later proofs can refer to them without reintroducing the construction.

\subsection{Three viewpoints}

\begin{description}
\item[(i) Routed control-site description]
The adaptive update is written with an explicit stimulus register $\mathsf{X}$. This is the
one-step controlled process before any reference input process is chosen.

\item[(ii) History-iMPS description under stationary reference driving]
After choosing a stationary reference input process \(\cR\), the reference is
routed through the control rail and contracted into the agent, giving a translation-invariant
history tensor whose bond is the joint reference--agent memory \(\cH_C\otimes\cH_M\) (Sec.~\ref{sm:sec:reference}).
Memoryless driving (\(|C|=1\)) is the special case in which the bond factorises to the agent
memory $\mathsf{M}$ alone; this is the regime used in all benchmarks.

\item[(iii) Coherent comb/process-tensor description]
If the stimulus rail is kept as an input-output wire rather than contracted with a diagonal
reference input process, iterating the same routed update defines a stationary comb. This is the natural
object for coherent or temporally-correlated interventions.
\end{description}

Sec.~\ref{sm:sec:trees} introduces a branching-tree representation for classical counterfactual stimulus
histories. That tree is a bookkeeping device for trajectory conditioning; it agrees with the
routed description after conditioning it on a classical stimulus word.

\subsection{Registers and alphabets}

We work with finite stimulus and action alphabets $\mathcal{X}$ and $\mathcal{Y}$, with computational bases
\(\{\ket{x}\}_{x\in \mathcal{X}}\) and \(\{\ket{y}\}_{y\in \mathcal{Y}}\). The agent carries a finite-dimensional
quantum memory system $\mathsf{M}$ with Hilbert space \(\cH_M\).

At one time step we use the following registers:
\begin{itemize}
  \item The stimulus register \(\mathsf{X}\).
  \item The action register \(\mathsf{Y}\).
  \item The inaccessible environment register \(\mathsf{E}\).
\end{itemize}

\subsection{One-step controlled instrument and routed isometry for quantum agents}

For each stimulus symbol \(x\in \mathcal{X}\), let
\[
\{K^{(x)}_{y,\eta}:\cH_M\to\cH_M\}_{y,\eta}
\]
be Kraus operators satisfying
\begin{equation}
\sum_{y,\eta} K^{(x)\dagger}_{y,\eta} K^{(x)}_{y,\eta} = \Id_M.
\label{sm:eq:kraus_completeness}
\end{equation}
A Stinespring isometry for the input-conditioned instrument is
\begin{equation}
V_x
=
\sum_{y,\eta}
K^{(x)}_{y,\eta}\otimes \ket{y}_Y \otimes \ket{\eta}_E
:\cH_M\to \cH_M\otimes\cH_Y\otimes\cH_E.
\label{sm:eq:stinespring_isometry}
\end{equation}
We package the stimulus dependence into the routed isometry
\begin{equation}
W
=
\sum_{x\in \mathcal{X}}
\ket{x}\!\bra{x}_X \otimes V_x
:\cH_M\otimes\cH_X
\to
\cH_M\otimes\cH_Y\otimes\cH_X\otimes\cH_E.
\label{sm:eq:router_def}
\end{equation}
The map \(W\) is block-diagonal in the computational basis of \(X\) and forwards the
stimulus register unchanged.

\subsection{Stationary memoryless driving and coherent labels}

Let \(\cR\) be a stationary memoryless reference input process with distribution
\(p_{\cR}(x)\). The corresponding input stimulus at each timestep for this process corresponds to the diagonal classical state
\begin{equation}
\omega_{\cR}^X
=
\sum_x p_{\cR}(x)\ket{x}\bra{x}.
\label{sm:eq:omega_diag}
\end{equation}
In the routed description, one may contract the coherent form of this stimulus input, $\sum_x \sqrt{p_{\cR}(x)}\ket{x}$, into \(W\) to obtain the one-step history isometry
\begin{equation}
U_{\cR}
=
\sum_{x,y,\eta}
\sqrt{p_{\cR}(x)}\,
K^{(x)}_{y,\eta}\otimes \ket{x}_X\otimes \ket{y}_Y\otimes \ket{\eta}_E
:
\cH_M \to \cH_M\otimes\cH_X\otimes\cH_Y\otimes\cH_E.
\label{sm:eq:memoryless_history_isometry}
\end{equation}
The register $\mathsf{X}$ is part of the emitted history degrees of freedom. When the other physical legs
\((X,Y,E)\) are contracted with their conjugates, orthogonality of the label states removes
cross terms and yields the classical average over stimuli. Thus the doubled transfer map of
the pure history tensor is exactly the driven memory channel \(\Phi_{\cR}\).

\subsection{Time iteration as a wiring recipe}

\paragraph*{Reduced history-iMPS view.}
At each time step \(t\), one:
\begin{enumerate}
  \item Prepares a fresh coherent label encoding of the fixed reference input process \(\cR\).
  \item Applies the one-step history isometry \(U_{\cR}\) to the current memory. This is realisable as a unitary operator when supplemented with ancillae initialised in some fiducial state $\ket{0}$.
  \item Records the emitted registers \((X_t,Y_t,E_t)\) as the physical degrees of freedom
  at that time step.
  \item Identifies the outgoing memory as the bond input to time step \(t+1\).
\end{enumerate}

\paragraph*{Coherent comb view.}
Apply the same local routed isometry \(W\), but keep the stimulus input-output wire open
rather than contracting it with a diagonal reference input process.

\subsection{Stationary bulk regime}

Stationary memoryless driving induces the memory channel
\begin{equation}
\Phi_{\cR}(\rho)
=
\sum_x p_{\cR}(x)\sum_{y,\eta}
K^{(x)}_{y,\eta}\,\rho\,K^{(x)\dagger}_{y,\eta}.
\label{sm:eq:memoryless_reference_channel}
\end{equation}
We assume \(\Phi_{\cR}\) is primitive, such that it has a unique full-rank stationary
state \(\rho_\star^{(\cR)}\) and converges exponentially to it. In the corresponding
history-iMPS language, this is the stationary bulk regime in which boundary effects are
subextensive.

\subsection{Observed input--output process and readout map}

The reference-weighted observed process is obtained from the dilation by a fixed
readout map. At each time step, trace out inaccessible environment registers and any
hidden reference-memory transition labels, and measure the stimulus label and
action register in the computational basis. For a memoryless reference this means
measuring the registers \(X_t\) and \(Y_t\) and tracing out \(E_t\). For a
finite-memory reference it means recording \(x_t,y_t\) and coarse-graining over
\(c_t,c_{t+1}$, and $\eta_t\). For a length-\(L\) window this defines a CPTP map
\begin{equation}
\Lambda_L: \rho_{\mathrm{window}} \mapsto p(x_{0:L-1},y_{0:L-1}),
\label{sm:eq:readout_map}
\end{equation}
used in the data-processing arguments of Sec.~\ref{sm:sec:certificates}. If one further marginalises the
stimulus string, the output-only process is obtained by another classical
stochastic map, so the same data-processing bound applies a fortiori to output
marginals.
\subsection{Transfer-map and canonical-form conventions}

We use the Hilbert--Schmidt identification
\[
\mathrm{vec}:\cB(\cH_M)\to \cH_M\otimes\cH_M,
\qquad
\mathrm{vec}(\ket{i}\!\bra{j})=\ket{i}\otimes\ket{j}.
\]
With this convention,
\begin{equation}
\mathrm{vec}(BZA^\dagger)
=
(B\otimes \overline A)\,\mathrm{vec}(Z).
\label{sm:eq:vec_convention}
\end{equation}
Thus a map \(\Phi(Z)=\sum_\alpha L_\alpha Z R_\alpha^\dagger\) is represented by
\(\sum_\alpha L_\alpha\otimes \overline{R_\alpha}\).

For a uniform history iMPS with site tensors \(\{A^s\}_s\), the doubled transfer map is
\[
\mathcal T(\cdot)=\sum_s A^s \cdot A^{s\dagger},
\]
and the mixed transfer map used for overlaps is
\[
\mathcal E_{\widetilde A,A}(\cdot)=\sum_s \widetilde A^s \cdot A^{s\dagger}.
\]
The corresponding matrix representation of the mixed transfer is
\[
M_{\widetilde A,A}
=
\sum_s \widetilde A^s\otimes \overline{A^s}.
\]
Some numerical routines use the opposite vectorisation order, in which the Kronecker factors
appear reversed. The two representations are permutation-similar and therefore have the same
spectrum.

In canonical gauge, the left fixed point of \(\mathcal T\) is \(\Id\) and the right fixed
point is the bond state across a time cut. For the stationary history iMPS associated with a
fixed memoryless reference input process \(\cR\), that right fixed point is
\(\rho_\star^{(\cR)}\). Writing
\[
\rho_\star^{(\cR)}
=
\sum_{i=1}^{d_q} \lambda_i \ket{i}\!\bra{i},
\]
we define the canonical truncation projector
\[
P=\sum_{i=1}^{\tilde{d}_q} \ket{i}\!\bra{i}
\]
and discarded weight
\[
\eps_M=\sum_{i>\tilde{d}_q} \lambda_i.
\]

\section{Reference input processes and the routed canonical bond}
\label{sm:sec:reference}

This Section establishes the definitional base of the formalism: The routed
construction for stationary finite-memory, output-independent reference input
processes, proving Main Text Theorem~\mainref{thm:routed_history}. The memoryless case used in the
benchmarks is recovered as the one-state special case in Sec.~\ref{sm:sec:canonical}.

\begin{definition}[Stationary finite-memory reference input process]
A \emph{reference input process} $\cR$ consists of a finite classical
memory set $\mathcal{C}$ and a transition--emission kernel
\begin{equation}
R(x,c'\mid c)\ge 0,\qquad \sum_{x\in X,\,c'\in C}R(x,c'\mid c)=1
\quad\text{for every }c\in \mathcal{C}.
\label{sm:eq:reference_kernel}
\end{equation}
At each time step the process holds memory $c$, emits stimulus $x$, and
updates to $c'$ with probability $R(x,c'\mid c)$. Crucially, $R$ does not
depend on the agent's actions: $\cR$ is an open-loop operating ensemble, not
a reactive environment.
\end{definition}

\begin{definition}[Routed Kraus operators]
Let the agent have stimulus-conditioned instruments
$\{K^{(x)}_{y,\eta}\}_{y,\eta}$ with
$\sum_{y,\eta}K^{(x)\dagger}_{y,\eta}K^{(x)}_{y,\eta}=\Id_M$ for every $x$.
On $\cH_C\otimes\cH_M$, with $\cH_C=\operatorname{span}\{\ket{c}\}_{c\in \mathcal{C}}$,
define
\begin{equation}
L_{c,x,c',y,\eta}
\;:=\;
\sqrt{R(x,c'\mid c)}\;\ket{c'}\!\bra{c}\otimes K^{(x)}_{y,\eta}.
\label{sm:eq:routed_kraus}
\end{equation}
The full branch label is $\omega=(c,x,c',y,\eta)$: the reference transition
$(c,c')$, the stimulus $x$, the action $y$, and the agent's environment
label $\eta$.
\end{definition}

\begin{lemma}[Trace preservation of the routed transfer channel]
\label{sm:lem:routed_isometry}
The map
\begin{equation}
\Phi_{\cR}(\Omega)
=\sum_{c,x,c',y,\eta}
L_{c,x,c',y,\eta}\,\Omega\,L^{\dagger}_{c,x,c',y,\eta}
\label{sm:eq:routed_channel}
\end{equation}
is completely positive and trace preserving on $\cB(\cH_C\otimes\cH_M)$.
\end{lemma}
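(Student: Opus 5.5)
Complete positivity is immediate from the form of $\Phi_{\cR}$: it is written in Kraus (operator-sum) form $\Omega\mapsto\sum_\omega L_\omega\Omega L_\omega^\dagger$ with finitely many operators $L_\omega$ on $\cH_C\otimes\cH_M$. Any such map is completely positive, since $(\Phi_{\cR}\otimes\mathrm{id})(X)=\sum_\omega (L_\omega\otimes\Id)X(L_\omega\otimes\Id)^\dagger\ge0$ for $X\ge0$. Finiteness of the sum follows from finiteness of $\mathcal{C}$, $\mathcal{X}$, $\mathcal{Y}$ and the environment label set.

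For trace preservation, I will verify the Kraus completeness relation $\sum_\omega L_\omega^\dagger L_\omega=\Id_C\otimes\Id_M$, which is Eq.~\eqref{eq:left_canonical}. Trace preservation then follows from cyclicity of the trace: $\Tr\Phi_{\cR}(\Omega)=\Tr[(\sum_\omega L_\omega^\dagger L_\omega)\Omega]$.

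The computation goes as follows. Using Eq.~\eqref{sm:eq:routed_kraus}, $L_\omega^\dagger L_\omega=R(x,c'\mid c)\,\ket{c}\!\braket{c'|c'}\!\bra{c}\otimes K^{(x)\dagger}_{y,\eta}K^{(x)}_{y,\eta}$. Since $\braket{c'|c'}=1$, this equals $R(x,c'\mid c)\,\ket{c}\!\bra{c}\otimes K^{(x)\dagger}_{y,\eta}K^{(x)}_{y,\eta}$. I then sum in stages:
\begin{enumerate}
\item Sum over $(y,\eta)$ first. Completeness, Eq.~\eqref{sm:eq:kraus_completeness}, reduces the agent factor to $\Id_M$ for each fixed $x$. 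This is the one point where it matters that the instrument is summed over for each $x$ before the weight $R$ depending on $x$ is summed.
\item Sum over $(x,c')$. The normalisation in Eq.~\eqref{sm:eq:reference_kernel} gives $1$ for each $c$.
\item Sum over $c$. This yields $\sum_c\ket{c}\!\bra{c}=\Id_C$.
\end{enumerate}
Note that the cross terms in $c$ never appear, because each $L_\omega$ carries a single rank-one factor $\ket{c'}\!\bra{c}$. The left-multiplying $\ket{c'}$ therefore cancels against its own conjugate.

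There is no real obstacle here. The only point needing care is the order of summation: the agent completeness must be applied at fixed $x$, and only afterwards is the reference normalisation invoked. Everything else is bookkeeping.
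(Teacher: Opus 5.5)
Your proposal is correct and matches the paper's proof essentially step for step: complete positivity comes from the Kraus form, and $\sum_\omega L_\omega^\dagger L_\omega$ is reduced in three stages, using instrument completeness over $(y,\eta)$, then kernel normalisation over $(x,c')$, then $\sum_c\ket{c}\!\bra{c}=\Id_C$. Your caveat about the order of summation is harmless but unnecessary: the sums are finite and can be rearranged freely, so the only fact being used is that $R(x,c'\mid c)$ does not depend on $(y,\eta)$.
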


\begin{proof}
Complete positivity is immediate from the Kraus form. For trace
preservation,
\[
L^{\dagger}_{c,x,c',y,\eta}L_{c,x,c',y,\eta}
=R(x,c'\mid c)\,\ket{c}\!\bra{c}\otimes
K^{(x)\dagger}_{y,\eta}K^{(x)}_{y,\eta}.
\]
Summing over $(y,\eta)$ with the per-stimulus completeness relation, then
over $(x,c')$ with Eq.~\eqref{sm:eq:reference_kernel}, and finally over $c$, gives
$\sum_{\omega}L^{\dagger}_{\omega}L_{\omega}
=\sum_{c}\ket{c}\!\bra{c}\otimes\Id_M=\Id_{C}\otimes\Id_M$.
\end{proof}

\begin{remark}[Left-canonical by construction]
\label{sm:rem:left_canonical}
Lemma~\ref{sm:lem:routed_isometry} states precisely that the routed site-tensor family
$\{L_{\omega}\}_{\omega}$ is in \emph{left-canonical} gauge:
$\sum_\omega L_\omega^\dagger L_\omega=\Id_{CM}$. No gauge fixing is
required; instrument completeness together with normalisation of the
reference kernel place the routed history tensor in canonical form
automatically. This observation is the engine of the whole construction. The same remark covers the
memoryless case, where the site tensor is
$A^{(x,y,\eta)}=\sqrt{p_{\cR}(x)}\,K^{(x)}_{y,\eta}$.
\end{remark}

\begin{remark}[Why the reference transitions carry Kraus labels]
\label{sm:rem:no_coherent_sum}
It is tempting to sum coherently over the classical reference transitions
and define the single tensor
$A^{x,y,\eta}=\sum_{c,c'}\sqrt{R(x,c'\mid c)}\,\ket{c'}\!\bra{c}\otimes
K^{(x)}_{y,\eta}$. This tensor is \emph{not} left-canonical in general: its
normalisation contains the off-diagonal terms
$\sum_{x,c'}\sqrt{R(x,c'\mid c)R(x,c'\mid d)}\,\ket{c}\!\bra{d}$, $c\neq d$,
which vanish only under special unifilarity/orthogonality conditions on
$R$. To preserve the classical character of the reference memory, the
reference transition must be tensorized as a classical instrument with
explicit branch label $(c,x,c')$, as in Eq.~\eqref{sm:eq:routed_kraus}.
\end{remark}

\begin{assumption}[Mixing]
\label{sm:ass:primitive_reference}
$\Phi_{\cR}$ is primitive on $\cH_C\otimes\cH_M$ (unique full-rank
stationary state, exponential convergence). When $\Phi_{\cR}$ is not
primitive, all statements below hold after restricting to a chosen recurrent
sector of $\Phi_{\cR}$ and replacing ``full rank'' by ``full rank on the
sector''.
\end{assumption}

\begin{proposition}[Canonical bond state of the routed reference-weighted
history]
\label{sm:prop:canonical_bond_state}
Under Assumption~\ref{sm:ass:primitive_reference}, iterating the routed isometry
\[
U_{\cR}
=\sum_{\omega}L_{\omega}\otimes\ket{\omega}_{\mathrm{phys}}
:\;\cH_C\otimes\cH_M\;\to\;\cH_C\otimes\cH_M\otimes\cH_{\mathrm{phys}}
\]
defines a uniform pure history iMPS on the time line with bond space
$\cH_C\otimes\cH_M$ and physical label $\omega$. Its transfer map equals
$\Phi_{\cR}$, and in the left-canonical gauge of
Remark~\ref{sm:rem:left_canonical} the canonical bond state across any time cut
is the joint stationary reference--agent state
\begin{equation}
\Phi_{\cR}\big(\OmCM\big)=\OmCM,\qquad \Tr\OmCM=1.
\end{equation}
The squared Schmidt coefficients across a cut are the eigenvalues of
$\OmCM$.
\end{proposition}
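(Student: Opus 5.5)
The plan is to build the history state explicitly, compute its reduced states across a time cut, and read off the bond state from the left-canonical relation of Lemma~\ref{sm:lem:routed_isometry}.

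\emph{Construction and transfer map.} By Lemma~\ref{sm:lem:routed_isometry}, $U_{\cR}^\dagger U_{\cR}=\sum_{\omega,\omega'}L_\omega^\dagger L_{\omega'}\braket{\omega|\omega'}=\sum_\omega L_\omega^\dagger L_\omega=\Id_{CM}$, so $U_{\cR}$ is an isometry. Start from a bond state $\Omega_0$ on $\cH_C\otimes\cH_M$, purified by a reference system. Applying $U_{\cR}$ $L$ times gives a pure state whose coefficients are the products $L_{\omega_L}\cdots L_{\omega_1}$, which is a uniform MPS with site tensor $\{L_\omega\}$ and bond space $\cH_C\otimes\cH_M$. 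Contracting one physical leg with its conjugate and using $\braket{\omega|\omega'}=\delta_{\omega\omega'}$ removes all cross terms. This gives the doubled transfer map $\mathcal T(\cdot)=\sum_\omega L_\omega\cdot L_\omega^\dagger=\Phi_{\cR}$, in the convention of Sec.~\ref{sm:sec:notation}. The left-canonical relation then says exactly that $\Id_{CM}$ is the left fixed point of $\mathcal T$, i.e.\ $\Phi_{\cR}$ is trace preserving.

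\emph{Bond state across a cut.} Cut the chain after $t$ steps. Because the state is produced sequentially by isometries, the sites to the right of the cut together with the bond form an isometric image of the bond space, namely $U_{\cR}^{L-t}$. So the Schmidt decomposition across the cut is governed by the bond density operator reached after $t$ steps. Tracing out the past sites amounts to applying $\Phi_{\cR}$ $t$ times, so this operator is $\Omega_t=\Phi_{\cR}^t(\Omega_0)$. The future isometry preserves the spectrum, so the nonzero squared Schmidt coefficients across the cut equal the eigenvalues of $\Omega_t$. If the left-boundary reference purification is kept, the cut is past-plus-reference versus future-plus-bond, and the statement is the same.

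\emph{Stationarity.} Apply Assumption~\ref{sm:ass:primitive_reference}: primitivity gives $\Phi_{\cR}^t(\Omega_0)\to\OmCM$ exponentially fast for any initial $\Omega_0$. Here $\OmCM$ is the unique full-rank fixed point with $\Tr\OmCM=1$, and trace is preserved along the way. In the stationary (bi-infinite, or equivalently $\Omega_0=\OmCM$) history, every cut therefore carries exactly $\OmCM$, which is the right fixed point of $\mathcal T$ in the left-canonical gauge. Its eigenvalues are the squared Schmidt coefficients, by the previous step and continuity of the spectrum.

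The main obstacle is making precise that the \emph{canonical} bond state equals the Schmidt spectrum. This requires checking that the right block is an isometric image of the bond, which relies on the left-canonical form being oriented correctly relative to time. It also requires handling the fact that the bi-infinite state is defined only as a limit. I would handle both by working at finite $L$ with $\Omega_0=\OmCM$, where everything is exact by stationarity, and only then invoking primitivity to remove the dependence on the boundary. The memoryless corollary is immediate: with $|\mathcal C|=1$ the factor $\ket{c'}\!\bra{c}$ is trivial and $\OmCM=\rho_\star^{(\cR)}$.
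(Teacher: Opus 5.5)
Your proposal is correct and follows the same route as the paper's proof: $U_{\cR}$ is an isometry by Lemma~\ref{sm:lem:routed_isometry}, $\mathcal T=\Phi_{\cR}$ follows from orthonormality of the labels $\ket{\omega}$, primitivity identifies the right fixed point with $\OmCM$, and purity turns its spectrum into the squared Schmidt coefficients. The only difference is that you derive the canonical-form fact the paper simply invokes: tracing the past leaves $\Phi_{\cR}^t(\Omega_0)$ on the bond, and the future registers are an isometric image of the bond. This makes the Schmidt identification self-contained.
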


\begin{proof}
$U_{\cR}$ is an isometry by Lemma~\ref{sm:lem:routed_isometry}, and the same isometry
is applied at every step, so the history state is a uniform iMPS with bond
$\cH_C\otimes\cH_M$. Tracing the physical label of one doubled site gives
$\mathcal T(X)=\sum_\omega L_\omega X L_\omega^\dagger=\Phi_{\cR}(X)$,
because the orthonormal labels $\ket{\omega}$ remove all cross terms. In
left-canonical gauge the left fixed point of $\mathcal T$ is $\Id_{CM}$
(Remark~\ref{sm:rem:left_canonical}) and the right fixed point is the bond state
across a cut; under Assumption~\ref{sm:ass:primitive_reference} that fixed point is the
unique stationary state $\OmCM$. Purity of the global history state then
identifies the eigenvalues of $\OmCM$ with the squared Schmidt coefficients,
exactly as in the memoryless proof.
\end{proof}

\begin{remark}[Scope of the stationarity statement]
Routing a stationary reference input process through the control legs
converts the agent's controlled dynamics, under that operating ensemble,
into a stationary temporal tensor network --- the reference-weighted
controlled history. Controlled counterfactual histories do not in general
become stationary networks, and the certificate is correspondingly relative
to $\cR$.
\end{remark}

\section{Branching tree representation and trajectory conditioning}
\label{sm:sec:trees}

This Section formalises the branching picture used in Fig.~\mainref{fig:routed_history_network}(a). The tree is
not introduced as a computational method. Its role is to make precise what goes wrong if the
stimulus is treated as a branch label rather than as a wire: classical conditioning on a
single realised stimulus word selects one path, but the unconditioned object is a tree rather than
a one-dimensional transfer network.

\subsection{One-step branching tensor}

Fix stimulus-conditioned Kraus families
\(\{K^{(x)}_{y,\eta}\}_{y,\eta}\) for each \(x\in \mathcal{X}\), satisfying
\[
\sum_{y,\eta} K^{(x)\dagger}_{y,\eta}K^{(x)}_{y,\eta}=\Id_M,
\]
and let
\begin{equation}
V_x
=
\sum_{y,\eta}
K^{(x)}_{y,\eta}\otimes \ket{y}_{Y^{(x)}}\otimes \ket{\eta}_{E^{(x)}}
:\cH_M\to \cH_{M^{(x)}}\otimes\cH_{Y^{(x)}}\otimes\cH_{E^{(x)}}
\label{sm:eq:tree_stinespring_isometry}
\end{equation}
be a Stinespring isometry whose output memory is written into an \(x\)-labelled child branch.

Introduce, for each \(x\in \mathcal{X}\), branch registers
\((M^{(x)},Y^{(x)},E^{(x)})\) and a branch-label register \(x_{\mathrm{flag}}\)
with basis \(\{\ket{x}\}_{x\in \mathcal{X}}\). Choose a fixed fiducial state \(\ket{0}\) for all branch
registers not selected at that step.

Define the non-normalised one-step branching tensor
\begin{equation}
T:\cH_M\to
\Big(\bigotimes_{x\in \mathcal{X}}\cH_{M^{(x)}}\otimes\cH_{Y^{(x)}}\otimes\cH_{E^{(x)}}\Big)
\otimes \cH_{x_{\mathrm{flag}}}
\label{sm:eq:branching_tensor}
\end{equation}
by
\begin{equation}
T
=
\sum_{x\in \mathcal{X}}
\Bigg[
\big(V_x\big)_{M\to M^{(x)}Y^{(x)}E^{(x)}}
\otimes
\Big(\!\!\bigotimes_{x'\neq x}\ket{0}_{M^{(x')}Y^{(x')}E^{(x')}}\Big)
\Bigg]
\otimes \ket{x}_{x_{\mathrm{flag}}}.
\label{sm:eq:branching_tensor_explicit}
\end{equation}
Iterating the same local tensor at every node generates a uniform rooted \(|\mathcal{X}|\)-ary tree.

\subsection{Trajectory conditioning}

Fix a realised classical stimulus word \(x_{0:L-1}\in \mathcal{X}^L\). Trajectory conditioning means that,
at each depth \(t\), one retains only the branch labelled by \(x_t\) and traces all other
branches. Equivalently, one follows the unique path selected by \(x_{0:L-1}\) through the
tree and applies the standard readout on the retained output registers.

This produces:
\begin{itemize}
  \item A conditioned memory CP map from the root memory to the depth-\(L\) memory.
  \item A classical output distribution for the observed actions along that path.
\end{itemize}

\begin{proposition}[Conditioned tree and routed description coincide]
\label{sm:prop:tree_router_equiv}
Let $W=\sum_{x\in X}\ket{x}\!\bra{x}_X\otimes V_x$ be the routed isometry of the Main Text, with \(V_x\) as in
Eq.~\eqref{sm:eq:tree_stinespring_isometry}. Fix any horizon \(L\), any classical stimulus word
\(x_{0:L-1}\in \mathcal{X}^L\), and any initial memory state \(\rho\).

Then the following two constructions give the same objects:
\begin{enumerate}
  \item[(i)] Conditioning the branching tree along the path \(x_{0:L-1}\) and tracing out all
  off-path branches.
  \item[(ii)] Conditioning the routed description by conditioning the control wire to \(x_t\) at
  each time step and tracing out the inaccessible environment.
\end{enumerate}
More precisely, both constructions induce the same conditioned memory map
\begin{align}
\Phi_{x_{0:L-1}}(\rho)
:=
\sum_{y_{0:L-1},\eta_{0:L-1}}
&K^{(x_{L-1})}_{y_{L-1},\eta_{L-1}}\cdots
K^{(x_0)}_{y_0,\eta_0}\,
\rho K^{(x_0)\dagger}_{y_0,\eta_0}\cdots
K^{(x_{L-1})\dagger}_{y_{L-1},\eta_{L-1}},
\label{sm:eq:conditioned_map}
\end{align}
and, after readout, the same classical output distribution on
\(y_{0:L-1}\).
\end{proposition}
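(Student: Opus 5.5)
The plan is an induction on the horizon $L$ that tracks the joint state of the retained memory together with all emitted registers, and compares the two constructions one step at a time. For construction (i), I would first record the action of a single application of the branching tensor $T$ of Eq.~\eqref{sm:eq:branching_tensor_explicit} on a memory operator $\rho$:
\[
T\rho T^\dagger=\sum_{x,x'}\big(V_x\rho V_{x'}^\dagger\big)\otimes(\text{fiducial factors})\otimes\ket{x}\!\bra{x'}_{x_{\mathrm{flag}}}.
\]
Conditioning on the realised stimulus $x_t$ means projecting the flag onto $\ket{x_t}$. This kills every term with $x\neq x_t$ or $x'\neq x_t$, because the flag states are orthonormal. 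What remains is $V_{x_t}\rho V_{x_t}^\dagger$ on the $x_t$-labelled child registers, tensored with the fixed pure state $\ket{0}\!\bra{0}$ on all other branches.

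Tracing out the off-path branches removes these fiducial factors at unit trace. Tracing out $E^{(x_t)}$ (and, for the memory map alone, also $Y^{(x_t)}$) gives $\sum_{y,\eta}K^{(x_t)}_{y,\eta}\rho K^{(x_t)\dagger}_{y,\eta}$, because the $\ket{y}\ket{\eta}$ labels are orthonormal. For construction (ii), I would feed $\ket{x_t}\!\bra{x_t}_X\otimes\rho$ into $W$. Since $W$ is block diagonal in $X$, the output is $\ket{x_t}\!\bra{x_t}\otimes V_{x_t}\rho V_{x_t}^\dagger$, and the identical partial traces produce the identical one-step map. Both constructions therefore implement the same one-step instrument $\rho\mapsto\{K^{(x_t)}_{y}(\cdot)\}$ with outcome $y$. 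This is the precise statement I would carry through the induction: the same CP maps, indexed by the same classical outcomes $y_t$.

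The induction then composes these one-step instruments. At each depth, the retained child memory $M^{(x_t)}$ in the tree plays the role of the outgoing memory in the routed picture, and the next node acts only on it. Composing $L$ steps with the output register kept yields, for each string $y_{0:L-1}$, the unnormalised state
\[
\sum_{\eta_{0:L-1}}K^{(x_{L-1})}_{y_{L-1},\eta_{L-1}}\cdots K^{(x_0)}_{y_0,\eta_0}\,\rho\,(\cdots)^\dagger
\]
in both constructions. Summing over $y_{0:L-1}$ gives Eq.~\eqref{sm:eq:conditioned_map}, and taking the trace gives $p(y_{0:L-1}\mid x_{0:L-1})$, so both claims follow at once.

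The step I expect to need the most care is justifying that conditioning and tracing at depth $t$ commute with all later operations, since off-path subtrees are never acted upon again. In the tree all nodes below unselected branches act on registers initialised in $\ket{0}$, so conditioning must mean that those subtrees are simply not expanded, or equivalently are traced out. I would handle this by noting that operations on disjoint tensor factors commute with partial traces over those factors. The trace-one fiducial factors then contribute a multiplicative factor of $1$, and so do not alter either the memory map or the readout statistics.
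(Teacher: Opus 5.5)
Your proposal is correct and follows the same route as the paper's proof. In each construction a single step reduces to $V_{x_t}$ (via orthogonality of the flag states in the tree, and block-diagonality of $W$ in the routed picture), and composing the resulting one-step instruments gives the conditioned memory map and, if you keep the outcomes $y_t$, the same output distribution. Your version is simply more explicit, e.g.\ in noting that $Y$ as well as $E$ must be traced to obtain the memory map.

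One small caution on your last paragraph: $T$ is non-normalised ($T^\dagger T=|\mathcal{X}|\,\Id_M$). If you expanded the off-path subtrees and traced them afterwards, each expanded node would contribute a factor $|\mathcal{X}|$, so the two readings you give are not equivalent and your commuting-with-partial-traces argument does not cover the second one. Your argument is sound only under the first reading, where off-path branches are traced at the depth where they appear and never expanded. That is exactly how the paper defines trajectory conditioning.
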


\begin{proof}
In the routed description, conditioning the control wire at time \(t\) to \(x_t\) reduces
the one-step map to \(V_{x_t}\). Tracing over the environment then gives the Kraus sum
\[
\rho \mapsto \sum_{y_t,\eta_t}
K^{(x_t)}_{y_t,\eta_t}\,\rho\,K^{(x_t)\dagger}_{y_t,\eta_t}.
\]
Composing these maps over \(t=0,\dots,L-1\) yields
Eq.~\eqref{sm:eq:conditioned_map}.

In the branching tree, trajectory conditioning retains exactly the branch written by the
\(x_t\)-term of Eq.~\eqref{sm:eq:branching_tensor_explicit} at depth \(t\) and traces all other
branches. Along the retained path, the only non-fiducial dynamics is again \(V_{x_t}\).
The resulting contraction therefore yields the same ordered product of Kraus operators and
hence the same conditioned memory map. Applying the same readout---measure the retained
action registers in the computational basis and trace all environments---gives the same
classical output distribution.
\end{proof}

\subsection{Caution: Coherent stimuli are not trajectory conditioning}

The coincidence above is a statement about \emph{classical} conditioning. It applies when one
conditions on a realised classical stimulus word in the computational basis.

For coherent superpositions of stimuli, tracing ``unchosen branches'' is not a physical
linear operation. In that regime the tree picture is no longer the correct coherent
description. The appropriate object is the routed isometry \(W\) with an explicit stimulus
wire, or equivalently the comb/process tensor viewpoint of Sec.~\ref{sm:sec:comb}.

\section{Uniform history iMPS under memoryless reference driving}
\label{sm:sec:history_imps}

We now give the explicit construction for stationary memoryless reference driving --- the
regime of the benchmarks, and the one-state special case of the routed canonical bond of
Sec.~\ref{sm:sec:reference}. In this regime the routed one-step update admits a reduced history-iMPS description
whose bond space is the agent memory \(\mathsf{M}\).

\subsection{From routed update to one-step history isometry}

Recall the routed isometry
\begin{equation}
W
=
\sum_{x\in \mathcal{X}}
\ket{x}\!\bra{x}_X\otimes V_x,
\qquad
V_x
=
\sum_{y,\eta}
K^{(x)}_{y,\eta}\otimes \ket{y}_Y\otimes \ket{\eta}_E,
\label{sm:eq:routed_reference_isometry}
\end{equation}
acting as
\[
W:\cH_M\otimes\cH_X \to \cH_M\otimes\cH_Y\otimes\cH_X\otimes\cH_E.
\]

Fix a stationary memoryless reference input process \(\cR\) with distribution
\(p_{\cR}(x)\). As explained in Sec.~\ref{sm:sec:notation}, contracting the routed update with this diagonal
reference input process, while retaining an orthogonal label for the sampled stimulus, gives the one-step
history isometry
\begin{equation}
U_{\cR}
=
\sum_{x,y,\eta}
\sqrt{p_{\cR}(x)}\,
K^{(x)}_{y,\eta}\otimes \ket{x}_A\otimes \ket{y}_Y\otimes \ket{\eta}_E
:
\cH_M \to \cH_M\otimes\cH_A\otimes\cH_Y\otimes\cH_E.
\label{sm:eq:history_isometry}
\end{equation}
This is the isometric version of the local history tensor \(A_{\cR}\)
introduced below. The notation \(A_{\cR}\) emphasizes the local history tensor, while \(U_{\cR}\) emphasizes
the same object as an isometry from the incoming memory to the outgoing memory and physical
registers.

Since \(\sum_x p_{\cR}(x)=1\) and each \(x\)-conditioned Kraus family is complete,
\(U_{\cR}\) is an isometry:
\[
U_{\cR}^\dagger U_{\cR}=\Id_M.
\]

\subsection{Uniform history iMPS}

At each time step \(t\), introduce fresh physical registers \((\mathsf{X}_t,\mathsf{Y}_t,\mathsf{E}_t)\) and apply the
same isometry \(U_{\cR}\) to the incoming memory \(M_t\), producing an outgoing memory
\(M_{t+1}\) and the emitted physical registers. Iterating this same local map at every time
step yields a translation-invariant pure state in the time direction, i.e.\ a uniform history
iMPS with bond space \(\cH_M\).

Choosing the per-step physical basis
\[
\cH_{\mathrm{phys}}=\cH_X\otimes\cH_Y\otimes\cH_E,
\qquad
\ket{s}\equiv \ket{x}_X\ket{y}_Y\ket{\eta}_E,
\]
the corresponding site tensors are
\begin{equation}
A_{\cR}^{\,s}
=
A_{\cR}^{(x,y,\eta)}
:=
\sqrt{p_{\cR}(x)}\,K^{(x)}_{y,\eta}.
\label{sm:eq:site_tensor_main}
\end{equation}
Thus the driven history is represented by a uniform iMPS generated by the single-site tensor
family \(\{A_{\cR}^{\,s}\}_s\).

\begin{proposition}[History iMPS under stationary memoryless driving]
\label{sm:prop:history_imps}
Under stationary memoryless reference driving \(\cR\), repeated application
of the one-step history isometry \(U_{\cR}\) defines a uniform history iMPS on the
time line with bond space given by the agent memory \(\mathsf{M}\) and physical legs \((X,Y,E)\).
\end{proposition}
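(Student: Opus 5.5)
The proof has two parts: show that $U_{\cR}$ is an isometry, and show that iterating it yields a translation-invariant pure state whose site tensors are the $A_{\cR}^{\,s}$ of Eq.~\eqref{sm:eq:site_tensor_main}. The cleanest route is to treat this as the one-state special case of Proposition~\ref{sm:prop:canonical_bond_state}. Take $|\mathcal{C}|=1$ and $R(x,c'\mid c)=p_{\cR}(x)$. Then $\cH_C\cong\mathbb{C}$, the routed Kraus operators of Eq.~\eqref{sm:eq:routed_kraus} reduce to $\sqrt{p_{\cR}(x)}\,K^{(x)}_{y,\eta}$, and the branch label $\omega$ collapses to $s=(x,y,\eta)$. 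Lemma~\ref{sm:lem:routed_isometry} and Remark~\ref{sm:rem:left_canonical} then give $\sum_s A_{\cR}^{s\dagger}A_{\cR}^{s}=\Id_M$ directly. For a self-contained check, I would also verify this by hand: expand $U_{\cR}^\dagger U_{\cR}$ using orthonormality of $\ket{x}\ket{y}\ket{\eta}$, apply per-stimulus completeness~\eqref{sm:eq:kraus_completeness}, and then use $\sum_x p_{\cR}(x)=1$.

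Next comes the iteration. Fix a horizon $L$ and an initial memory vector (or a purified bond boundary). Define
\[
U_{\cR}^{(L)}=(U_{\cR})_{M_{L-1}\to M_L X_{L-1}Y_{L-1}E_{L-1}}\cdots(U_{\cR})_{M_0\to M_1X_0Y_0E_0}.
\]
I would expand each factor in the basis $\ket{s_t}$. Since each step acts only on the current memory and writes to fresh registers, the composition is
\[
\sum_{s_0,\dots,s_{L-1}}A_{\cR}^{s_{L-1}}\cdots A_{\cR}^{s_0}\otimes\ket{s_0\cdots s_{L-1}}.
\]
This is exactly a finite-window MPS with the same tensor at every site, with bond dimension $\dim\cH_M$ and physical legs $(X,Y,E)$. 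A composition of isometries is an isometry, so every window is normalised. Because the local tensor does not depend on $t$, the family is translation invariant.

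The main obstacle is making ``iMPS'' precise in the infinite limit rather than just for finite windows. I would handle it in the standard way. Define the state through its finite-window reduced density matrices, with the boundary taken at the fixed point of the transfer map. Tracing out the physical legs of a doubled site gives $\mathcal T=\Phi_{\cR}$ of Eq.~\eqref{sm:eq:memoryless_reference_channel}. Under the primitivity assumption, the left boundary $\Id_M$ and the right boundary $\rho_\star^{(\cR)}$ make the windows consistent under marginalisation and shift-invariant. Consistency follows from left-canonicality, $\mathcal T^\dagger(\Id)=\Id$, and stationarity follows from $\Phi_{\cR}(\rho_\star^{(\cR)})=\rho_\star^{(\cR)}$. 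By Kolmogorov-type extension these windows define a uniform iMPS on the time line. This completes the argument, with the bond space identified as $\cH_M$ as claimed.
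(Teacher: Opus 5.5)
Your proposal is correct and follows essentially the paper's argument. The paper's proof just observes that the same isometry $U_{\cR}$ is applied at every step, with its output memory wired into the next step, giving a translation-invariant network with bond $\cH_M$ and physical legs $(X,Y,E)$; isometry is shown just beforehand from $\sum_x p_{\cR}(x)=1$ and per-stimulus completeness, which is exactly your hand check. Your explicit finite-window expansion and your infinite-volume construction (marginals made consistent by left-canonicality and by $\Phi_{\cR}(\rho_\star^{(\cR)})=\rho_\star^{(\cR)}$) go beyond what the paper writes here and are sound; note that any CPTP map has a fixed state, so primitivity gives uniqueness and full rank of the boundary rather than its existence.
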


\begin{proof}
The same one-step isometry \(U_{\cR}\) is applied at every time step, and its output
memory is wired into the next step. The resulting time-directed tensor network is therefore
translation invariant, with bond space \(\cH_M\) and local physical space
\(\cH_A\otimes\cH_Y\otimes\cH_E\). This is precisely a uniform history iMPS.
\end{proof}

\subsection{Transfer map and stationary bulk}

The doubled transfer map of the history iMPS generated by
Eq.~\eqref{sm:eq:site_tensor_main} is
\begin{equation}
\mathcal T_{\cR}(\cdot)
=
\sum_s A_{\cR}^{\,s} \cdot A_{\cR}^{\,s\dagger}
=
\sum_x p_{\cR}(x)\sum_{y,\eta}
K^{(x)}_{y,\eta}\,\cdot\,K^{(x)\dagger}_{y,\eta}.
\label{sm:eq:transfer_equals_reference_channel}
\end{equation}
Hence
\begin{equation}
\mathcal T_{\cR}=\Phi_{\cR}.
\end{equation}
This closure on the agent memory alone is the key simplification of the stationary memoryless regime.

If \(\Phi_{\cR}\) is primitive, then \(\mathcal T_{\cR}\) has a unique
full-rank fixed point and defines the stationary bulk to which the later canonical form and
truncation arguments apply. In that regime boundary effects are subextensive and the bond
state is well-defined independent of boundary conditions.

For finitely correlated diagonal driving, the same construction survives on an enlarged
bond that includes the controller state; for coherent interventions, the natural object is
instead the comb/process-tensor viewpoint of Sec.~\ref{sm:sec:comb}.

\section{Canonical bond state in the memoryless case}
\label{sm:sec:canonical}

This Section proves Main Text Theorem~\mainref{thm:routed_history} in the memoryless case: In the routed
history-iMPS representation, the canonical bond state is the stationary memory state
\(\rho_\star^{(\cR)}\) of the driven agent. The general statement, with the joint bond
\(\OmCM\), is established in Sec.~\ref{sm:sec:reference}; this is its one-state special case.

\begin{lemma}[Transfer map equals the reference-averaged memory channel]
\label{sm:lem:transfer_equals_phi}
Fix a stationary memoryless reference input process \(\cR\) and let
\(\ket{\Psi_{\cR}}\) be the uniform history iMPS constructed in Sec.~\ref{sm:sec:history_imps}. Let
\(\mathcal T_{\cR}\) denote its doubled transfer map, obtained by contracting one
time step of the iMPS tensor with its conjugate and tracing the physical legs.

Then
\begin{equation}
\mathcal T_{\cR}(\cdot)
=
\Phi_{\cR}(\cdot)
=
\sum_x p_{\cR}(x)\sum_{y,\eta}
K^{(x)}_{y,\eta}\,\cdot\,K^{(x)\dagger}_{y,\eta}.
\label{sm:eq:transfer_equals_reference_channel_again}
\end{equation}
In particular, the right fixed point of \(\mathcal T_{\cR}\) is the stationary memory
state \(\rho_\star^{(\cR)}\).
\end{lemma}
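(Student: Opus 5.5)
The plan is a direct computation of the doubled transfer map, followed by a fixed-point identification. First, write the one-site tensor of the history iMPS as $A_{\cR}^{s}=\sqrt{p_{\cR}(x)}\,K^{(x)}_{y,\eta}$ with $s=(x,y,\eta)$, as in Eq.~\eqref{sm:eq:site_tensor_main}. Contracting one site of $\ket{\Psi_{\cR}}$ with its conjugate and tracing the physical legs produces
\[
\mathcal T_{\cR}(Z)=\sum_{s,s'}\braket{s'|s}\,A_{\cR}^{s}ZA_{\cR}^{s'\dagger}.
\]
Because the labels $\ket{x}_X\ket{y}_Y\ket{\eta}_E$ form an orthonormal basis, $\braket{s'|s}=\delta_{ss'}$. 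Only diagonal terms survive, and the factors $\sqrt{p_{\cR}(x)}\sqrt{p_{\cR}(x)}=p_{\cR}(x)$ recombine into the reference-weighted Kraus sum. This is exactly Eq.~\eqref{sm:eq:memoryless_reference_channel}, which establishes the first claim. Equivalently, one can note that this is the $|\mathcal{C}|=1$ specialisation of the identity $\mathcal T=\Phi_{\cR}$ already shown in Proposition~\ref{sm:prop:canonical_bond_state}: with one reference state, $L_{c,x,c',y,\eta}$ reduces to $A_{\cR}^{(x,y,\eta)}$.

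For the fixed-point claim, the next step is to check canonicality. By Eq.~\eqref{sm:eq:kraus_completeness} and $\sum_x p_{\cR}(x)=1$, we have $\sum_s A_{\cR}^{s\dagger}A_{\cR}^{s}=\Id_M$. The dual map $\mathcal T_{\cR}^{*}(Z)=\sum_s A^{s\dagger}_{\cR}ZA^s_{\cR}$ is therefore unital, so $\Id_M$ is the left fixed point (Remark~\ref{sm:rem:left_canonical}). The right fixed point is then any $\rho\ge0$ with $\mathcal T_{\cR}(\rho)=\rho$ and $\Tr\rho=1$. Since $\mathcal T_{\cR}=\Phi_{\cR}$, and $\Phi_{\cR}$ is assumed primitive (Stationary bulk regime, Sec.~\ref{sm:sec:notation}), such a $\rho$ is unique, and by definition it is $\rho_\star^{(\cR)}$.

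The one point requiring care is the convention for the ``right'' fixed point. Under the vectorisation of Eq.~\eqref{sm:eq:vec_convention}, the transfer matrix is $\sum_s A^s\otimes\overline{A^s}$; its right eigenvector is $\mathrm{vec}(\rho)$ and its left eigenvector is $\mathrm{vec}(\Id)$. I would state this explicitly, so that ``right fixed point'' unambiguously means the fixed point of $\mathcal T_{\cR}$ itself and not of its dual. There is no real obstacle beyond this bookkeeping: uniqueness and positivity come entirely from the primitivity assumption, which supplies a unique full-rank fixed state via Perron--Frobenius theory for channels.
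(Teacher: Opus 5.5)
Your proposal is correct and follows essentially the same route as the paper: you expand $\mathcal T_{\cR}(\cdot)=\Tr_{X,Y,E}[U_{\cR}\cdot U_{\cR}^\dagger]$ in the orthonormal label basis, so the cross terms vanish and the weights $\sqrt{p_{\cR}(x)}\sqrt{p_{\cR}(x)}=p_{\cR}(x)$ recover $\Phi_{\cR}$. Your additional remarks on left-canonicality, primitivity-based uniqueness, and the right-eigenvector convention make explicit what the paper calls ``immediate.''
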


\begin{proof}
From Sec.~\ref{sm:sec:history_imps}, the one-step history isometry is
\[
U_{\cR}
=
\sum_{x,y,\eta}
\sqrt{p_{\cR}(x)}\,
K^{(x)}_{y,\eta}\otimes \ket{x}_X\otimes \ket{y}_Y\otimes \ket{\eta}_E.
\]
By definition,
\[
\mathcal T_{\cR}(\cdot)
:=
\Tr_{X,Y,E}\!\left[\,U_{\cR} \cdot U_{\cR}^\dagger\,\right].
\]
Expanding in the orthonormal bases of \(X\), \(Y\), and \(E\) removes all cross terms and
gives
\[
\mathcal T_{\cR}(X)
=
\sum_{x,y,\eta}
p_{\cR}(x)\,
K^{(x)}_{y,\eta}\,X\,K^{(x)\dagger}_{y,\eta},
\]
which is precisely \(\Phi_{\cR}(\cdot)\). The fixed-point statement is immediate.
\end{proof}

\begin{proof}[Proof of Main Text Theorem~\mainref{thm:routed_history}, memoryless case]
By Lemma~\ref{sm:lem:transfer_equals_phi},
\[
\mathcal T_{\cR}=\Phi_{\cR}.
\]
If \(\Phi_{\cR}\) is primitive, then it has a unique full-rank stationary state
\(\rho_\star^{(\cR)}\), which is therefore the unique full-rank right fixed point of
\(\mathcal T_{\cR}\).

In canonical gauge, the left fixed point of the transfer map is \(\Id\) and the right fixed
point is the bond state across a time cut. Hence the canonical bond state is
\[
\sigma_{\cR}=\rho_\star^{(\cR)}.
\]
Since the global history state is pure, the Schmidt spectrum across a time cut is the
eigenvalue spectrum of the bond state, and the bond entropy is
\(S(\rho_\star^{(\cR)})\).
\end{proof}

\section{Agent-local retained-weight truncation}
\label{sm:sec:agent_local}

For finite-memory references the canonical bond is the joint space
$\mathsf{C}\otimes \mathsf{M}$. An unconstrained canonical truncation of $\OmCM$ compresses
the routed history representation but may also compress or entangle the
external reference memory $\mathsf{C}$; it therefore does not by itself define a
smaller physical agent. Routed-history compression and agent-memory
reduction must be distinguished. Genuine agent reduction restricts to
projectors that act on the agent factor alone.

\begin{definition}[Agent-local projector; driven-agent marginal]
An \emph{agent-local projector} is $\Pi=\Id_C\otimes P_M$ with
$P_M:\cH_M\to\cH_M$, $\rank(P_M)=\tilde{d}_q$. The \emph{driven-agent
marginal} is
\begin{equation}
\rhoM:=\Tr_C\,\OmCM.
\end{equation}
\end{definition}

\begin{lemma}[Retained-weight optimality; Main Text Corollary~\mainref{cor:discarded_weight_bound}]
\label{sm:lem:ky_fan}
For any agent-local projector, the retained stationary canonical weight is
\begin{equation}
w(P_M)=\Tr\!\big[(\Id_C\otimes P_M)\,\OmCM\big]=\Tr\!\big[P_M\,\rhoM\big].
\end{equation}
Writing $\rhoM=\sum_i\lambda_i\ket{i}\!\bra{i}$ with
$\lambda_1\ge\lambda_2\ge\cdots$, the maximum of $w(P_M)$ over rank-$\tilde{d}_q$
projectors is attained by $P_{\tilde{d}_q}=\sum_{i\le \tilde{d}_q}\ket{i}\!\bra{i}$, with
discarded stationary agent-memory weight
\begin{equation}
\eps_M(\tilde{d}_q)=1-w(P_{\tilde{d}_q})=\sum_{i>\tilde{d}_q}\lambda_i.
\end{equation}
\end{lemma}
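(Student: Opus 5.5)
The proof has two parts: reduce the retained weight to a trace against the agent marginal, then maximise that trace over rank-$\tilde{d}_q$ projectors using Ky Fan's principle.

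\emph{Step 1: reduction to the marginal.} By the defining property of the partial trace, $\Tr[(\Id_C\otimes B)\,\Omega]=\Tr[B\,\Tr_C\Omega]$ for any operator $B$ on $\cH_M$ and any $\Omega$ on $\cH_C\otimes\cH_M$. Expanding $\Tr$ in a product basis $\{\ket{c}\ket{m}\}$ makes this explicit. Taking $B=P_M$ and $\Omega=\OmCM$ gives
\[
w(P_M)=\Tr[P_M\rhoM].
\]
This holds for every agent-local projector, with no spectral assumption needed. The question is therefore purely about the single positive operator $\rhoM$, which has unit trace because $\Tr\OmCM=1$ by Proposition~\ref{sm:prop:canonical_bond_state}.

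\emph{Step 2: maximisation.} Write $P_M=\sum_{k=1}^{\tilde{d}_q}\ket{v_k}\!\bra{v_k}$ with $\{v_k\}$ orthonormal. Then
\[
\Tr[P_M\rhoM]=\sum_{k}\bra{v_k}\rhoM\ket{v_k}.
\]
Ky Fan's maximum principle states that the maximum of this sum over orthonormal $\tilde{d}_q$-frames is $\sum_{i\le\tilde{d}_q}\lambda_i$, attained when the $v_k$ span the top eigenvectors.

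For a self-contained argument, expand $\bra{v_k}\rhoM\ket{v_k}=\sum_i\lambda_i|\braket{i|v_k}|^2$. This gives $w=\sum_i\lambda_i t_i$ with weights $t_i=\sum_k|\braket{i|v_k}|^2=\bra{i}P_M\ket{i}$. These weights satisfy $0\le t_i\le1$ and $\sum_i t_i=\tilde{d}_q$. A linear functional with decreasing coefficients $\lambda_i$, maximised over this polytope, is maximised by putting $t_i=1$ for $i\le\tilde{d}_q$ and $t_i=0$ otherwise. Concretely, $\sum_{i\le \tilde{d}_q}\lambda_i-\sum_i\lambda_i t_i=\sum_{i\le\tilde{d}_q}(1-t_i)\lambda_i-\sum_{i>\tilde{d}_q}t_i\lambda_i\ge\lambda_{\tilde{d}_q}\big(\sum_{i\le\tilde{d}_q}(1-t_i)-\sum_{i>\tilde{d}_q}t_i\big)=0$.

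Equality is attained by $P_{\tilde{d}_q}$. When $\lambda_{\tilde{d}_q}=\lambda_{\tilde{d}_q+1}$ the maximiser is non-unique, which is the degeneracy caveat in the statement.

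\emph{Step 3: discarded weight.} Since $\Tr\rhoM=1$, we get $\eps_M=1-w(P_{\tilde{d}_q})=\sum_{i>\tilde{d}_q}\lambda_i$.

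The only step needing care is the bound $t_i\le1$ in Step 2, which holds because $P_M$ is a projector, so $\bra{i}P_M\ket{i}\le\|P_M\|=1$. Everything else is direct computation.
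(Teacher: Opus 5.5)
Your proposal is correct and follows essentially the same route as the paper: you reduce to $\Tr[P_M\rhoM]$ via the partial trace, then maximise the linear functional $\sum_i\lambda_i t_i$ over the diagonal weights $t_i=\bra{i}P_M\ket{i}$ in the polytope $0\le t_i\le 1$, $\sum_i t_i=\tilde{d}_q$ (Ky Fan). Your explicit inequality $\sum_{i\le\tilde{d}_q}\lambda_i-\sum_i\lambda_i t_i\ge 0$ supplies the verification that the paper states only in words, and your degeneracy remark matches the paper's caveat.
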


\begin{proof}
The first equality is the definition of the partial trace. For the second,
expand in the eigenbasis of $\rhoM$:
$\Tr[P_M\rhoM]=\sum_i\lambda_i\,t_i$ with $t_i:=\braket{i|P_M|i}$. The
diagonal of a rank-$\tilde{d}_q$ projector satisfies $0\le t_i\le1$ and
$\sum_i t_i=\Tr P_M=\tilde{d}_q$: the truncation distributes a total
\emph{dimension budget} $\tilde{d}_q$ of retention across the stationary
occupancies $\lambda_i$ of the driven memory. The linear functional
$\sum_i\lambda_i t_i$ over this polytope is maximised by loading the budget
entirely onto the $\tilde{d}_q$ largest occupancies, $t_i=1$ for $i\le \tilde{d}_q$ and
$t_i=0$ otherwise, which is attained by $P_{\tilde{d}_q}$ (with the usual freedom
under degeneracy of $\lambda_{\tilde{d}_q}$). Physically: At fixed retained
dimension, keep the directions the driven memory actually occupies --- the
temporal analogue of the density-matrix truncation rule underlying DMRG.
(The extremal statement is Ky Fan's maximum principle~\cite{fan1951maximum}.)
\end{proof}

\begin{remark}[Scope of the optimality claim]
Lemma~\ref{sm:lem:ky_fan} establishes optimality for retained stationary
canonical weight \emph{only}. It does not assert optimality of $P_{\tilde{d}_q}$
for the fidelity-divergence rate, for finite-horizon total variation, for
worst-case controlled error, for closed-loop performance, or across input
processes other than $\cR$. We therefore call it the \emph{agent-local
retained-weight truncation} rather than an optimal agent compression.
\end{remark}

\section{Purification conventions, exact certificates, and a memoryless finite-window bound}
\label{sm:sec:certificates}

The quantum fidelity divergence rate (QFDR)~\cite{Vanhecke2021Truncation,yang2024dimensionreductionquantumsampling}, denoted \(\RFQ\) in the Main
Text, compares two specified \emph{purified} history states. It is therefore
defined only after their Stinespring label alphabets have been aligned. This
section fixes that convention and derives the exact mixed-transfer
certificates used below.

\paragraph*{Common label alphabet and zero padding.}
The original and polar-completed histories use the same labels
$\omega=(c,x,c',y,\eta)$. A repair with extra hidden branches, such as the
reset completion of Sec.~\ref{sm:sec:repair}, is compared on the enlarged common alphabet by
assigning zero operators to the absent branches of the original tensor.
Tensors on different bond spaces are compared directly through a rectangular
mixed transfer. Once this choice of dilation and common alphabet is fixed,
the QFDR is unambiguous.

\paragraph*{Agent-local operators on the joint bond.}
The retained projector $P=P_M$ and all repair maps act only on the agent
memory $\cH_M$; on the joint routed bond $\cH_C\otimes\cH_M$ they act through
$\Id_C\otimes(\cdot)$. For memoryless driving $\mathsf{C}$ is trivial and this
distinction disappears.

\begin{lemma}[Exact mixed-transfer certificate]
\label{sm:lem:mixed_transfer_exact}
Let $\{A^\omega\}$ and $\{\widetilde A^{\omega}\}$ generate normalized
uniform history iMPS $\ket{\Psi}$, $\ket{\widetilde\Psi}$ over the common
alphabet, with the standard injectivity (normal/primitive) assumptions and
nonzero overlap of the stationary boundaries with the dominant mixed-transfer
sector. Then
\begin{equation}
\RFQ(\Psi,\widetilde\Psi)
=-\tfrac12\log_2\mu,
\qquad
\mu=\spr\Big(Z\mapsto\textstyle\sum_\omega\widetilde A^\omega Z
A^{\omega\dagger}\Big),
\label{sm:eq:qfdr_mu_exact}
\end{equation}
with boundary factors contributing only subextensively, and $\mu$ is invariant under gauge
transformations $\widetilde A^\omega\mapsto B\widetilde A^\omega B^{-1}$ of
either tensor.
\end{lemma}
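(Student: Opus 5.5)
The plan is to write the finite-window overlap as a boundary contraction of powers of the mixed transfer operator, and then extract its exponential growth rate from the spectrum.

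\textbf{Overlap as a transfer-operator power.} For the length-$L$ windows generated from stationary boundaries, the overlap takes the form
\begin{equation}
\langle\widetilde\Psi_L|\Psi_L\rangle=\Tr\big[\widetilde\ell^{\dagger}\,\mathcal E^L(r)\big],\qquad \mathcal E(Z)=\sum_\omega\widetilde A^\omega Z A^{\omega\dagger}.
\end{equation}
Here $r$ is a right boundary operator built from the purified stationary bond states (for example $\sqrt{\widetilde\sigma}\,\sqrt{\sigma}$-type cross terms, or simply fixed boundary vectors), and $\widetilde\ell$ is the left boundary. In the vectorised picture of Sec.~\ref{sm:sec:notation} this reads $\langle\widetilde\ell|M_{\widetilde A,A}^L|r\rangle$. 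The step itself is bookkeeping: the physical labels $\ket{\omega}$ are orthonormal over the common alphabet, so contracting each site pairs $\widetilde A^\omega$ with $\overline{A^\omega}$ and nothing else. Zero padding makes this contraction well defined even when the repaired tensor has extra branches, and if the bond dimensions differ, $\mathcal E$ is a rectangular map $\cB(\cH_M,\cH_{\widetilde M})\to\cB(\cH_M,\cH_{\widetilde M})$, which is still square as a linear operator on that space.

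\textbf{Spectral asymptotics.} Next I would take the Jordan decomposition of $M_{\widetilde A,A}$. Let $\mu$ be its spectral radius. The hypotheses are injectivity/primitivity plus nonzero overlap of the boundaries with the dominant sector, and I would read these as saying the peripheral eigenvalue is simple up to phase and has nonvanishing boundary weight $c$. Then
\begin{equation}
|\langle\widetilde\Psi_L|\Psi_L\rangle|=|c|\,\mu^L\,\big(1+O(\kappa^L)\big),\qquad \kappa<1.
\end{equation}
Here $\kappa$ is the ratio of the subleading to the leading eigenvalue modulus. If there is peripheral degeneracy, I would bound above by $\|\mathcal E^L\|\le C L^{k}\mu^L$ and below using the nonzero dominant overlap along a subsequence; either way the correction is polynomial in $L$. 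Taking $-\frac{1}{2L}\log_2$ then gives $-\frac12\log_2\mu$, and $\log|c|$ together with any polynomial factors contribute only $o(L)$. This is exactly the claim that the boundary terms are subextensive.

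\textbf{Gauge invariance.} Under $\widetilde A^\omega\mapsto B\widetilde A^\omega B^{-1}$ we get $\mathcal E\mapsto \mathcal{B}\circ\mathcal E\circ\mathcal{B}^{-1}$ with $\mathcal{B}(Z)=BZ$. In matrix form this is conjugation by $B\otimes\Id$, and the analogous statement holds for $A$ with $\Id\otimes\overline{B}$. A similarity transformation preserves the spectrum, so $\mu$ is unchanged. Note that the limit-based definition of $\RFQ$ is also gauge invariant, since the physical states themselves do not change.

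\textbf{Main obstacle.} The delicate step is the spectral one. $\mathcal E$ is not a positive map, so Perron--Frobenius does not apply directly, and the dominant eigenvalue need not be unique or real. I would first invoke the standard fact that $\mu\le1$, via Cauchy--Schwarz on the mixed transfer against the two CP transfer maps. I would then rely on the stated hypothesis to guarantee that the dominant sector is not annihilated by the boundaries. If the peripheral spectrum contains several phases, I would handle it by controlling the $\limsup$/$\liminf$ of $|c_L|$, using the fact that a finite sum of nonzero exponentials of unit modulus cannot decay exponentially. This keeps the rate intact even though the prefactor oscillates.
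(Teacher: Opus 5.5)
Your proposal is correct and is essentially the paper's argument. The paper also writes the window overlap as a boundary contraction of $L$ copies of the mixed transfer, identifies the exponential rate with its spectral radius (boundary factors being subextensive), and obtains gauge invariance from the similarity $Z\mapsto BZ$; your Jordan-decomposition step makes explicit what the paper states in one line.

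One caution on your degenerate-peripheral aside: a prefactor $\sum_j c_j e^{i\theta_j L}$ can vanish, or become arbitrarily small, along subsequences. For example, eigenvalues $\pm\mu$ with equal boundary weights give zero leading term at every odd $L$. In that case you only control the $\limsup$ of the overlap, not the limit, so it is your simple-dominant-eigenvalue reading of the hypotheses that actually delivers the stated limit.
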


\begin{proof}
The finite-window overlap is a contraction of $L$ copies of the mixed
transfer between fixed boundary operators; with both states normalized,
boundary contributions are subextensive and the exponential rate is the
spectral radius. Gauge invariance: conjugating $\widetilde A$ by $B$
conjugates the mixed transfer by $Z\mapsto BZ$, a similarity.
\end{proof}

\begin{lemma}[Non-negativity of the rate]
\label{sm:lem:cs_upper_bound}
Let $\nu$ be the spectral radius of the self transfer
$Z\mapsto\sum_\omega\widetilde A^\omega Z\widetilde A^{\omega\dagger}$
(so $\nu=1$ for a normalized tensor). Then $\mu\le\sqrt{\nu}\le 1$, hence
$\RFQ\ge 0$.
\end{lemma}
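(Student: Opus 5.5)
The plan is to work with finite windows first and pass to the rate at the end. For a length-$L$ window the overlap is a boundary contraction
\[
\langle\widetilde\Psi_L|\Psi_L\rangle=\Tr\!\big[\ell^\dagger\,\mathcal E^L(r)\big],\qquad \mathcal E(Z)=\sum_\omega\widetilde A^\omega Z A^{\omega\dagger},
\]
where $\ell,r$ are the boundary operators. So it suffices to show that the spectral radius of $\mathcal E$ is at most $\sqrt{\nu}\cdot 1$. Once $\mu\le 1$ is established, $\RFQ=-\tfrac12\log_2\mu\ge0$ follows directly from Lemma~\ref{sm:lem:mixed_transfer_exact}.

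\textbf{Step 1: a Cauchy--Schwarz bound on the Hilbert--Schmidt inner product.} Equip operators with the inner product $\langle X,Y\rangle=\Tr X^\dagger Y$. For any $Z$ and test operator $W$, I will write
\[
\Tr\!\big[W^\dagger\,\mathcal E(Z)\big]=\sum_\omega \Tr\!\big[(\widetilde A^{\omega\dagger}W)^\dagger\, Z A^{\omega\dagger}\big].
\]
Applying Cauchy--Schwarz over the joint index $\omega$ and the matrix entries gives
\[
\big|\Tr[W^\dagger\mathcal E(Z)]\big|^2\le \Big(\sum_\omega\|\widetilde A^{\omega\dagger}W\|_2^2\Big)\Big(\sum_\omega\|ZA^{\omega\dagger}\|_2^2\Big).
\]
The first factor equals $\Tr[W^\dagger\,\widetilde{\mathcal T}(WW^\dagger)^{\cdots}]$. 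More cleanly, it is $\Tr[WW^\dagger\,\sum_\omega\widetilde A^\omega\widetilde A^{\omega\dagger}]$-type quantities. I will rearrange these as $\Tr[\rho\,\widetilde{\mathcal T}^*(\cdot)]$ forms involving the self transfers $\widetilde{\mathcal T}$ and $\mathcal T$.

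\textbf{Step 2: make it iterable using fixed points.} The single-step bound is not directly sharp, so I will apply the same Cauchy--Schwarz to the $L$-fold product. Using $\mathcal E^L(Z)=\sum_{\vec\omega}\widetilde A^{\vec\omega}ZA^{\vec\omega\dagger}$, I get
\[
\big|\Tr[W^\dagger\mathcal E^L(Z)]\big|^2\le \Tr\!\big[\widetilde{\mathcal T}^L(ZZ^\dagger)\text{-type term}\big]\cdot\Tr\!\big[\mathcal T^L(\cdot)\text{-type term}\big],
\]
or more precisely
\[
\le \Tr\!\big[W W^\dagger\,\widetilde{\mathcal T}^{L}(\Id)\big]\;\Tr\!\big[Z^\dagger Z\,\mathcal T^{*L}(\Id)\big]
\]
after suitable placement. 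Since $A$ is left-canonical (Remark~\ref{sm:rem:left_canonical}), the adjoint map satisfies $\mathcal T^{*}(\Id)=\Id$, so the second factor stays bounded uniformly in $L$. The first factor grows like $\nu^{L}$ up to polynomial factors in $L$ coming from Jordan blocks.

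\textbf{Step 3: extract the rate.} Step 2 gives $\|\mathcal E^L\|\le C\,\mathrm{poly}(L)\,\nu^{L/2}$. Gelfand's formula $\mu=\lim_L\|\mathcal E^L\|^{1/L}$ then yields $\mu\le\sqrt\nu$. Normalization of $\widetilde A$ gives $\nu=1$, and in the repaired case left-canonicity of $\widetilde A$ can alternatively be invoked via Theorem~\ref{thm:repair_certificate}(i). This yields $\mu\le1$.

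The main obstacle is placing the Cauchy--Schwarz split so that each factor closes under exactly one self transfer (one tilde-only, one untilded), especially when the bond spaces are rectangular. I would first try splitting $Z=Z^{1/2}\cdot Z^{1/2}$ for positive $Z$, and then handle general $Z$ by polarization into four positive parts. This suffices because the spectral radius is controlled by the norm of the map on all operators.
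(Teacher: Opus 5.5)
Your argument is correct, but it takes a somewhat different route from the paper.

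\textbf{The paper's route.} It applies Cauchy--Schwarz once, to the specific normalized window states, $|\langle\widetilde\Psi_L|\Psi_L\rangle|\le\|\widetilde\Psi_L\|\,\|\Psi_L\|$. It reads off the growth rates $\nu^{1/2}$ and $1$ of the two norms. It then relies on the nonzero-boundary-overlap assumption of Lemma~\ref{sm:lem:mixed_transfer_exact} to identify the decay rate of that particular overlap with $\mu$.

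\textbf{Your route.} You apply the same inequality uniformly over all boundary operators. Your Step~2 bound is exactly
\[
\big|\Tr[W^\dagger\mathcal E^L(Z)]\big|^2\le\Tr\!\big[WW^\dagger\,\widetilde{\mathcal T}^L(\Id)\big]\,\Tr\!\big[Z^\dagger Z\,\mathcal T^{*L}(\Id)\big]=\Tr\!\big[WW^\dagger\,\widetilde{\mathcal T}^L(\Id)\big]\,\|Z\|_2^2 ,
\]
using left-canonicity of the routed original tensor, which zero-padding does not disturb. Hence $\|\mathcal E^L\|_{2\to2}\le\|\widetilde{\mathcal T}^L(\Id)\|_\infty^{1/2}$, and Gelfand's formula gives $\spr(\mathcal E)\le\sqrt{\nu}$ as an unconditional spectral inequality.

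\textbf{What each buys.} In your version, $\mu\le\sqrt\nu$ no longer depends on any boundary condition. The overlap assumption is needed only afterwards, to identify $\RFQ$ with $-\tfrac12\log_2\mu$. Your version also directly covers unnormalized tensors such as $PA^sP$, for which $\nu\le1$. The paper's version is a one-liner, but it only controls one particular boundary contraction.

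\textbf{On your last paragraph.} The fallback you propose would not work as stated. Since $Z$ maps the original bond space into the reduced one, it is rectangular, so ``positive $Z$'' and polarization into four positive parts are not meaningful. None of that is needed, however. The pairing $\langle\widetilde A^{\vec\omega\dagger}W,\,ZA^{\vec\omega\dagger}\rangle$ that you already wrote is well defined for rectangular $W$ and $Z$. It closes under exactly one self transfer on each side, as displayed above, so the messy intermediate expressions in your Step~1 can simply be replaced by that display.
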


\begin{proof}
Cauchy--Schwarz for the overlap of finite windows of two states with norms
governed by their self transfers:
$|\braket{\widetilde\Psi_L|\Psi_L}|\le
\|\widetilde\Psi_L\|\,\|\Psi_L\|$, whose exponential rates are
$\nu^{1/2}$ and $1$.
\end{proof}

\begin{proposition}[Exact purified QFDR of a valid reduced agent]
\label{sm:prop:repaired_qfdr}
Let $\widetilde{\mathcal A}$ be any valid reduced agent on $P\cH_M$, and
let $\widetilde L_\omega$ and $L_\omega$ be its routed Kraus operators and
those of the original agent on a fixed common physical-label alphabet.  Under
the assumptions of Lemma~\ref{sm:lem:mixed_transfer_exact},
\begin{equation}
\RFQ\big(\Psi,\widetilde\Psi\big)
=-\tfrac12\log_2\mu,
\qquad
\mu=\spr\Big(Z\mapsto\textstyle\sum_\omega
\widetilde L_\omega Z L_\omega^\dagger\Big).
\label{sm:eq:raw_mixed_spectral_radius}
\end{equation}
This applies in particular to the stimulus-wise polar completion of Sec.~\ref{sm:sec:repair}.
\end{proposition}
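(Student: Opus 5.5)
The plan is to reduce Proposition~\ref{sm:prop:repaired_qfdr} to Lemma~\ref{sm:lem:mixed_transfer_exact} by checking that its hypotheses hold for the pair of routed tensor families $\{L_\omega\}$ on $\cH_C\otimes\cH_M$ and $\{\widetilde L_\omega\}$ on $\cH_C\otimes P\cH_M$.

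\textbf{Normalization.} First, each family must generate a normalized uniform history iMPS over the common alphabet. For the original agent this is Lemma~\ref{sm:lem:routed_isometry}. For the reduced agent, validity means $\sum_{y,\eta}\widetilde K^{(x)\dagger}_{y,\eta}\widetilde K^{(x)}_{y,\eta}=P$ for every $x$. Repeating the computation of Lemma~\ref{sm:lem:routed_isometry} with $\widetilde L_\omega=\sqrt{R(x,c'|c)}\,\ket{c'}\!\bra{c}\otimes\widetilde K^{(x)}_{y,\eta}$ then gives $\sum_\omega\widetilde L_\omega^\dagger\widetilde L_\omega=\Id_C\otimes P$. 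This is the identity on the reduced bond, so $\widetilde U_\cR$ is an isometry and its iterates are normalized. Zero-padded branches, introduced to align alphabets, contribute nothing to either sum. Hence both histories are normalized, left-canonical, and live on one physical label space, so their overlap $\braket{\widetilde\Psi_L|\Psi_L}$ is well defined.

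\textbf{Overlap as a transfer power.} Next, write the overlap explicitly as a contraction. Orthonormality of the labels $\ket{\omega}$ gives
\[
\braket{\widetilde\Psi_L|\Psi_L}=\overline{\Tr\big[\ell^\dagger\,\mathcal E^L(r)\big]},\qquad \mathcal E(Z)=\sum_\omega\widetilde L_\omega ZL_\omega^\dagger .
\]
Here $\mathcal E$ maps $\cB(\cH_{CM})\to\cB(\cH_C\otimes P\cH_M)$ by rectangular action, and $r,\ell$ are the stationary boundary operators. These are, for example, purifications of $\OmCM$ and of its reduced counterpart entering at the left end, with the open right end closed by the identity.

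\textbf{Spectral asymptotics.} Since $\mathcal E$ acts on a finite-dimensional space, its Jordan decomposition gives $\mathcal E^L(r)=\mu^L\big(c\,v+o(1)\big)$ up to polynomial factors, where $v$ is the dominant eigenvector. The nonzero-overlap assumption ensures $c\,\Tr[\ell^\dagger v]\neq0$. Taking $-\frac{1}{2L}\log_2|\cdot|$ and letting $L\to\infty$ then yields $-\tfrac12\log_2\mu$, with boundary factors and polynomial Jordan corrections contributing only $O(\log L/L)$. Lemma~\ref{sm:lem:cs_upper_bound} shows $\mu\le1$, so the rate is nonnegative. The polar completion of Sec.~\ref{sm:sec:repair} is then covered as a special case once it is confirmed to be valid, which is part (i) of Main Text Theorem~\mainref{thm:repair_certificate}.

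\textbf{Main obstacle.} The delicate step is the boundary and rectangularity issue. The mixed transfer is not a channel, and because the bond spaces differ it is not similar to one, so Perron--Frobenius arguments do not apply directly. The dominant eigenvalue could be degenerate or a peripheral complex pair, and the boundary could be orthogonal to the dominant sector. I would not try to prove these properties in general. Instead I would invoke the stated injectivity and nonzero-overlap hypotheses of Lemma~\ref{sm:lem:mixed_transfer_exact}. For peripheral phases I would take a $\limsup$/$\liminf$ argument: $|\mu|^L$ times a bounded almost-periodic sum that is nonvanishing along a subsequence of density one. This shows the limit exists and equals $-\tfrac12\log_2\mu$.
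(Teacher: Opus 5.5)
Your proposal is correct and follows the paper's route. The paper's proof is just two observations followed by an appeal to the mixed-transfer lemma. The first is that both routed histories are normalized uniform tensors; your computation $\sum_\omega\widetilde L_\omega^\dagger\widetilde L_\omega=\Id_C\otimes P$ from stimulus-wise validity, with zero-padded labels contributing nothing, is exactly the check it leaves implicit. The second is that the finite-window overlap is a boundary contraction of powers of the rectangular mixed transfer.

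Two corrections are needed.

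First, a typing slip. $\mathcal E$ must be an endomorphism of the rectangular space $\cB(\cH_{CM},\cH_C\otimes P\cH_M)$, not a map $\cB(\cH_{CM})\to\cB(\cH_C\otimes P\cH_M)$. Otherwise $\mathcal E^L$ and its spectral radius are undefined.

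Second, drop the peripheral-phase argument in your last paragraph, because it is false as stated.
\begin{itemize}
\item Normality of the two tensors does not force a single peripheral eigenvalue, since the mixed transfer is not positive. For example, $A^0=\sigma_z/\sqrt2$, $A^1=\sigma_x/2$, $A^2=\sigma_y/2$ against the bond-one tensor $\widetilde A=(1,0,0)$ gives mixed transfer $\sigma_z/\sqrt2$, with peripheral eigenvalues $\pm\mu$.
\item With equal boundary weights, the leading almost-periodic factor is then $b\,(1+(-1)^L)$. This vanishes for every odd $L$, so it is not nonvanishing on a density-one set.
\item Along those $L$ the overlap is governed by the subdominant spectrum, or vanishes outright, so the limit defining $\RFQ$ need not exist.
\item Control along a subsequence only shows that the $\liminf_L$ of the finite-$L$ rates $-\tfrac{1}{2L}\log_2|\langle\widetilde\Psi_L|\Psi_L\rangle|$ equals $-\tfrac12\log_2\mu$.
\end{itemize}

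The right move is to read the lemma's ``nonzero overlap with the dominant mixed-transfer sector'' hypothesis as including a unique dominant eigenvalue. This is how the main-text sketch uses it (``its dominant eigenvalue fixes the asymptotic rate''). Under that reading your Jordan-block argument already yields the limit, and the peripheral case never arises.
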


\begin{proof}
Both routed histories are normalized uniform tensors, and their finite-window
overlap is generated by the rectangular mixed transfer in
Eq.~\eqref{sm:eq:raw_mixed_spectral_radius}.  Lemma~\ref{sm:lem:mixed_transfer_exact} gives the asymptotic rate.
\end{proof}

\begin{corollary}[Reset completion]
\label{sm:cor:reset_qfdr}
For the reset completion of Sec.~\ref{sm:sec:repair}, the original tensor is zero on the added
recovery labels.  Its QFDR therefore reduces to
\begin{equation}
\widehat R_F^{(Q)}
=-\tfrac12\log_2\mu_P,
\qquad
\mu_P=\spr\Big(Z\mapsto\textstyle\sum_\omega
(P L_\omega P)Z L_\omega^\dagger\Big),
\label{sm:eq:reset_qfdr}
\end{equation}
where $P$ denotes the agent-local projector $\Id_C\otimes P_M$ on the routed
bond.
\end{corollary}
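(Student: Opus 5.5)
The plan is to obtain Corollary~\ref{sm:cor:reset_qfdr} as a direct specialisation of Proposition~\ref{sm:prop:repaired_qfdr}. The reset completion of Sec.~\ref{sm:sec:repair} is, by the zero-padding convention above, a valid reduced agent on $P\cH_M$ written on the enlarged common alphabet $\Omega\cup\Omega_{\mathrm{rec}}$. Here $\Omega$ is the original label set $\omega=(c,x,c',y,\eta)$ and $\Omega_{\mathrm{rec}}$ is the set of added recovery labels. Proposition~\ref{sm:prop:repaired_qfdr} then gives $\widehat R_F^{(Q)}=-\tfrac12\log_2\mu$, where
\[
\mu=\spr\Big(Z\mapsto\textstyle\sum_{\omega\in\Omega\cup\Omega_{\mathrm{rec}}}\widetilde L_\omega Z L_\omega^\dagger\Big).
\]
So the whole task reduces to identifying this rectangular mixed transfer map with the one defining $\mu_P$.

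The first step is to drop the recovery labels. For $\omega\in\Omega_{\mathrm{rec}}$ the original tensor is zero, $L_\omega=0$. Hence every term $\widetilde L_\omega Z L_\omega^\dagger$ with $\omega\in\Omega_{\mathrm{rec}}$ vanishes identically, whatever the recovery Kraus operators are. The sum therefore runs only over $\omega\in\Omega$.

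The second step is to identify $\widetilde L_\omega$ on the original labels. In the reset completion, the surviving branches are assumed to be left unrepaired: no $G_x^{-1/2}$ factor is applied, and the leaked weight is instead routed into the extra reset branches. Then $\widetilde K^{(x)}_{y,\eta}=P_MK^{(x)}_{y,\eta}P_M$. Lifting this to the routed bond via Eq.~\eqref{sm:eq:routed_kraus} gives
\[
\widetilde L_\omega=\sqrt{R(x,c'|c)}\,\ket{c'}\!\bra{c}\otimes P_MK^{(x)}_{y,\eta}P_M=(\Id_C\otimes P_M)L_\omega(\Id_C\otimes P_M)=PL_\omega P.
\]
This uses the fact that $\Id_C\otimes P_M$ commutes through the classical reference factor. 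Substituting this into the surviving sum yields exactly the map in Eq.~\eqref{sm:eq:reset_qfdr}, so $\mu=\mu_P$.

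The main obstacle is checking that the hypotheses of Lemma~\ref{sm:lem:mixed_transfer_exact} carry over to the padded pair, namely normalisation, primitivity and nonzero boundary overlap. Normalisation of the reduced tensor follows from validity of the reset completion, $\sum\widetilde L^\dagger\widetilde L=P$. Normalisation of the original tensor is unaffected by adding zeros. The remaining hypotheses I would take as the standing assumptions of Proposition~\ref{sm:prop:repaired_qfdr}.

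The other delicate point is rectangularity. The map $Z\mapsto\sum_\omega PL_\omega PZL_\omega^\dagger$ acts on operators from $\cH_C\otimes\cH_M$ into $P\cH$. Its spectral radius should be read on that rectangular space, and I would check that this agrees with the formula written on the full bond. This holds because the image is contained in $\{Z=PZ\}$, so nonzero eigenvectors lie in that invariant subspace.
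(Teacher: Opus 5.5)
Your proposal is correct and follows essentially the paper's own argument. You specialise Proposition~\ref{sm:prop:repaired_qfdr} to the reset completion, drop the recovery labels because the zero-padded original tensor annihilates those terms, and identify the keep-branch routed tensor as $(\Id_C\otimes P_M)L_\omega(\Id_C\otimes P_M)=PL_\omega P$. Your remarks on the rectangular domain (the map kills $QZ$, so its spectrum is that of its restriction to $\{Z=PZ\}$ together with $0$) and on inheriting the standing hypotheses make explicit what the paper's two-line proof leaves implicit.
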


\begin{proof}
The keep branch has tensor $PL_\omega P$, while every recovery branch pairs
with a zero tensor from the original history.  Equation~\eqref{sm:eq:reset_qfdr}
then follows from Proposition~\ref{sm:prop:repaired_qfdr}.
\end{proof}

\subsection{Data processing to the observed classical process}

Let \(\Lambda_L\) denote the fixed length-\(L\) readout map of Sec.~\ref{sm:sec:notation}. It traces
inaccessible environment registers and hidden reference-memory labels, measures
the visible stimulus and action symbols, and produces the reference-weighted
classical input--output distribution. For classical length-\(L\) input--output
distributions, we use the Bhattacharyya fidelity
\begin{equation}
F_{\mathrm{stat}}(p_L,\widetilde p_L):=
\sum_{x_{0:L-1},y_{0:L-1}}
\sqrt{p_L(x_{0:L-1},y_{0:L-1})
\widetilde p_L(x_{0:L-1},y_{0:L-1})}.
\label{sm:eq:classical_fidelity}
\end{equation}
The corresponding classical fidelity divergence rate~\cite{Yang_2020} is
\[
R_F^{(\mathrm{stat})}(p,\widetilde p):=
-\lim_{L\to\infty}\frac{1}{2L}\log_2 F_{\mathrm{stat}}(p_L,\widetilde p_L),
\]
whenever the limit exists.

\begin{lemma}[Data processing]
\label{sm:lem:dp}
Let
\[
\rho_L = |\Psi_L\rangle\!\langle\Psi_L|,
\qquad
\widetilde\rho_L = |\widetilde\Psi_L\rangle\!\langle\widetilde\Psi_L|,
\]
and let
\[
p_L=\Lambda_L(\rho_L),
\qquad
\widetilde p_L=\Lambda_L(\widetilde\rho_L)
\]
be the corresponding classical output distributions. Then
\begin{equation}
F_{\mathrm{stat}}(p_L,\widetilde p_L)\ge F(\rho_L,\widetilde\rho_L),
\label{sm:eq:dp_finite}
\end{equation}
and therefore
\begin{equation}
R_F^{(\mathrm{stat})}(p,\widetilde p)
\le
\RFQ(\Psi,\widetilde\Psi),
\label{sm:eq:dp_rate}
\end{equation}
whenever the corresponding limits exist.
\end{lemma}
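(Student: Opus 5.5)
The plan is to prove the finite-window inequality \eqref{sm:eq:dp_finite} from monotonicity of the Uhlmann fidelity under CPTP maps, and then pass to rates by taking $-\frac{1}{2L}\log_2$ and the limit.

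\emph{Step 1: identify the two fidelities.} For pure states the Uhlmann fidelity (in the root convention $F(\rho,\sigma)=\Tr\sqrt{\sqrt{\rho}\,\sigma\sqrt{\rho}}$) reduces to the overlap: $F(\rho_L,\widetilde\rho_L)=|\langle\widetilde\Psi_L|\Psi_L\rangle|$. This is exactly the quantity whose rate defines $\RFQ$ in Lemma~\ref{sm:lem:mixed_transfer_exact}.

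For commuting diagonal states $p=\sum_k p_k\ket{k}\!\bra{k}$ and $\widetilde p=\sum_k \widetilde p_k\ket{k}\!\bra{k}$, the same formula gives $\sum_k\sqrt{p_k\widetilde p_k}$. This is the Bhattacharyya coefficient $F_{\mathrm{stat}}$ of Eq.~\eqref{sm:eq:classical_fidelity}.

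\emph{Step 2: apply monotonicity.} The readout $\Lambda_L$ of Sec.~\ref{sm:sec:notation} is a composition of partial traces (over environments $E_t$ and hidden reference labels $c_t,c_{t+1}$), computational-basis measurements of $X_t,Y_t$ (a dephasing followed by relabelling into a classical register), and classical coarse-graining. Hence $\Lambda_L$ is CPTP. It is also the \emph{same} map applied to both states; this is where the fixed common label alphabet matters, including zero-padded branches, since both histories live on the same physical registers.

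Uhlmann monotonicity, $F(\Lambda(\rho),\Lambda(\sigma))\ge F(\rho,\sigma)$, then gives $F_{\mathrm{stat}}(p_L,\widetilde p_L)\ge|\langle\widetilde\Psi_L|\Psi_L\rangle|$. If boundary memory registers remain at the window ends, I would include tracing them out in $\Lambda_L$; monotonicity still applies.

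\emph{Step 3: pass to rates.} Since $-\log_2$ is decreasing, $-\frac{1}{2L}\log_2 F_{\mathrm{stat}}(p_L,\widetilde p_L)\le-\frac{1}{2L}\log_2|\langle\widetilde\Psi_L|\Psi_L\rangle|$ holds for every $L$. Taking $L\to\infty$ where both limits exist gives \eqref{sm:eq:dp_rate}.

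The main subtlety is normalization convention. The rate definition of $\RFQ$ uses the overlap itself, not its square, so I will check that $F_{\mathrm{stat}}$ is likewise the non-squared Bhattacharyya coefficient, making the factor $\frac{1}{2L}$ consistent on both sides. Otherwise the argument is direct.
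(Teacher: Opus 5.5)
Your proposal is correct and takes the same route as the paper: monotonicity of the (root) Uhlmann fidelity under the fixed CPTP readout $\Lambda_L$ gives the finite-window inequality, and applying $-\frac{1}{2L}\log_2$ and the limit gives the rate bound. Your extra checks make explicit what the paper's one-line proof leaves implicit: that the pure-state fidelity is the overlap, that the diagonal outputs give the non-squared Bhattacharyya coefficient (so the $\frac{1}{2L}$ factor matches on both sides), and that the common label alphabet and boundary memory legs are handled inside $\Lambda_L$.
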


\begin{proof}
Uhlmann fidelity is monotone under CPTP maps. Applying this to the fixed readout map
\(\Lambda_L\) gives Eq.~\eqref{sm:eq:dp_finite}. Dividing by \(2L\) and taking
\(L\to\infty\) yields Eq.~\eqref{sm:eq:dp_rate}.
\end{proof}

\subsection{Memoryless finite-window discarded-weight bound}

This subsection proves a finite-window estimate for a \emph{memoryless}
reference, where the retained projector is a spectral projector of the
canonical bond state. It applies to the projected history and, through
Corollary~\ref{sm:cor:reset_qfdr}, to the action-preserving reset completion;
it is not the certificate used to select the polar-completed agents in the
Main Text. Throughout, $\eps_M$ denotes the discarded weight at retained
dimension $\tilde{d}_q$ and $P$ the retained-weight projector of
Lemma~\ref{sm:lem:ky_fan}, with $\rhoM=\rho_\star$,
$\sum_s A^{s\dagger}A^s=\Id$, $T(\rho_\star)=\rho_\star$, and
$[P,\rho_\star]=0$. For a finite-memory reference,
$\Id_C\otimes P_M$ need not be a spectral projector of $\OmCM$, so the
Schmidt-projector proof below does not extend by direct substitution. The
exact mixed-transfer certificate of Proposition~\ref{sm:prop:repaired_qfdr}
remains valid on $\mathsf{C}\otimes \mathsf{M}$.

\paragraph*{Finite-window states and bond projectors as physical
projectors.}
For a window of length $L$, purify the stationary boundary and define
\begin{equation}
\ket{\Psi_L}
=\sum_{s_1\ldots s_L}
\big(A^{s_L}\cdots A^{s_1}\sqrt{\rho_\star}\big)\otimes
\ket{s_1\ldots s_L},
\end{equation}
understood as a vector in
$\cH_{M}\otimes\cH_{M'}\otimes\cH_{\mathrm{phys}}^{\otimes L}$ via the
Hilbert--Schmidt identification of the matrix slot ($\cH_{M'}$ purifies the
initial memory). Left-canonicality gives $\braket{\Psi_L|\Psi_L}=1$.
Because the window is generated by the isometries
$U_k:\cH_M\to\cH_M\otimes\cH_{{\mathrm{phys}},k}$, the insertion of the projector
$P$ on the bond at cut $k$ (between steps $k$ and $k{+}1$) is implemented by
the genuine physical-space projector
\begin{equation}
\widehat P_k:=V_{\le k}\,\big(P\otimes\Id\big)\,V_{\le k}^{\dagger},
\qquad
V_{\le k}:=U_k\cdots U_1,
\label{sm:eq:window_projector}
\end{equation}
since $V_{\le k}$ is an isometry, $\widehat P_k^2=\widehat P_k^\dagger
=\widehat P_k$. This \emph{left representation} is supported on the
registers of steps $1,\ldots,k$ and the purifier; equivalently,
$\widehat P_k$ is the projector onto the top-$\tilde{d}_q$ left-Schmidt subspace
of the cut $k$ bipartition of $\ket{\Psi_L}$. Because the iMPS is uniform
and left-canonical, the bond state at every cut is $\rho_\star$ and
\begin{equation}
\braket{\Psi_L|\widehat P_k|\Psi_L}=\Tr[P\rho_\star]=1-\eps_M
\qquad\text{for every }k.
\label{sm:eq:per_cut_bound}
\end{equation}
We write $\widehat Q_k=\Id-\widehat P_k$. The same projector also admits a
\emph{right} representation
\begin{equation}
\widehat P^{R}_k:=V_{>k}\,\big(P\otimes\Id\big)\,V_{>k}^{\dagger},
\qquad
V_{>k}:=U_L\cdots U_{k+1},
\end{equation}
acting only on the registers to the right of the cut (steps
$k{+}1,\ldots,L$ and the outgoing memory leg); the left- and right-Schmidt
projectors of a pure state agree on the state itself,
\begin{equation}
\widehat P_k\ket{\Psi_L}=\widehat P^{R}_k\ket{\Psi_L}.
\label{sm:eq:left_right_agree}
\end{equation}

\paragraph*{The projected window state.}
Define
\begin{equation}
\ket{\Phi_L}
:=\widehat P_1\widehat P_2\cdots\widehat P_L\ket{\Psi_L}
=\sum_{s}
\big(PA^{s_L}P\cdots PA^{s_1}\sqrt{\rho_\star}\big)\otimes\ket{s},
\end{equation}
the window state of the projected tensors $PA^sP$: applying
$\widehat P_L$ first inserts $P$ at cut $L$, and each subsequent
$\widehat P_k$ inserts $P$ at cut $k$ without disturbing the bonds above
it. The bond dimension of $\ket{\Phi_L}$ is at most $\tilde{d}_q$ at every cut,
$\|\Phi_L\|\le 1$, and by Corollary~\ref{sm:cor:reset_qfdr} its mixed
transfer with $\ket{\Psi_L}$ is generated by
$\sum_s (PA^sP)\,Z\,A^{s\dagger}$, with spectral radius $\mu_P$.

\begin{lemma}[Telescoped truncation bound]
\label{sm:lem:telescope_bound}
For every $L\ge1$,
\begin{equation}
\big|1-\braket{\Psi_L|\Phi_L}\big|\;\le\;L\,\eps_M.
\label{sm:eq:telescope_bound}
\end{equation}
\end{lemma}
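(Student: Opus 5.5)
We telescope the projector product. Write $\ket{\Phi_L}=\widehat P_1\cdots\widehat P_L\ket{\Psi_L}$ and set $\ket{\Phi^{(j)}}:=\widehat P_{j}\widehat P_{j+1}\cdots\widehat P_L\ket{\Psi_L}$ for $j=1,\dots,L$, with $\ket{\Phi^{(L+1)}}:=\ket{\Psi_L}$. Then
\begin{equation*}
\ket{\Psi_L}-\ket{\Phi_L}=\sum_{j=1}^{L}\big(\ket{\Phi^{(j+1)}}-\ket{\Phi^{(j)}}\big)=\sum_{j=1}^{L}\widehat Q_j\ket{\Phi^{(j+1)}} .
\end{equation*}
Taking the inner product with $\bra{\Psi_L}$ (normalised) gives
\begin{equation*}
1-\braket{\Psi_L|\Phi_L}=\sum_{j=1}^{L}\braket{\Psi_L|\widehat Q_j|\Phi^{(j+1)}} .
\end{equation*}
It therefore suffices to bound each term by $\eps_M$.

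The key is to move $\widehat Q_j$ onto the bra and past the projectors at later cuts. Since $\widehat Q_j$ is Hermitian, $\braket{\Psi_L|\widehat Q_j|\Phi^{(j+1)}}=\braket{\widehat Q_j\Psi_L|\widehat P_{j+1}\cdots\widehat P_L|\Psi_L}$. In the left representation, $\widehat P_k$ for $k>j$ is built from $V_{\le k}=U_k\cdots U_{j+1}V_{\le j}$, and inserting $P$ at cut $j$ commutes with inserting $P$ at a later cut. This is exactly the statement already used to build $\ket{\Phi_L}$: each $\widehat P_k$ inserts $P$ at its own bond without disturbing the others. Concretely, $\widehat Q_j\widehat P_k\ket{\Psi_L}$ and $\widehat P_k\widehat Q_j\ket{\Psi_L}$ both equal the window state with $(\Id-P)$ at cut $j$ and $P$ at cut $k$. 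This gives
\begin{equation*}
\braket{\Psi_L|\widehat Q_j|\Phi^{(j+1)}}=\bra{\Psi_L}\widehat P_{L}\cdots\widehat P_{j+1}\,\widehat Q_j\ket{\Psi_L}
\end{equation*}
(ordering reversed in the bra). Then Cauchy--Schwarz with $\|\widehat P_{j+1}\cdots\widehat P_L\Psi_L\|\le1$ yields
\begin{equation*}
\big|\braket{\Psi_L|\widehat Q_j|\Phi^{(j+1)}}\big|\le\|\widehat Q_j\Psi_L\|\,\|\widehat Q_j\widehat P_{j+1}\cdots\widehat P_L\Psi_L\| .
\end{equation*}

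This Cauchy--Schwarz step only gives $\sqrt{\eps_M}$ per factor. To obtain $\eps_M$ per term rather than $\sqrt{\eps_M}$, I would use $\widehat Q_j=\widehat Q_j^2$ and bound both factors by $\sqrt{\eps_M}$. The first factor is immediate: $\|\widehat Q_j\Psi_L\|^2=\eps_M$ by Eq.~\eqref{sm:eq:per_cut_bound}.

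The second factor, $\|\widehat Q_j\widehat P_{j+1}\cdots\widehat P_L\Psi_L\|^2$, is the main obstacle. It is the weight at cut $j$ in the partially projected state. I would compute it via the transfer map: it equals $\Tr[(\Id-P)\,\sigma]$, where $\sigma$ is obtained from $\rho_\star$ by propagating through steps $\le j$ with the unprojected channel, and then attaching on the right the effect $\mathcal{E}(\Id)$, where $\mathcal{E}(\cdot)=\sum_s A^{s\dagger}P(\cdot)PA^s$ is the projected dual map iterated $L-j$ times. This effect is bounded above by $\Id$, because $P\le\Id$ and the dual channel is unital by left-canonicality. Hence the weight is at most $\Tr[(\Id-P)\rho_\star]=\eps_M$, using $[P,\rho_\star]=0$ and stationarity of $\rho_\star$. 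The right representation, Eq.~\eqref{sm:eq:left_right_agree}, is what allows the later projectors to be read as effects acting from the right, so that this operator-ordering argument goes through.

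Summing over $j$ then gives $L\eps_M$. The delicate part is checking carefully that the commutation of bond insertions and the effect ordering hold as stated.
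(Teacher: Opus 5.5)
Your argument is correct, but it takes a different route from the paper's. One intermediate claim in it is false, though the proof does not depend on it.

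\textbf{Comparison with the paper.} The paper telescopes in the opposite order, $\Id-\widehat P_1\cdots\widehat P_L=\sum_k\widehat P_1\cdots\widehat P_{k-1}\widehat Q_k$. This way $\widehat Q_k$ acts directly on $|\Psi_L\rangle$ and can be replaced by its right representation $\widehat Q^R_k$. That operator is supported on steps $k+1,\dots,L$ and the outgoing leg, disjoint from the support of $\mathcal O=\widehat P_1\cdots\widehat P_{k-1}$. Each term then becomes $\langle\widehat Q^R_k\Psi_L|\mathcal O|\widehat Q^R_k\Psi_L\rangle$, bounded by $\|\widehat Q^R_k\Psi_L\|^2=\eps_M$, with no Cauchy--Schwarz splitting and no transfer-map computation.

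In your ordering, $\widehat Q_j$ sits to the left of the later-cut projectors, whose supports overlap the right registers of cut $j$, so that trick is unavailable. You instead split each term via $\widehat Q_j=\widehat Q_j^\dagger=\widehat Q_j^2$ into $\|\widehat Q_j\Psi_L\|\,\|\widehat Q_j\Phi^{(j+1)}\|$. You then bound the second factor by $\Tr[(\Id-P)\,\mathcal E^{L-j}(\Id)\,(\Id-P)\,\rho_\star]\le\eps_M$, using $\mathcal E^{L-j}(\Id)\le\Id$ (from left-canonicality) and stationarity. This is valid. Since $|\Phi^{(j+1)}\rangle$ is unmodified to the left of cut $j$, the left-representation $\widehat Q_j$ simply inserts $\Id-P$ at that bond; this is the same fact the paper uses to build $|\Phi_L\rangle$.

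Your route costs a short calculation but gives an explicit transfer-map expression for each term. It never actually uses the right representation, despite your closing remark. The paper's route is shorter and purely structural.

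\textbf{The false claim.} You should drop the commutation claim. For $k>j$, $\widehat P_k\widehat Q_j|\Psi_L\rangle$ is \emph{not} the window state with $\Id-P$ at cut $j$ and $P$ at cut $k$. It lies in the range of $\widehat P_k$, which is the fixed span of the top left-Schmidt vectors of the \emph{unmodified} state at cut $k$. After inserting $\Id-P$ at cut $j$, the left vectors at cut $k$ generally leave that span. Inserting projectors into bond matrices commutes, but the physical operators $\widehat P_k$ and $\widehat Q_j$ do not.

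Nothing downstream needs this claim. Your displayed identity holds trivially up to complex conjugation, since $\langle\Psi_L|\widehat P_L\cdots\widehat P_{j+1}\widehat Q_j|\Psi_L\rangle=\overline{\langle\Psi_L|\widehat Q_j|\Phi^{(j+1)}\rangle}$, and that suffices for the modulus. The Cauchy--Schwarz bound follows directly from $\langle\Psi_L|\widehat Q_j|\Phi^{(j+1)}\rangle=\langle\widehat Q_j\Psi_L|\widehat Q_j\Phi^{(j+1)}\rangle$.
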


\begin{proof}
The exact operator identity
$\Id-\widehat P_1\cdots\widehat P_L
=\sum_{k=1}^{L}\widehat P_1\cdots\widehat P_{k-1}\widehat Q_k$
gives
$1-\braket{\Psi_L|\Phi_L}
=\sum_k\braket{\Psi_L|\widehat P_1\cdots\widehat P_{k-1}
\widehat Q_k|\Psi_L}$.
Fix $k$ and bound the $k$-th term. By Eq.~\eqref{sm:eq:left_right_agree},
$\widehat Q_k\ket{\Psi_L}=\widehat Q^{R}_k\ket{\Psi_L}$ with
$\widehat Q^R_k=\Id-\widehat P^R_k$ supported on the registers of steps
$k{+}1,\ldots,L$ and the outgoing memory leg. The operator
$\mathcal O:=\widehat P_1\cdots\widehat P_{k-1}$ (left representations) is
supported on the registers of steps $1,\ldots,k-1$ and the purifier leg, a
disjoint set, so $[\mathcal O,\widehat Q^R_k]=0$. Using idempotence of
$\widehat Q^R_k$ and this commutation,
\[
\braket{\Psi_L|\mathcal O\,\widehat Q_k|\Psi_L}
=\braket{\Psi_L|\mathcal O\,\widehat Q^{R}_k\widehat Q^{R}_k|\Psi_L}
=\braket{\widehat Q^{R}_k\Psi_L|\,\mathcal O\,|\widehat Q^{R}_k\Psi_L},
\]
whence
$\big|\braket{\Psi_L|\mathcal O\widehat Q_k|\Psi_L}\big|
\le\|\mathcal O\|\;\|\widehat Q^{R}_k\Psi_L\|^2
\le 1\cdot\eps_M$
by $\|\mathcal O\|\le1$ (product of projectors) and
Eqs.~\eqref{sm:eq:per_cut_bound}--\eqref{sm:eq:left_right_agree}. Summing over $k$ proves
Eq.~\eqref{sm:eq:telescope_bound}.
\end{proof}

\begin{theorem}[A-priori discarded-weight bound]
\label{sm:thm:apriori_bound}
Under the assumptions of Lemma~\ref{sm:lem:mixed_transfer_exact}, and assuming the relevant
mixed transfer is diagonalizable, there is a
constant $B\ge0$, depending only on the boundary overlaps of the mixed
transfer eigenbasis (and not on $L$), such that for every integer
$1\le L<1/\eps_M$,
\begin{equation}
\hRFQ
=-\tfrac12\log_2\mu_P
\;\le\;
-\frac{1}{2L}\log_2\!\big(1-L\eps_M\big)+\frac{B}{L}
\;\le\;
\frac{\eps_M}{2(1-L\eps_M)\ln 2}+\frac{B}{L}.
\label{sm:eq:apriori_bound}
\end{equation}
If, along a family of truncations, the boundary constant $B$ remains
uniformly bounded, choosing $L=\lceil 1/(2\eps_M)\rceil$ gives
\begin{equation}
\hRFQ\;\le\;\Big(\frac{1}{\ln 2}+2B+o(1)\Big)\,\eps_M
\qquad(\eps_M\to0).
\end{equation}
for the reset completion and the normalized projected truncation. Without
uniform control of $B$ (or, equivalently, of the conditioning and
nonnormality of the mixed transfer), Eq.~\eqref{sm:eq:apriori_bound} is a
fixed-truncation estimate and does not by itself imply a general
$O(\eps_M)$ asymptotic law. The same fixed-truncation bound holds for
$\ket{\Phi_L}/\|\Phi_L\|$, since $\|\Phi_L\|\le1$ makes its normalized
overlap no smaller.
\end{theorem}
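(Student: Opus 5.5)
The plan is to turn the telescoped overlap bound of Lemma~\ref{sm:lem:telescope_bound} into a rate bound, via the spectral decomposition of the mixed transfer from Lemma~\ref{sm:lem:mixed_transfer_exact}.

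The first step is to write the finite-window overlap as a boundary contraction. Let $\mathcal E_P(Z)=\sum_s (PA^sP)ZA^{s\dagger}$. Then
\[
\braket{\Psi_L|\Phi_L}=\Tr\big[\mathcal E_P^{L}(\sqrt{\rho_\star}\,P\sqrt{\rho_\star})\big]
\]
up to the placement of the boundary projector; any fixed boundary operator suffices. Assuming diagonalizability, we expand $\mathcal E_P^L=\sum_j \mu_j^L\,\Pi_j$, where the $\Pi_j$ are spectral projectors and $|\mu_1|=\mu_P$ is dominant. This gives
\[
\braket{\Psi_L|\Phi_L}=\sum_j c_j\mu_j^L ,
\]
with $L$-independent coefficients $c_j$ fixed by the boundary overlaps. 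The triangle inequality then yields $|\braket{\Psi_L|\Phi_L}|\le C\mu_P^L$ with $C=\sum_j|c_j|$. We set $B:=\tfrac12\log_2\max(C,1)\ge0$. The nonzero-overlap assumption ensures that $c_1\neq0$, so $\mu_P$ really is the rate; the upper bound alone does not need it.

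The second step combines this with the lower bound. By Lemma~\ref{sm:lem:telescope_bound}, $|\braket{\Psi_L|\Phi_L}|\ge 1-L\eps_M>0$ for $L<1/\eps_M$. Hence $1-L\eps_M\le C\mu_P^L$. Taking logarithms and dividing by $-2L$ gives
\[
-\tfrac12\log_2\mu_P\le -\frac{1}{2L}\log_2(1-L\eps_M)+\frac{\tfrac12\log_2 C}{L}\le -\frac{1}{2L}\log_2(1-L\eps_M)+\frac{B}{L},
\]
which is the first inequality. For the second we use $-\ln(1-u)\le u/(1-u)$ with $u=L\eps_M$:
\[
-\frac{1}{2L}\log_2(1-L\eps_M)\le \frac{L\eps_M}{2L(1-L\eps_M)\ln2}=\frac{\eps_M}{2(1-L\eps_M)\ln 2}.
\]

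For the asymptotic statement, take $L=\lceil 1/(2\eps_M)\rceil$, so that $L\eps_M\to\tfrac12$ from above with correction $O(\eps_M)$. Then $1/(1-L\eps_M)\to 2$, and the first term becomes $(\tfrac{1}{\ln2}+o(1))\eps_M$. Since $1/L\le 2\eps_M$, the second term is at most $2B\eps_M$. We still need $L<1/\eps_M$, which holds once $\eps_M<1/2$, i.e.\ for $\eps_M$ small. Uniform boundedness of $B$ along the family is used precisely here.

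The same argument covers the reset completion by Corollary~\ref{sm:cor:reset_qfdr}, whose QFDR is exactly $-\tfrac12\log_2\mu_P$. For the normalized $\ket{\Phi_L}/\|\Phi_L\|$, dividing by $\|\Phi_L\|\le1$ only increases the overlap, so the lower bound $1-L\eps_M$ persists, while the upper side has the same exponential rate.

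The main obstacle is the first step. We must ensure that the boundary operator and the dominant spectral projector give a constant $C$ independent of $L$, and we must handle the possibility $C<1$ when defining $B\ge0$. Diagonalizability is what removes the polynomial Jordan factors that would otherwise spoil an $L$-independent $C$.
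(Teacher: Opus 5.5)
Your proposal is correct and follows the paper's proof essentially step for step. The paper also expands the overlap in the diagonalizable mixed-transfer eigenbasis to get $|\braket{\Psi_L|\Phi_L}|\le C\mu_P^L$, sets $B=\tfrac12\log_2\max(C,1)$, combines this with the telescoped lower bound $1-L\eps_M$, and finishes with $-\log_2(1-x)\le x/((1-x)\ln 2)$.

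One small slip concerns the normalized state. $\ket{\Phi_L}/\|\Phi_L\|$ does not have "the same exponential rate": its overlap decays like $(\mu_P/\sqrt{\nu_P})^L$, where $\nu_P\le 1$ is the self-transfer radius of $PA^sP$. However, that normalized overlap is pointwise no smaller than the unnormalized one, so its rate is at most $-\tfrac12\log_2\mu_P$ and the bound transfers directly. This monotonicity argument is how the paper justifies the normalized case.
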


\begin{proof}
Expand the window overlap in the spectral decomposition of the mixed
transfer:
$\braket{\Psi_L|\Phi_L}=\sum_j \mu_j^{L}\,b_j$
with $|\mu_j|\le\mu_P$ and boundary coefficients $b_j$. Hence
$|\braket{\Psi_L|\Phi_L}|\le \mu_P^{\,L}\,C$ with
$C:=\sum_j|b_j|$. Lemma~\ref{sm:lem:telescope_bound} gives
$|\braket{\Psi_L|\Phi_L}|\ge1-L\eps_M>0$, so
$\mu_P\ge\big((1-L\eps_M)/C\big)^{1/L}$. Taking base-two logarithm and halving yields Eq.~\eqref{sm:eq:apriori_bound} with
$B=\tfrac12\log_2\max(C,1)$; the final inequality is
$-\log_2(1-x)\le x/((1-x)\ln2)$. The reset-agent identification is
Corollary~\ref{sm:cor:reset_qfdr}.
\end{proof}

\begin{observation}[Empirical discarded-weight guide]
\label{sm:obs:envelope_bound}
In the manuscript benchmark data, the computed rates lie below the
parameter-free curve
\[
\RFQ\le-\tfrac12\log_2\!\big(1-\eps_M(\tilde{d}_q)\big),
\]
We have not found a general proof, and the natural per-cut strengthening
$\mu_P\ge\sqrt{1-\eps_M}$ is false (explicit counterexamples), so we report the
curve only as a numerical guide. It is not used to select the retained
dimension, which is set by the exact mixed-transfer rate.
\end{observation}

\subsection{Entropy-tail comparison for the projected history}

Applying the entropy-tail estimate of
Ref.~\cite{yang2024dimensionreductionquantumsampling} to the canonical history iMPS gives
\begin{equation}
R_F^{(\mathrm{stat})}(p_{\cR},\widetilde p_{\cR})
\le
c'_{\cR}\,
\frac{S(\rho_\star^{(\cR)})}{\log_2 \tilde{d}_q},
\qquad
(\tilde{d}_q\ge 2),
\label{sm:eq:entropy_tradeoff}
\end{equation}
for some constant \(c'_{\cR}>0\). This gives a complementary
entropy-controlled comparison for the normalized projected history. It is
not a stimulus-wise validity guarantee for the polar-completed agent and is
not used in the Main Text theorem or figures.

\section{Repairing a truncation to a valid agent}
\label{sm:sec:repair}

This Section proves the validity part of Main Text Theorem~\mainref{thm:repair_certificate}.
Canonical truncation of the routed history iMPS identifies a retained memory subspace, but
it does not by itself define a stimulus-conditioned quantum instrument on that subspace. If
the original agent is applied to a state supported on the retained memory, part of the
resulting state can leave the retained subspace. This section records two local completions.
Sec.~\ref{sm:subsec:polar_completion} gives the stimulus-wise polar completion used in
the Letter. Secs.~\ref{sm:subsec:reset_completion}--\ref{sm:subsec:finite_horizon_reset}
give an alternative reset completion that preserves one-step action probabilities exactly and
admits finite-horizon bounds.

For the reset construction of
Secs.~\ref{sm:subsec:reset_completion}--\ref{sm:subsec:finite_horizon_reset}, let
\[
\rho_\star^{(\cR)}
=
\sum_{i=1}^{d_q} \lambda_i |i\rangle\!\langle i|,
\qquad
\lambda_1\ge \lambda_2\ge\cdots\ge \lambda_{d_q},
\]
be the stationary memory state of the driven channel \(\Phi_{\cR}\), and let
\[
P=\sum_{i=1}^{\tilde{d}_q} |i\rangle\!\langle i|,
\qquad
Q=\Id-P,
\qquad
\eps_M=\Tr[Q\rho_\star^{(\cR)}]
=\sum_{i>\tilde{d}_q}\lambda_i.
\]
We assume the support condition
\[
\Tr[P\rho_\star^{(\cR)}]>0,
\qquad\text{equivalently } 0\le\eps_M<1,
\]
which holds automatically for the leading-eigenspace truncations of Sec.~\ref{sm:sec:agent_local} with
\(\eps_M<1\); if instead \(\Tr[P\rho_\star^{(\cR)}]=0\) the
projector retains no stationary weight, and one may complete the instrument
with an arbitrary reset state on \(P\cH_M\), though no accuracy certificate
then applies. Under the support condition the normalised retained stationary memory
\begin{equation}
\bar\rho
=
\frac{P\rho_\star^{(\cR)}P}{1-\eps_M}
\label{sm:eq:rho_bar}
\end{equation}
is well defined.

\subsection{Action-preserving reset completion}
\label{sm:subsec:reset_completion}

As a complementary repair, define the reset channel
\begin{equation}
\mathcal C_\star:\mathcal B(Q\cH_M)\to\mathcal B(P\cH_M),
\qquad
\mathcal C_\star(\sigma)=\Tr[\sigma]\,\bar\rho.
\label{sm:eq:reset_channel}
\end{equation}
Let \(\{B_a:Q\cH_M\to P\cH_M\}_a\) be any Kraus representation of
\(\mathcal C_\star\), so that
\begin{equation}
\sum_a B_a\sigma B_a^\dagger=\Tr[\sigma]\,\bar\rho,
\qquad
\sum_a B_a^\dagger B_a=Q.
\label{sm:eq:reset_kraus}
\end{equation}
For each stimulus \(x\), define Kraus operators on the retained memory space
\(P\cH_M\) by
\begin{equation}
\widehat K^{(x)}_{y,\eta,0}=PK^{(x)}_{y,\eta}P,
\qquad
\widehat K^{(x)}_{y,\eta,a}
=
B_a QK^{(x)}_{y,\eta}P.
\label{sm:eq:reset_repair_kraus}
\end{equation}
The label \(0\) denotes the branch in which the original update remains in the retained
subspace; the labels \(a\) denote recovery branches after leakage into \(Q\cH_M\). These are
hidden recovery labels, not new observed actions. The observed action remains \(y\), and
observed statistics are obtained by summing over \(\eta\) and the recovery label.

\begin{proposition}[Validity of the reset-completed agent]
\label{sm:prop:reset_valid}
For every stimulus \(x\), the Kraus family
\(\{\widehat K^{(x)}_{y,\eta,b}\}_{y,\eta,b}\), with
\(b\in\{0\}\cup\{a\}\), defines a trace-preserving quantum instrument on
\(P\cH_M\):
\[
\sum_{y,\eta,b}
\widehat K^{(x)\dagger}_{y,\eta,b}
\widehat K^{(x)}_{y,\eta,b}
=P.
\]
Moreover, for any retained input state \(\rho=P\rho P\), the one-step action statistics
agree exactly with those of the original agent:
\[
\sum_{\eta,b}
\Tr\!\left[
\widehat K^{(x)}_{y,\eta,b}\rho
\widehat K^{(x)\dagger}_{y,\eta,b}
\right]
=
\sum_{\eta}
\Tr\!\left[
K^{(x)}_{y,\eta}\rho K^{(x)\dagger}_{y,\eta}
\right].
\]
\end{proposition}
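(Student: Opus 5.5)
The plan is a direct computation in two parts, using only the per-stimulus completeness relation $\sum_{y,\eta}K^{(x)\dagger}_{y,\eta}K^{(x)}_{y,\eta}=\Id_M$, the decomposition $\Id_M=P+Q$, and the reset Kraus identities in Eq.~\eqref{sm:eq:reset_kraus}. All operators are regarded as acting on $P\cH_M$. In particular, each $\widehat K^{(x)}_{y,\eta,a}$ maps $P\cH_M\to P\cH_M$, because $B_a$ has range in $P\cH_M$.

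\textbf{Completeness.} I would split the sum over $b$ into the keep branch and the recovery branches. The keep branch contributes
\[
\sum_{y,\eta}PK^{(x)\dagger}_{y,\eta}PK^{(x)}_{y,\eta}P .
\]
The recovery branches contribute
\[
\sum_{y,\eta,a}PK^{(x)\dagger}_{y,\eta}QB_a^\dagger B_aQK^{(x)}_{y,\eta}P .
\]
Since $\sum_aB_a^\dagger B_a=Q$ and $Q^3=Q$, the recovery sum collapses to $\sum_{y,\eta}PK^{(x)\dagger}_{y,\eta}QK^{(x)}_{y,\eta}P$. Adding the two pieces and using $P+Q=\Id_M$ gives
\[
P\Big(\sum_{y,\eta}K^{(x)\dagger}_{y,\eta}K^{(x)}_{y,\eta}\Big)P=P\Id_MP=P .
\]
This is the claimed trace preservation on $P\cH_M$.

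\textbf{Action statistics.} I would fix $y$ and $\eta$ and take a retained input $\rho=P\rho P$. The keep branch gives $\Tr[PK\rho K^\dagger P]$, with $K=K^{(x)}_{y,\eta}$. For the recovery branches, cyclicity of the trace together with $\sum_aB_a^\dagger B_a=Q$ gives
\[
\sum_a\Tr[B_aQK\rho K^\dagger QB_a^\dagger]=\Tr[QK\rho K^\dagger Q].
\]
Equivalently, this is trace preservation of $\mathcal C_\star$ applied to the operator $QK\rho K^\dagger Q$. Summing the two contributions and using $\Tr[PXP]+\Tr[QXQ]=\Tr[X]$ for $X=K\rho K^\dagger$ (since $P+Q=\Id$ and $PQ=0$) recovers $\Tr[K\rho K^\dagger]$. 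Here $\rho=P\rho P$ is needed only so that $\widehat K$ acting on $\rho$ agrees with the original operator, i.e.\ $K\rho=KP\rho$. Summing over $\eta$ then gives the stated identity.

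\textbf{Main obstacle.} There is no real obstacle; the one point that needs care is domain bookkeeping. The reset Kraus operators $B_a$ are defined only on $Q\cH_M$, so I would extend them by zero on $P\cH_M$, i.e.\ identify $B_a$ with $B_aQ$. With that convention the relation $\sum_aB_a^\dagger B_a=Q$ is an operator identity on $\cH_M$. I would also remark that the existence of such a Kraus family follows from complete positivity of the replacement channel $\mathcal C_\star$. This uses that $\bar\rho$ is a well-defined state, which holds under the support condition $\eps_M<1$.
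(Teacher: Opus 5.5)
Your proposal is correct and follows essentially the paper's own argument: split the sum into the keep branch and the recovery branches, collapse the recovery terms using $\sum_a B_a^\dagger B_a = Q$, then recombine with $P+Q=\Id$ and per-stimulus completeness, with the action-statistics identity following from the same decomposition under the trace. Your extra bookkeeping is harmless and slightly tightens the presentation: you extend $B_a$ by zero on $P\cH_M$, and you note that the identity already holds for each fixed $(y,\eta)$ before summing over $\eta$.
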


\begin{proof}
Using Eq.~\eqref{sm:eq:reset_kraus},
\[
\begin{aligned}
\sum_{y,\eta,b}
\widehat K^{(x)\dagger}_{y,\eta,b}
\widehat K^{(x)}_{y,\eta,b}
&=
\sum_{y,\eta}
P K^{(x)\dagger}_{y,\eta}P K^{(x)}_{y,\eta}P
+
\sum_{y,\eta,a}
P K^{(x)\dagger}_{y,\eta}Q B_a^\dagger B_a QK^{(x)}_{y,\eta}P
\\
&=
\sum_{y,\eta}
P K^{(x)\dagger}_{y,\eta}(P+Q)K^{(x)}_{y,\eta}P
\\
&=
P,
\end{aligned}
\]
where the last line uses completeness of the original \(x\)-conditioned instrument. For the
action statistics, fix \(y\) and \(\rho=P\rho P\). The same decomposition gives
\[
\begin{aligned}
\sum_{\eta,b}
\Tr\!\left[
\widehat K^{(x)}_{y,\eta,b}\rho
\widehat K^{(x)\dagger}_{y,\eta,b}
\right]
&=
\sum_{\eta}
\Tr\!\left[
P K^{(x)}_{y,\eta}\rho K^{(x)\dagger}_{y,\eta}P
\right]
+
\sum_{\eta}
\Tr\!\left[
Q K^{(x)}_{y,\eta}\rho K^{(x)\dagger}_{y,\eta}Q
\right]
\\
&=
\sum_{\eta}
\Tr\!\left[
K^{(x)}_{y,\eta}\rho K^{(x)\dagger}_{y,\eta}
\right],
\end{aligned}
\]
as claimed.
\end{proof}

\paragraph*{Scope.} The repaired agent is valid for every stimulus
\(x\): the reference input process is used to choose and to certify the
truncation, not to make the repaired maps valid. In the finite-memory
setting the repair acts on the agent factor only, with
\(\rho_\star^{(\cR)}\to\rho_M^{(\cR)}\) throughout; the
reference memory \(\mathsf{C}\) is external and untouched. The leakage and
finite-horizon estimates below are stated for memoryless references.

\subsection{Leakage operator and recovery probability}
\label{sm:subsec:leakage}

For a memoryless reference input process \(\cR\), define the leakage operator
\begin{equation}
\varsigma_{\cR}
=
\sum_x p_{\cR}(x)
\sum_{y,\eta}
P K^{(x)\dagger}_{y,\eta}QK^{(x)}_{y,\eta}P,
\qquad
\gamma_{\cR}:=\|\varsigma_{\cR}\|_\infty.
\label{sm:eq:leakage_operator}
\end{equation}
For any retained memory state \(\rho=P\rho P\), the probability that one step of the
reset-completed agent uses a recovery branch under the reference input process is
\begin{equation}
\ell_{\cR}(\rho)
=
\Tr[\varsigma_{\cR}\rho].
\label{sm:eq:leakage_probability}
\end{equation}
In particular, \(\ell_{\cR}(\rho)\le \gamma_{\cR}\).

\begin{proposition}[Leakage of the retained stationary memory]
\label{sm:prop:leakage_retained_stationary}
The retained stationary memory obeys
\begin{equation}
\ell_{\cR}(\bar\rho)
\le
\frac{\eps_M}{1-\eps_M}.
\label{sm:eq:leakage_eps_bound}
\end{equation}
Moreover, if \(\lambda_{\tilde{d}_q}\) is the smallest retained eigenvalue of
\(\rho_\star^{(\cR)}\), then
\begin{equation}
\gamma_{\cR}
\le
\min\!\left\{1,\frac{\eps_M}{\lambda_{\tilde{d}_q}}\right\}.
\label{sm:eq:gamma_eps_lambda_bound}
\end{equation}
\end{proposition}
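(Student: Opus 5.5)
The plan is to use only the stationarity of $\rho_\star^{(\cR)}$ under $\Phi_{\cR}$, together with the fact that $P$ is a spectral projector of $\rho_\star:=\rho_\star^{(\cR)}$. The latter gives $[P,\rho_\star]=0$, and hence the block decomposition $\rho_\star=P\rho_\star P+Q\rho_\star Q$, with $\Tr[P\rho_\star P]=1-\eps_M$ and $\Tr[Q\rho_\star Q]=\eps_M$.

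First, I relate the leakage functional to the driven channel. For any $\rho=P\rho P$, cyclicity of the trace and Eq.~\eqref{sm:eq:memoryless_reference_channel} give
\[
\Tr[\varsigma_{\cR}\rho]=\sum_x p_{\cR}(x)\sum_{y,\eta}\Tr\!\big[QK^{(x)}_{y,\eta}\rho K^{(x)\dagger}_{y,\eta}Q\big]=\Tr\!\big[Q\,\Phi_{\cR}(\rho)\big].
\]
So $\ell_{\cR}(\rho)$ is exactly the weight that one driven step sends into $Q\cH_M$.

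Next, I apply stationarity. From $\Phi_{\cR}(\rho_\star)=\rho_\star$ and linearity,
\[
\eps_M=\Tr[Q\rho_\star]=\Tr\!\big[Q\,\Phi_{\cR}(P\rho_\star P)\big]+\Tr\!\big[Q\,\Phi_{\cR}(Q\rho_\star Q)\big].
\]
The second term is nonnegative, since $\Phi_{\cR}$ is completely positive and $Q\rho_\star Q\ge0$. Hence
\[
\Tr[\varsigma_{\cR}\,P\rho_\star P]=\Tr\!\big[Q\,\Phi_{\cR}(P\rho_\star P)\big]\le\eps_M .
\]
Substituting $P\rho_\star P=(1-\eps_M)\bar\rho$ from Eq.~\eqref{sm:eq:rho_bar} yields Eq.~\eqref{sm:eq:leakage_eps_bound}.

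For Eq.~\eqref{sm:eq:gamma_eps_lambda_bound}, I bound the operator norm of $\varsigma_{\cR}$ in two ways.

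\emph{The bound $\gamma_{\cR}\le1$.} Since $Q\le\Id$, we have $K^\dagger QK\le K^\dagger K$ for each Kraus operator. Summing and using the per-stimulus completeness Eq.~\eqref{sm:eq:kraus_completeness} together with $\sum_x p_{\cR}(x)=1$ gives $0\le\varsigma_{\cR}\le P$, hence $\gamma_{\cR}\le1$.

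\emph{The bound $\gamma_{\cR}\le\eps_M/\lambda_{\tilde d_q}$.} On $P\cH_M$ the retained eigenvalues are at least $\lambda_{\tilde d_q}$, so $P\rho_\star P\ge\lambda_{\tilde d_q}P$. Because $\varsigma_{\cR}\ge0$ is supported on $P\cH_M$, this gives
\[
\Tr[\varsigma_{\cR}P\rho_\star P]\ge\lambda_{\tilde d_q}\Tr[\varsigma_{\cR}]\ge\lambda_{\tilde d_q}\|\varsigma_{\cR}\|_\infty .
\]
Combining this with the previous display gives $\gamma_{\cR}\le\eps_M/\lambda_{\tilde d_q}$, and taking the minimum with the first bound completes the argument.

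There is no real obstacle here. The only points needing care are the positivity step that discards $\Tr[Q\Phi_{\cR}(Q\rho_\star Q)]$, and the use of the commutation $[P,\rho_\star]=0$ that makes the block decomposition exact. Without that commutation, cross terms $P\rho_\star Q$ would appear; this is the same reason the subsection restricts to memoryless references. If $\lambda_{\tilde d_q}=0$, the second bound is read as vacuous and $\gamma_{\cR}\le1$ remains.
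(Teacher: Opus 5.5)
Your proposal is correct and follows the paper's proof essentially step for step. The common steps are: the block decomposition $\rho_\star^{(\cR)}=P\rho_\star^{(\cR)}P+Q\rho_\star^{(\cR)}Q$; stationarity plus positivity to get $\Tr[\varsigma_{\cR}P\rho_\star^{(\cR)}P]\le\eps_M$; then $0\le\varsigma_{\cR}\le P$ and $P\rho_\star^{(\cR)}P\ge\lambda_{\tilde{d}_q}P$ combined with $\|\varsigma_{\cR}\|_\infty\le\Tr\varsigma_{\cR}$. The only cosmetic difference is that you drop the nonnegative $Q$-block term explicitly, whereas the paper phrases that same step as monotonicity of $\Tr[Q\Phi_{\cR}(\cdot)]$ under $P\rho_\star^{(\cR)}P\le\rho_\star^{(\cR)}$.
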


\begin{proof}
Since \(P\) is a spectral projector of \(\rho_\star^{(\cR)}\),
\[
\rho_\star^{(\cR)}
=
P\rho_\star^{(\cR)}P
+
Q\rho_\star^{(\cR)}Q.
\]
Using stationarity,
\[
\eps_M
=
\Tr[Q\rho_\star^{(\cR)}]
=
\Tr[Q\Phi_{\cR}(\rho_\star^{(\cR)})].
\]
By positivity,
\[
\Tr[Q\Phi_{\cR}(P\rho_\star^{(\cR)}P)]
\le
\Tr[Q\Phi_{\cR}(\rho_\star^{(\cR)})]
=
\eps_M.
\]
But
\[
\Tr[Q\Phi_{\cR}(P\rho_\star^{(\cR)}P)]
=
\Tr[\varsigma_{\cR}P\rho_\star^{(\cR)}P].
\]
Dividing by \(1-\eps_M\) gives
\[
\ell_{\cR}(\bar\rho)
=
\Tr[\varsigma_{\cR}\bar\rho]
\le
\frac{\eps_M}{1-\eps_M}.
\]

For the operator-norm bound, first note that
\[
0\le \varsigma_{\cR}
\le
\sum_xp_{\cR}(x)
\sum_{y,\eta}P K^{(x)\dagger}_{y,\eta}K^{(x)}_{y,\eta}P
=P,
\]
so \(\gamma_{\cR}\le1\). Also
\[
P\rho_\star^{(\cR)}P
\ge
\lambda_{\tilde{d}_q}P.
\]
Therefore
\[
\lambda_{\tilde{d}_q}\Tr[\varsigma_{\cR}]
\le
\Tr[\varsigma_{\cR}P\rho_\star^{(\cR)}P]
\le
\eps_M.
\]
Since \(\varsigma_{\cR}\ge0\), \(\|\varsigma_{\cR}\|_\infty\le\Tr[\varsigma_{\cR}]\), which
yields
\[
\gamma_{\cR}
\le
\frac{\eps_M}{\lambda_{\tilde{d}_q}}.
\]
Combining the two bounds proves Eq.~\eqref{sm:eq:gamma_eps_lambda_bound}.
\end{proof}

\subsection{Finite-horizon bound for the reset completion}
\label{sm:subsec:one_step_reset_error}
\label{sm:subsec:finite_horizon_reset}

For each stimulus~$x$ and action~$y$, let
\[
\mathcal I_y^{(x)}(\rho)
:=
\sum_\eta K^{(x)}_{y,\eta}\rho K^{(x)\dagger}_{y,\eta}.
\]
For a retained input state $\rho=P\rho P$, the reset completion gives
\[
\widehat{\mathcal I}_{y,\mathrm{rec}}^{(x)}(\rho)
=
P\mathcal I_y^{(x)}(\rho)P
+
\mathcal C_\star\!\left(Q\mathcal I_y^{(x)}(\rho)Q\right).
\]
To retain the visible outcome record, define
\begin{align}
\mathcal J_{\cR}(\rho)
&:=
\sum_{x,y}p_{\cR}(x)\ket{x,y}\!\bra{x,y}\otimes
\mathcal I_y^{(x)}(\rho),
\notag\\
\widehat{\mathcal J}^{\mathrm{rec}}_{\cR}(\rho)
&:=
\sum_{x,y}p_{\cR}(x)\ket{x,y}\!\bra{x,y}\otimes
\widehat{\mathcal I}_{y,\mathrm{rec}}^{(x)}(\rho).
\label{sm:eq:outcome_recorded_channels}
\end{align}

\begin{lemma}[One-step outcome-recorded reset error]
\label{sm:lem:one_step_reset_error}
For every retained state $\rho=P\rho P$,
\begin{equation}
\frac12\left\|
\mathcal J_{\cR}(\rho)-
\widehat{\mathcal J}^{\mathrm{rec}}_{\cR}(\rho)
\right\|_1
\le
\sqrt{\ell_{\cR}(\rho)}+\ell_{\cR}(\rho)
\le
\sqrt{\gamma_{\cR}}+\gamma_{\cR}.
\label{sm:eq:one_step_reset_error}
\end{equation}
\end{lemma}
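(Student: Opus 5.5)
The plan is to compute the difference of the two outcome-recorded channels explicitly, block by block, and bound each block's trace norm. Both $\mathcal J_{\cR}(\rho)$ and $\widehat{\mathcal J}^{\mathrm{rec}}_{\cR}(\rho)$ are block diagonal in the orthonormal record basis $\ket{x,y}$. The trace norm of the difference is therefore the weighted sum
\[
\sum_{x,y}p_{\cR}(x)\,\big\|\mathcal I^{(x)}_y(\rho)-\widehat{\mathcal I}^{(x)}_{y,\mathrm{rec}}(\rho)\big\|_1 .
\]
For fixed $(x,y)$, write $\sigma:=\mathcal I^{(x)}_y(\rho)$ and decompose $\sigma=P\sigma P+P\sigma Q+Q\sigma P+Q\sigma Q$. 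Since $\widehat{\mathcal I}^{(x)}_{y,\mathrm{rec}}(\rho)=P\sigma P+\Tr[Q\sigma Q]\,\bar\rho$, the block difference is
\[
\big(P\sigma Q+Q\sigma P\big)+\big(Q\sigma Q-\Tr[Q\sigma Q]\,\bar\rho\big).
\]

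For the diagonal piece, the triangle inequality gives a trace norm of at most $2\Tr[Q\sigma Q]$, because both $Q\sigma Q$ and $\Tr[Q\sigma Q]\bar\rho$ are positive with that trace. After the factor $\tfrac12$ and the weighted sum, this contributes
\[
\sum_{x,y}p_{\cR}(x)\Tr\big[Q\,\mathcal I^{(x)}_y(\rho)\,Q\big]=\Tr[\varsigma_{\cR}\rho]=\ell_{\cR}(\rho),
\]
using the definition of $\varsigma_{\cR}$ in Eq.~\eqref{sm:eq:leakage_operator} and cyclicity of the trace with $\rho=P\rho P$.

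For the off-diagonal piece, I will factor $\sigma=AA^\dagger$ with $A:=\sum_\eta K^{(x)}_{y,\eta}\sqrt{\rho}\otimes\bra{\eta}$. Then $P\sigma Q=(PA)(QA)^\dagger$, and Hölder's inequality gives
\[
\|P\sigma Q\|_1\le\|PA\|_2\,\|QA\|_2=\sqrt{\Tr[P\sigma]\,\Tr[Q\sigma]}.
\]
Hence $\tfrac12\|P\sigma Q+Q\sigma P\|_1\le\sqrt{a_{xy}b_{xy}}$, where $a_{xy}=\Tr[P\sigma]$ and $b_{xy}=\Tr[Q\sigma]$. Cauchy--Schwarz over the weights $p_{\cR}(x)$ then gives
\[
\sum_{x,y}p_{\cR}(x)\sqrt{a_{xy}b_{xy}}\le\Big(\sum p_{\cR}(x)a_{xy}\Big)^{1/2}\Big(\sum p_{\cR}(x)b_{xy}\Big)^{1/2}\le\sqrt{\ell_{\cR}(\rho)}.
\]
The last step uses $\sum_{x,y}p_{\cR}(x)(a_{xy}+b_{xy})=1$, which follows from instrument completeness. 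Adding the two contributions yields the first inequality in Eq.~\eqref{sm:eq:one_step_reset_error}.

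The second inequality follows because $\ell_{\cR}(\rho)\le\gamma_{\cR}$, as noted after Eq.~\eqref{sm:eq:leakage_probability}, and $t\mapsto\sqrt t+t$ is increasing. The step requiring care is the off-diagonal (coherence) term: a naive bound would give $\sqrt{\ell}$ per $(x,y)$ block rather than overall. The Hölder factorisation followed by Cauchy--Schwarz across the record labels is what keeps the bound at a single $\sqrt{\ell_{\cR}(\rho)}$.
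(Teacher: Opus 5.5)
Your proof is correct and is essentially the paper's argument: split the difference into the $P$--$Q$ coherences and the replaced $Q$-block, bound the latter by $\ell_{\cR}(\rho)$, and bound the former by $\sqrt{\Tr(P\sigma)\Tr(Q\sigma)}$-type positivity. The only difference is that the paper applies this positivity bound once to the whole record-extended operator $\Omega=\mathcal J_{\cR}(\rho)$ with $\Pi=\Id_{XY}\otimes P$, and your blockwise H\"older step followed by Cauchy--Schwarz over the records $(x,y)$ is exactly what that single bound unpacks to.
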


\begin{proof}
Set $\Omega=\mathcal J_{\cR}(\rho)$,
$\Pi=\Id_{XY}\otimes P$, and $\bar\Pi=\Id-\Pi$. Then
$\Tr(\bar\Pi\Omega)=\ell_{\cR}(\rho)$. The reset completion removes the
$\Pi$--$\bar\Pi$ coherences and replaces the $\bar\Pi$ block by a positive
operator of the same trace, while leaving the classical record unchanged.
Consequently,
\[
\frac12\left\|\Omega-
\widehat{\mathcal J}^{\mathrm{rec}}_{\cR}(\rho)\right\|_1
\le
\|\Pi\Omega\bar\Pi\|_1+\Tr(\bar\Pi\Omega).
\]
Positivity gives
$\|\Pi\Omega\bar\Pi\|_1
\le\sqrt{\Tr(\Pi\Omega)\Tr(\bar\Pi\Omega)}
\le\sqrt{\ell_{\cR}(\rho)}$, proving the result.
\end{proof}

Let $p^{\mathrm{orig}}_{0:L-1}$ be the length-$L$ joint visible input--output
distribution of the original agent under the stationary memoryless reference
$\cR$, initialised in $\rho_\star^{(\cR)}$, and let
$\widehat p^{\mathrm{rec}}_{0:L-1}$ be the corresponding distribution of the
reset-completed agent initialised in $\bar\rho$. The same bound holds for
output-only histories by marginalising the stimulus record.

\begin{proposition}[Finite-horizon bound for reset-completed repair]
\label{sm:prop:finite_horizon_reset}
For every $L\ge1$,
\begin{equation}
\dTV\!\left(
 p^{\mathrm{orig}}_{0:L-1},
 \widehat p^{\mathrm{rec}}_{0:L-1}
\right)
\le
\eps_M+(L-1)\bigl(\sqrt{\gamma_{\cR}}+\gamma_{\cR}\bigr).
\label{sm:eq:finite_horizon_general}
\end{equation}
\end{proposition}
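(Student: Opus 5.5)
The plan is a hybrid (telescoping) argument at the level of outcome-recorded channels, combined with monotonicity of trace distance.

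\textbf{Setup.} Write the original length-$L$ visible distribution as the classical register obtained from
\[
\mathcal J_{\cR}^{\circ L}\bigl(\rho_\star^{(\cR)}\bigr),
\]
with the memory traced out at the end. Each application of $\mathcal J_{\cR}$ appends a fresh $\ket{x_t,y_t}$ record and acts on the memory. Similarly, $\widehat p^{\mathrm{rec}}_{0:L-1}$ comes from
\[
(\widehat{\mathcal J}^{\mathrm{rec}}_{\cR})^{\circ L}(\bar\rho).
\]
Since $\dTV$ of the visible records is bounded by half the trace distance of the full record-plus-memory states (partial trace is CPTP), it suffices to bound
\[
\tfrac12\bigl\|\mathcal J_{\cR}^{\circ L}(\rho_\star)-(\widehat{\mathcal J}^{\mathrm{rec}}_{\cR})^{\circ L}(\bar\rho)\bigr\|_1 .
\]

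\textbf{Step 1: initial state.} Replace $\rho_\star$ by $\bar\rho$. Because $P$ is spectral, $\rho_\star = (1-\eps_M)\bar\rho + Q\rho_\star Q$, so
\[
\tfrac12\|\rho_\star-\bar\rho\|_1=\eps_M .
\]
By contractivity of the CPTP map $\mathcal J^{\circ L}_{\cR}$ this costs at most $\eps_M$. This matches the leading term, and suggests the first step costs nothing further.

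\textbf{Step 2: why step one is free.} On a retained input, the reset completion preserves one-step action statistics exactly (Proposition~\ref{sm:prop:reset_valid}). Only the post-step memory differs. So I would arrange the hybrid sum so that the error at the last step is not charged, leaving $L-1$ charged steps.

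Concretely, define hybrids
\[
H_k=\mathcal J_{\cR}^{\circ(L-k)}\circ(\widehat{\mathcal J}^{\mathrm{rec}}_{\cR})^{\circ k}(\bar\rho),
\]
and bound $\tfrac12\|H_k-H_{k+1}\|_1$ by contractivity through the outer $\mathcal J$'s. What remains is one-step discrepancies $\mathcal J_{\cR}(\sigma_k)$ versus $\widehat{\mathcal J}^{\mathrm{rec}}_{\cR}(\sigma_k)$ applied to $\sigma_k$. Here $\sigma_k$ is the memory marginal (conditioned on the record) produced by $k$ repaired steps; it is a retained state because the repaired instruments map into $P\cH_M$. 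The operations act blockwise on classical records, so the bound applies record-by-record and averages, giving at most $\sqrt{\gamma_{\cR}}+\gamma_{\cR}$ by Lemma~\ref{sm:lem:one_step_reset_error}.

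For the final step ($k=L-1$, no subsequent $\mathcal J$), the memory is traced out. The visible records then agree exactly by action preservation, so that term contributes zero. This yields $L-1$ nonzero terms, and summing gives Eq.~\eqref{sm:eq:finite_horizon_general}.

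\textbf{Main obstacle.} The delicate point is the bookkeeping of classical records in the hybrid argument. I need to check that conditioning on the classical records keeps each intermediate memory state supported on $P\cH_M$, so the lemma's hypothesis $\rho=P\rho P$ holds. I also need that the bound $\ell_{\cR}\le\gamma_{\cR}$ holds for subnormalized conditional states after averaging; this follows by linearity, since $\ell_{\cR}(\rho)=\Tr[\varsigma_{\cR}\rho]$ and the square-root term is concave. Finally, I must verify that the last step's action agreement survives the trace over the memory. I would handle all of this by writing each hybrid as a block-diagonal state $\sum_{r}\ket r\!\bra r\otimes\sigma_r$ and applying the one-step lemma to each normalized $\sigma_r$. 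Jensen's inequality on $\sqrt{\cdot}$ then gives the combined bound.
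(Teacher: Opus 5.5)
Your proposal is correct and follows the paper's proof essentially step for step. Both use the same ingredients: the $\eps_M$ cost of replacing $\rho_\star^{(\cR)}$ by $\bar\rho$, hybrids with a repaired prefix and an original suffix, per-prefix application of Lemma~\ref{sm:lem:one_step_reset_error} together with contractivity, and a final term that vanishes by Proposition~\ref{sm:prop:reset_valid}.

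One fix is needed in your Setup. Apply the triangle inequality to the memory-traced visible marginals of the hybrids, not to the full record-plus-memory states as you propose there. The last hybrid difference vanishes only after the terminal memory is traced out; at the full-state level it would cost another $\sqrt{\gamma_{\cR}}+\gamma_{\cR}$.
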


\begin{proof}
Since $P$ is a spectral projector of $\rho_\star^{(\cR)}$,
\[
\rho_\star^{(\cR)}
=(1-\eps_M)\bar\rho+\eps_M\tau_Q,
\qquad
\tau_Q=Q\rho_\star^{(\cR)}Q/\eps_M,
\]
with the second term omitted when $\eps_M=0$. Replacing the initial state by
$\bar\rho$ therefore changes any visible history distribution by at most
$\eps_M$ in total variation.

For $k=0,\ldots,L$, let $p^{(k)}$ use the reset-completed instrument for the
first $k$ steps and the original instrument thereafter. For
$1\le k\le L-1$, the common repaired prefix ensures that the conditional
memory entering the replaced step is supported on $P\cH_M$. Applying
Lemma~\ref{sm:lem:one_step_reset_error} to each classical prefix and using
contractivity under the common suffix gives
\[
\dTV\bigl(p^{(k-1)},p^{(k)}\bigr)
\le
\sqrt{\gamma_{\cR}}+\gamma_{\cR}.
\]
The final replacement contributes zero after tracing the terminal memory,
because Proposition~\ref{sm:prop:reset_valid} gives identical one-step visible
action probabilities for every retained input state. Summing the $L-1$
nonzero hybrid differences and adding the initial $\eps_M$ term proves
Eq.~\eqref{sm:eq:finite_horizon_general}.
\end{proof}

\subsection{Stimulus-wise polar completion used in Main Text}
\label{sm:subsec:polar_completion}

The repair used for the main-text certificate normalises the projected Kraus family
separately for every stimulus. This makes validity independent of the reference process used
to choose the retained subspace.

For each \(x\), define
\[
\bar K^{(x)}_{y,\eta}:=PK^{(x)}_{y,\eta}P,
\qquad
G_x:=\sum_{y,\eta}\bar K^{(x)\dagger}_{y,\eta}\bar K^{(x)}_{y,\eta}.
\]
Stacking the projected operators as
\begin{equation}
A_x=\sum_{y,\eta}\ket{y,\eta}\otimes\bar K^{(x)}_{y,\eta}
\qquad\text{gives}\qquad A_x^\dagger A_x=G_x.
\label{sm:eq:stacked_polar}
\end{equation}
Equivalently,
\[
G_x
=
P-
\varsigma_x,
\qquad
\varsigma_x:=
\sum_{y,\eta}P K^{(x)\dagger}_{y,\eta}QK^{(x)}_{y,\eta}P
\succeq0.
\]
The operator \(\varsigma_x\) is the stimulus-resolved leakage operator. On the support of \(G_x\),
define
\[
\widetilde K^{(x)}_{y,\eta}
=
\bar K^{(x)}_{y,\eta}G_x^{-1/2},
\]
where \(G_x^{-1/2}\) denotes the Moore--Penrose pseudoinverse square root. Then
\[
\sum_{y,\eta}
\widetilde K^{(x)\dagger}_{y,\eta}
\widetilde K^{(x)}_{y,\eta}
=
\Pi_x,
\]
where \(\Pi_x\) projects onto \(\mathrm{supp}(G_x)\). Define
\[
\delta_x:=\|P-G_x\|_\infty,
\qquad
\delta:=\max_x\delta_x.
\]
If \(\delta<1\), then each \(G_x\) is full rank on \(P\cH_M\), so
\(\Pi_x=P\) and the polar repair defines a trace-preserving instrument on
all of the retained memory space.

Moreover, \(\widetilde A_x=A_xG_x^{-1/2}\) is the polar factor of \(A_x\).
When \(G_x\) is full rank on \(P\cH_M\), it is the unique minimiser over
isometries on the retained input space:
\begin{equation}
\min_{U^\dagger U=P}\|A_x-U\|_{\mathrm F}^2
=
\|A_x-\widetilde A_x\|_{\mathrm F}^2
=
\tilde{d}_q+\Tr G_x-2\Tr\sqrt{G_x}.
\label{sm:eq:polar_nearest_isometry}
\end{equation}
Since \(0\le G_x\le P\),
\begin{equation}
\|A_x-\widetilde A_x\|_{\mathrm F}
\le
\sqrt{\tilde{d}_q}\,
\bigl(1-\sqrt{1-\delta_x}\bigr).
\label{sm:eq:polar_leakage_distance}
\end{equation}
Thus the repair is the nearest isometric completion of the projected update
for each stimulus separately.

When some \(G_x\) are rank-deficient, one may complete the instrument on the kernel of
\(G_x\) by adding arbitrary Kraus operators whose squared sum is \(P-\Pi_x\). This kernel
completion restores trace preservation on all of \(P\cH_M\), but the action and memory update
on the inaccessible kernel are not fixed by the projected dynamics.

Any additional hidden labels introduced by such a kernel completion are
included in the common comparison alphabet by zero-padding the corresponding
operators of the original agent. For every retained dimension used in the
figures, the computed \(G_x\) are full rank, so no arbitrary kernel completion
enters the reported rates.

For a stationary finite-memory reference, let \(\widetilde L_\omega\) be the
routed Kraus operators obtained from Eq.~\eqref{sm:eq:routed_kraus} and the polar-completed
\(\widetilde K^{(x)}_{y,\eta}\). Proposition~\ref{sm:prop:repaired_qfdr} gives the exact
purified-history rate
\begin{equation}
R_{F,\mathrm{pol}}^{(Q)}
=-\tfrac12\log_2\spr\!\left(
Z\mapsto\sum_\omega\widetilde L_\omega Z L_\omega^\dagger
\right).
\label{sm:eq:polar_qfdr}
\end{equation}

The two completions solve complementary problems. The polar completion is the primary
construction because it is the nearest isometry to the projected update and is the agent
whose exact rate is used in the Main Text. The reset completion instead preserves one-step
action statistics exactly and remains available for any projector, at the price of explicit
recovery dynamics after leakage events.

\section{Optional variational refinement at fixed bond dimension}
\label{sm:sec:variational}

This Section records an optional variational refinement at fixed bond
dimension. No variational curve is used in Fig.~\mainref{fig:compression_results}; the
plotted certificate is the exact rate of the polar-completed agent. A future
refinement could minimise the same mixed-transfer rate, initialised at that
valid agent.

\begin{proposition}[Variational objective: maximise fidelity density]
\label{sm:prop:variational_objective}
Fix a stationary reference input process \(\cR\). Let \(|\Psi_{\cR}\rangle\) denote the
stationary history iMPS dilation of the original agent (Sec.~\ref{sm:sec:history_imps}), and let \(\widetilde W\) be a
candidate compressed controlled instrument inducing a stationary history iMPS
\(|\widetilde\Psi_{\cR}\rangle\) with memory dimension \(\tilde{d}_q\).

Define the \emph{fidelity density} between the two stationary dilations by
\begin{equation}
f(\widetilde W)
:=
\lim_{L\to\infty}
|\langle \widetilde\Psi_{\cR,L} \mid \Psi_{\cR,L}\rangle|^{1/L},
\end{equation}
whenever the limit exists. Under the normal/injective stationary-bulk assumptions used in
Sec.~\ref{sm:sec:certificates}, this quantity equals the spectral radius \(\mu_0\) of the mixed transfer operator
between the two iMPS tensors; see Eq.~\eqref{sm:eq:qfdr_mu_exact}. Maximising
\(f(\widetilde W)\) is therefore equivalent to minimising the QFDR:
\begin{equation}
\RFQ(\Psi_{\cR},\widetilde\Psi_{\cR})
=
-\frac12\log_2 f(\widetilde W).
\end{equation}

A variational refinement at fixed bond dimension \(\tilde{d}_q\) is therefore defined by the
constrained optimisation problem
\begin{equation}
\max_{\widetilde W}\;\; f(\widetilde W)
\quad \text{subject to:}\quad
\widetilde W \text{ defines a valid stimulus-conditioned instrument for each } x.
\label{sm:eq:var_objective}
\end{equation}
In practice one optimises a local parametrisation of \(\widetilde W\) within the isometric
manifold and uses the dominant left/right eigenvectors of the mixed transfer operator to
evaluate gradients, in direct analogy with uniform-MPS tangent-space optimisation for
fidelity density~\cite{Vanhecke2021Truncation,yang2024dimensionreductionquantumsampling}.
\end{proposition}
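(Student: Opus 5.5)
The plan is to show that the variational objective is well posed and coincides with the QFDR; the one real input is Lemma~\ref{sm:lem:mixed_transfer_exact}, applied to the pair $(\Psi_{\cR},\widetilde\Psi_{\cR})$. First, any candidate $\widetilde W$ that defines a valid stimulus-conditioned instrument for each $x$ has Kraus operators $\widetilde K^{(x)}_{y,\eta}$ with $\sum_{y,\eta}\widetilde K^{(x)\dagger}_{y,\eta}\widetilde K^{(x)}_{y,\eta}=\Id_{\tilde d_q}$. Routing it through $\cR$ as in Eq.~\eqref{sm:eq:site_tensor_main} (or Eq.~\eqref{sm:eq:routed_kraus} for a finite-memory reference) then gives a left-canonical tensor family, by Lemma~\ref{sm:lem:routed_isometry} and Remark~\ref{sm:rem:left_canonical}. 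Consequently $|\widetilde\Psi_{\cR,L}\rangle$ is normalized for every $L$, with its boundary purified by the candidate's own stationary state exactly as for $|\Psi_{\cR,L}\rangle$.

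Second, fix a common physical-label alphabet, zero-padding any extra hidden labels as in Sec.~\ref{sm:sec:certificates}. Write the finite-window overlap as
\[
\langle\widetilde\Psi_{\cR,L}|\Psi_{\cR,L}\rangle=\mathrm{tr}\bigl[\ell^\dagger\,\mathcal E^L(r)\bigr],
\]
where $\mathcal E(Z)=\sum_\omega \widetilde A^\omega Z A^{\omega\dagger}$ is the rectangular mixed transfer and $\ell,r$ are fixed boundary operators built from $\sqrt{\rho_\star}$ and $\sqrt{\widetilde\rho_\star}$. Under the normal/injective assumptions and nonzero overlap of $\ell$ and $r$ with the dominant sector, Lemma~\ref{sm:lem:mixed_transfer_exact} gives $|\langle\widetilde\Psi_{\cR,L}|\Psi_{\cR,L}\rangle|=\mu_0^L\,c\,(1+o(1))$ with $c>0$ independent of $L$. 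Taking $L$-th roots, $c^{1/L}\to1$, so the limit $f(\widetilde W)$ exists and equals $\mu_0$. Lemma~\ref{sm:lem:cs_upper_bound} supplies $\mu_0\le1$, so $f\in[0,1]$.

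Third, I will compare with the definition of $\RFQ$ in Eq.~\eqref{eq:exact_certificate}:
\[
\RFQ=-\lim_{L\to\infty}\frac{1}{2L}\log_2|\langle\widetilde\Psi_{\cR,L}|\Psi_{\cR,L}\rangle|=-\tfrac12\log_2\lim_{L\to\infty}|\langle\widetilde\Psi_{\cR,L}|\Psi_{\cR,L}\rangle|^{1/L}=-\tfrac12\log_2 f(\widetilde W).
\]
Pulling the limit inside the logarithm is allowed by continuity of $\log_2$ on $(0,1]$, given $f>0$ (and $\RFQ=\infty$ if $f=0$, consistent with the formula). Since $t\mapsto-\tfrac12\log_2 t$ is strictly decreasing, maximising $f$ over the feasible set in Eq.~\eqref{sm:eq:var_objective} is the same as minimising $\RFQ$. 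The gradient remark then follows from first-order perturbation theory of a simple dominant eigenvalue: $\delta\mu_0=\langle \ell_0,\delta\mathcal E(r_0)\rangle/\langle \ell_0,r_0\rangle$, where $\ell_0,r_0$ are the dominant left and right eigenvectors of $\mathcal E$.

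The main obstacle is the uniformity of the assumptions over the candidate set. Primitivity of the candidate's own transfer, and simplicity and boundary overlap of the dominant mixed eigenvalue, may fail for some $\widetilde W$, and without them the limit need not exist. I would handle this by stating the equivalence only on the subset of candidates meeting these conditions, defining $f$ there via the $\limsup$ (so that $f\le\mu_0$ still holds by the spectral radius formula). I would then note that this subset is open and dense among valid instruments, by genericity of simple eigenvalues and of nonzero overlaps, so restricting to it does not change the supremum.
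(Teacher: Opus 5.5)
Your proposal is correct and follows essentially the paper's own route. The paper gives no separate proof: it invokes Lemma~\ref{sm:lem:mixed_transfer_exact} to identify $f(\widetilde W)$ with the mixed-transfer spectral radius, then uses strict monotonicity of $t\mapsto-\tfrac12\log_2 t$. Your left-canonicity, $L$-th-root and perturbative-gradient steps make those implicit details explicit. Two minor remarks, neither of which affects the main argument. First, your closing density/genericity argument goes beyond what the conditional statement requires, and it is asserted rather than proved. Second, the dominant mixed eigenvalue $\lambda_0$ can be complex, so the gradient of the spectral radius is $\delta|\lambda_0|=\mathrm{Re}(\overline{\lambda_0}\,\delta\lambda_0)/|\lambda_0|$ rather than $\delta\lambda_0$ itself.
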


\section{Coherent stimulus rail and stationary comb/process-tensor viewpoint}
\label{sm:sec:comb}

This Section records the coherent formulation underlying the routed construction. The routed
isometry \(W\) defines a valid causal multi-time object before any reference input process is
chosen. Choosing a diagonal stationary reference input process closes the transfer map on memory and gives
the history iMPS studied in Secs.~\ref{sm:sec:history_imps}--\ref{sm:sec:certificates}. Leaving the stimulus rail open instead gives a
stationary comb/process tensor.

\subsection{Comb generated by the routed isometry}

Recall the one-step routed isometry
\begin{equation}
W
=
\sum_{x\in X} \ket{x}\!\bra{x}_X \otimes V_x,
\qquad
V_x=\sum_{y,\eta}K^{(x)}_{y,\eta}\otimes \ket{y}_Y\otimes\ket{\eta}_E,
\label{sm:eq:router_recall}
\end{equation}
acting as
\[
W:\cH_M\otimes\cH_X\to \cH_M\otimes\cH_Y\otimes\cH_X\otimes\cH_E.
\]
The key point is that \(W\) is linear in the stimulus state on \(\mathsf{X}\): it defines a well-posed
quantum interaction for arbitrary, possibly coherent, stimulus inputs.

For a horizon \(L\), concatenate identical copies of \(W\) in time, wiring the output memory
of one step into the next and leaving the forwarded copy of each stimulus register as an
explicit output system. Tracing out the environment registers \(\mathsf{E}_{0:L-1}\), and optionally
leaving the action registers \(\mathsf{Y}_{0:L-1}\) unmeasured, yields a completely positive causal
map from the stimulus sequence to the corresponding output sequence with memory.
Equivalently, via the Choi--Jamio{\l}kowski isomorphism, this defines a positive operator
\(\Upsilon^{(L)}\) on the tensor product of the stimulus and output spaces that satisfies the
standard comb causality constraints. We call \(\Upsilon^{(L)}\) the length-\(L\) process
tensor generated by the agent.

\begin{proposition}[Coherent rail yields a stationary comb; classical reference processes are restricted testers]
\label{sm:prop:comb_from_router}
Let \(W\) be the routed isometry in Eq.~\eqref{sm:eq:router_recall}. Then:
\begin{enumerate}
  \item[(i)] \emph{Comb/process tensor.} For each \(L\), iterating \(W\) in time and tracing
  the environment registers defines a valid length-\(L\) quantum comb/process tensor
  \(\Upsilon^{(L)}\) from the stimulus sequence to the output sequence, satisfying the usual
  causality constraints.

  \item[(ii)] \emph{Diagonal reference processes as testers.} A stationary diagonal reference input process
  \(\cR\), including memoryless driving and finitely correlated diagonal driving,
  corresponds to a restricted tester family on the stimulus rail. Such testers destroy
  coherence in the stimulus basis and prepare subsequent stimulus states according to a
  classical rule. Contracting \(\Upsilon^{(L)}\) with such a tester produces the classical
  input--output process studied in the Main Text.

  \item[(iii)] \emph{Recovery of the history iMPS.} For memoryless \(\cR\) with
  distribution \(p_{\cR}(x)\), contracting each copy of the routed update with the
  diagonal reference input process, or equivalently using the coherent-label construction of Sec.~\ref{sm:sec:history_imps}, yields
  the stationary history iMPS \(|\Psi_{\cR}\rangle\). Its transfer map is the
  reference-averaged memory channel of Sec.~\ref{sm:sec:canonical}.
\end{enumerate}
\end{proposition}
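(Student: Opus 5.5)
The plan is to prove the three parts in order, all from the explicit form of $W$ in Eq.~\eqref{sm:eq:router_recall} and its isometry property.

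\emph{Part (i).} First check that $W$ is an isometry. Since $W^\dagger W=\sum_x\ket{x}\!\bra{x}\otimes V_x^\dagger V_x$ and $V_x^\dagger V_x=\Id_M$ by Eq.~\eqref{sm:eq:kraus_completeness}, we get $W^\dagger W=\Id_X\otimes\Id_M$. The length-$L$ object is then a sequential composition of isometries. Step $t$ acts on the memory carried from step $t-1$ and on a fresh stimulus input $X_t^{\rm in}$. It emits $Y_t$, a forwarded copy $X_t^{\rm out}$, and $E_t$. Tracing the $E_t$ and the final memory gives a CPTP map, which is a sequential network of CPTP maps with memory.

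Causality is proved by the standard comb induction. Discard the last outputs $(Y_{L-1},X_{L-1}^{\rm out})$, together with the final memory and $E_{L-1}$. Because the last step is trace preserving, it collapses to the partial trace over its input $X_{L-1}^{\rm in}$ and the incoming memory. In Choi form this reads $\Tr_{\rm out_{L-1}}\Upsilon^{(L)}=\Id_{\rm in_{L-1}}\otimes\Upsilon^{(L-1)}$. Iterating this down to $L=1$ yields the Chiribella--D'Ariano--Perinotti causality constraints. Positivity holds because the Choi operator of a CP map is positive. Stationarity follows because every step uses the same $W$.

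\emph{Part (ii).} Model a stationary diagonal reference $\cR$ with classical memory $c$ as a tester. At each step it measures the returned stimulus copy in the computational basis, a completely dephasing instrument, or equivalently ignores the returned copy. It then prepares the next stimulus $\ket{x}\!\bra{x}$ with probability $R(x,c'\mid c)$ and updates $c\to c'$ classically. This tester is a valid comb, and its Choi operator is diagonal in the stimulus basis. Contracting the link product $\Upsilon^{(L)}\ast\mathcal{T}_\cR$ and measuring $Y$ therefore sums only diagonal stimulus terms. The result is $\sum_{c}\prod_t R(x_t,c_{t+1}\mid c_t)\,\Tr[\mathcal I_{y_{L-1}}^{(x_{L-1})}\circ\cdots\circ\mathcal I_{y_0}^{(x_0)}(\rho)]$. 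This is exactly the classical input--output distribution $\Lambda_L$ of Eq.~\eqref{sm:eq:readout_map}. It is also consistent with Proposition~\ref{sm:prop:tree_router_equiv} for each fixed stimulus word. The restriction to such testers is strict, because general testers may feed coherent stimuli or adapt to $Y$.

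\emph{Part (iii).} Specialise to $|\mathcal C|=1$. Feeding $\sum_x\sqrt{p_\cR(x)}\ket{x}$ into $W$ gives exactly $U_\cR$ of Eq.~\eqref{sm:eq:history_isometry}, with the forwarded copy acting as the label register. This is a purification of the diagonal input: because $W$ is block diagonal and forwards $x$, tracing the label reproduces the dephased mixture. Lemma~\ref{sm:lem:transfer_equals_phi} then gives the transfer map $\Phi_\cR$.

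The main obstacle is the bookkeeping in part (ii). There one has to verify that the link product with the tester, and the removal of coherences via the dephased returned copy, commute with the trace over $E$ and the final memory. The cleanest way to handle this is to write both objects in Kraus form and let orthogonality of the $\ket{x}$ labels kill the cross terms, just as in the proof of Lemma~\ref{sm:lem:transfer_equals_phi}.
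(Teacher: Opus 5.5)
Your proposal is correct and follows essentially the same route as the paper's sketch. For (i) you use the Stinespring-to-comb induction, in which the last isometric step collapses to a partial trace over its inputs; for (ii) diagonal classical testers driven by a finite classical controller; and for (iii) you feed $\sum_x\sqrt{p_{\cR}(x)}\ket{x}$ into $W$ to recover $U_{\cR}$ and then apply Lemma~\ref{sm:lem:transfer_equals_phi}, with more of the bookkeeping written out than in the paper. One thing to tidy: your contracted distribution in (ii) should carry the initial weight $\pi(c_0)$ of the reference controller. For a genuinely stationary window it should also start from the classical--quantum state $\OmCM$, which correlates $c_0$ with the agent memory, rather than from a product with a single $\rho$; the paper likewise leaves this boundary detail implicit.
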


\begin{proof}[Proof sketch]
(i) This is the standard Stinespring-to-comb construction: Concatenating the same local
isometry in time produces a causal network; tracing inaccessible environments yields a CP
map from interleaved inputs to outputs; its Choi operator obeys the comb trace constraints
because future outputs cannot depend on future inputs.

(ii) Diagonal classical testers form a strict subset of all quantum testers, because they
measure or dephase the stimulus rail in the computational basis at each time. Finitely-correlated diagonal driving adds a finite classical controller to this tester, but still does
not preserve coherence between distinct stimulus symbols.

(iii) In the memoryless case the diagonal tester factorises across time. The coherent-label
construction of Sec.~\ref{sm:sec:history_imps} gives one copy of the same history
isometry at every step, so the
resulting network is a uniform history iMPS\@. Contracting one time step with its conjugate
gives the reference-averaged memory channel by Lemma~\ref{sm:lem:transfer_equals_phi}.
\end{proof}

\paragraph*{Scope of the certificate.}
The routed comb exists independently of the choice of reference input process. What depends on a
diagonal stationary reference input process is the reduction certificate, because that certificate uses a
single transfer operator and its canonical fixed point. Extending comparable certificates to
arbitrary coherent testers remains open.

Note that the branching-tree picture of Sec.~\ref{sm:sec:trees} is equivalent to \(W\) only under trajectory conditioning. In
contrast, the routed isometry \(W\) defines a physical linear interaction for coherent
stimulus superpositions and therefore provides a natural vehicle for studying genuinely
quantum testers, such as superpositions or temporally entangled stimulus sequences, within
the same tensor network framework.

\section{Numerical implementation and benchmark agents}
\label{sm:sec:numerics}

This Section specifies the benchmark agents and the numerical pipeline used
for Fig.~\mainref{fig:compression_results}. The figure compares the classical
statistical complexity $C_\mu$, the quantum statistical memory $C_q$, the
uncompressed logarithmic carrier dimension
$D_q=\log_2\rank(\rho_q)$, and its certified reduction
$D_{q\star}(\delta_\star)=\log_2 d_\star$. Both benchmarked processes already
realise an entropic memory advantage,
\begin{equation}
C_q<C_\mu,
\end{equation}
yet their exact quantum memories remain full rank over the tested sizes. The
routed history-iMPS truncation converts this latent entropic saving into an
operational carrier-dimension reduction at a controlled purified-history
fidelity-divergence rate.

The pipeline is: (i)~construct a nonorthogonal-memory quantum adaptive agent;
(ii)~choose a memoryless reference input process $p_{\cR}(x)$ and form
the driven channel $\Phi_{\cR}$; (iii)~compute and diagonalise its
stationary memory state $\rho_q^{(\cR)}$; (iv)~for each retained
dimension $\tilde{d}_q$ form the agent-local truncation and complete it stimulus by
stimulus using the polar repair of Sec.~\ref{sm:subsec:polar_completion}; (v)~evaluate the exact
mixed-transfer fidelity-divergence rate $\RFQ(\tilde{d}_q)$ of this valid reduced
agent and select $d_\star$ at the target rate; and (vi)~optionally compare
with the action-preserving reset completion of Sec.~\ref{sm:subsec:reset_completion}.

\subsection{Quantum resettable renewal clock}

The first benchmark is a nonorthogonal-memory realisation of the resettable
renewal clock of Ref.~\cite{Elliott2021QuantumAgents}.  The classical causal
states are clock ages $n=0,1,\dots,N-1$; the input alphabet is binary, with
$x=0$ ordinary evolution and $x=1$ an external reset, and the output alphabet
is binary, with $y=1$ a tick.  For the uniform waiting-time law used here the
coarse-grained survival function is $\Phi(n)=N-n$.  Under $x=0$ and
$0\le n<N-1$, an age state advances with probability $a_n:=p(0,n+1|n,0)=\Phi(n+1)/\Phi(n)
=(N-n-1)/(N-n)$ or emits a tick and resets to age zero with probability
$b_n:=p(1,0|n,0)=1-a_n=1/(N-n)$; at $n=N-1$ a tick occurs with probability one.  Under
$x=1$ the clock maps deterministically to age zero without a tick; $p(0,0|n,1)=1$.

Following the prescription in Ref.~\cite{Elliott2021QuantumAgents}, the quantum memory states are nonorthogonal pure states
$\{\ket{\sigma_n}\}_{n=0}^{N-1}$ with Gram matrix
\begin{equation}
G_{nm}=\braket{\sigma_n|\sigma_m}
=\frac{N-\max(n,m)}{\sqrt{(N-n)(N-m)}}.
\end{equation}
Memory vectors are obtained by diagonalising $G=U\Lambda U^\dagger$ and setting
$\ket{\sigma_n}=\sum_\alpha\sqrt{\Lambda_\alpha}\,U^\ast_{n\alpha}\ket{\alpha}$,
so the coordinate matrix $S$ with columns $\ket{\sigma_n}$ satisfies
$S^\dagger S=G$.  The stimulus-conditioned isometries are
\begin{equation}
V_0\ket{\sigma_n}=\sqrt{a_n}\,\ket{\sigma_{n+1}}\ket{0}_Y\ket{0}_E
+\sqrt{b_n}\,\ket{\sigma_0}\ket{1}_Y\ket{0}_E,
\end{equation}
with the first term omitted at $n=N-1$, and
\begin{equation}
V_1\ket{\sigma_n}=\ket{\sigma_0}\ket{0}_Y\ket{j_n}_E,
\qquad \braket{j_n|j_m}=G_{nm},
\end{equation}
so the environment carries exactly the information needed to preserve inner
products when all visible memory states are reset to $\ket{\sigma_0}$.

The reference reset strategy is scaled with the time resolution.  Writing
$N=\tau/\Delta t$ and $\tau_R=2\tau$ gives $r_N=1-e^{-\Delta t/\tau_R}
=1-e^{-1/(2N)}$, and the stationary age distribution is
\begin{equation}
\pi_n^{(\cR_N)}
=\frac{(1-r_N)^n\,\Phi(n)}{\sum_{m=0}^{N-1}(1-r_N)^m\,\Phi(m)}.
\end{equation}
The driven quantum memory state is
$\rho_q^{(\cR_N)}=\sum_n\pi_n^{(\cR_N)}
\ket{\sigma_n}\!\bra{\sigma_n}$, and the plotted resources are
$C_\mu=H(\pi^{(\cR_N)})$, $C_q=S(\rho_q^{(\cR_N)})$, and
$D_q=\log_2\rank(\rho_q^{(\cR_N)})=\log_2 N$.

\subsection{Adaptive cyclic walk}

The second benchmark is the
input-controlled cyclic walk~\cite{Garner_2017}.  The classical states are positions
$j=0,\dots,N-1$.  A binary stimulus $x\in\{0,1\}$ selects one of two shift
laws $p_x(r)$ on the cyclic displacement $r$; the output is the next absolute
position $y=j+r\bmod N$, occurring with probability $p(y|j,x)=p_x(y-j)$, and
the next state is deterministically $y$.  The positions are $N$ equal bins on a unit circle. In the implementation,
$p_0$ is the bin-integrated wrapped uniform shift law of half-width
$\alpha=0.10$, and $p_1$ is the bin-integrated wrapped Gaussian shift law of
width $\sigma=0.06$.

For each stimulus, define the one-step amplitude state
\begin{equation}
\ket{\phi_j^{(x)}}=\sum_{y}\sqrt{p_x(y-j)}\,\ket{y}_x,
\end{equation}
whose overlaps are translation invariant,
$\braket{\phi_j^{(x)}|\phi_k^{(x)}}=B_x(k-j)$, with
\begin{equation}
B_x(d)=\sum_r\sqrt{p_x(r)\,p_x(r-d)}
\end{equation}
the cyclic Bhattacharyya autocorrelation of $\sqrt{p_x}$.  Because the agent
must be ready for either counterfactual stimulus, the memory state effectively stores both
one-step futures coherently,
\begin{equation}
\ket{\sigma_j}=\ket{\phi_j^{(0)}}\otimes\ket{\phi_j^{(1)}},
\end{equation}
so the Gram matrix is the product
\begin{equation}
G_{jk}=\braket{\sigma_j|\sigma_k}=B_0(k-j)\,B_1(k-j),
\end{equation}
which is positive semidefinite by construction and circulant by translation
invariance.  The stimulus-conditioned update is
\begin{equation}
V_x\ket{\sigma_j}=\sum_y\sqrt{p_x(y-j)}\,
\ket{\sigma_y}_M\ket{y}_Y\ket{\eta_j^{(x)}}_E,
\end{equation}
where the environment carries the unused counterfactual branch,
$\ket{\eta_j^{(0)}}=\ket{\phi_j^{(1)}}$ and
$\ket{\eta_j^{(1)}}=\ket{\phi_j^{(0)}}$.  A direct computation gives
$\braket{V_x\sigma_j|V_x\sigma_k}=B_0(k-j)B_1(k-j)=G_{jk}$, so each $V_x$
preserves inner products on the memory span and extends to a valid isometry;
measuring the output register reproduces $p_q(y|j,x)=p_x(y-j)$ with conditioned
update $\ket{\sigma_j}\mapsto\ket{\sigma_y}$.

For any position-independent i.i.d.\ strategy the stationary distribution over
positions is uniform, $\pi_j=1/N$, so the driven memory state is
$\rho_q=\tfrac1N\sum_j\ket{\sigma_j}\!\bra{\sigma_j}$, whose nonzero
eigenvalues are those of $G/N$.  Writing $a_x(r)=\sqrt{p_x(r)}$ and
$\beta_x(\ell)=|\widehat a_x(\ell)|^2$, the circulant Fourier eigenvalues of
$G(d)=B_0(d)B_1(d)$ are the circular convolution
$\lambda_\ell(G)=\tfrac1N\sum_m\beta_0(m)\beta_1(\ell-m)$.  For the
uniform/Gaussian kernels used here $G$ is full rank over the tested $N$, so
$D_q=\log_2 N$ while $C_q=S(\rho_q)<C_\mu=\log_2 N$: the exact quantum model
has an entropic advantage but no \emph{a priori} memory-dimension reduction.

\subsection{Agent-local truncation, polar completion, and the
plotted quantities}

Under the chosen memoryless reference the routed construction yields a
stationary history iMPS whose canonical bond state is the driven memory state,
$\rho_\star^{(\cR)}=\rho_q^{(\cR)}$. Diagonalising
$\rho_\star^{(\cR)}=\sum_i\lambda_i\ket{i}\!\bra{i}$ with
$\lambda_1\ge\lambda_2\ge\cdots$, retaining the $\tilde{d}_q$ largest eigenvalues fixes
the isometry $U=(\ket{1},\dots,\ket{\tilde{d}_q})$, the discarded weight
$\eps_M(\tilde{d}_q)=\sum_{i>\tilde{d}_q}\lambda_i$, and the projected Kraus operators
$\bar K^{(x)}_{y,\eta}=U^\dagger K^{(x)}_{y,\eta}U$. The projected update is
completed separately for each stimulus,
\begin{equation}
G_x=\sum_{y,\eta}\bar K^{(x)\dagger}_{y,\eta}\bar K^{(x)}_{y,\eta},
\qquad
\widetilde K^{(x)}_{y,\eta}=\bar K^{(x)}_{y,\eta}G_x^{-1/2}.
\label{sm:eq:numerical_polar}
\end{equation}
Thus $\sum_{y,\eta}\widetilde K^{(x)\dagger}_{y,\eta}
\widetilde K^{(x)}_{y,\eta}=\Id_{\tilde{d}_q}$ for every stimulus whenever $G_x$ is
full rank; the kernel completion of Sec.~\ref{sm:subsec:polar_completion} covers singular cases. The
normalisation is therefore stimulus-wise, not merely reference averaged.

The plotted quantity is the exact purified-history fidelity-divergence rate
\begin{equation}
\RFQ(\tilde{d}_q)=-\tfrac12\log_2\mu(\tilde{d}_q),
\end{equation}
where $\mu(\tilde{d}_q)=\spr\!\big(Z\mapsto\sum_s \widetilde A^s Z A^{s\dagger}\big)$
is the spectral radius of the mixed transfer between the valid
polar-completed history iMPS (tensors $\widetilde A^s$) and the original
routed history iMPS (tensors $A^s$), evaluated as in
Eq.~\eqref{sm:eq:qfdr_mu_exact} of Sec.~\ref{sm:sec:certificates}. The reported dimensions are selected by this exact rate within the
stationary-eigenspace truncation family.  For the walk,
\begin{equation}
d_\star(N)=\min_{\tilde{d}_q\ge1}\{\tilde{d}_q:\RFQ(\tilde{d}_q)\le\delta_\star\},\qquad
D_{q\star}=\log_2d_\star,
\end{equation}
whereas for the clock we report the explicitly restricted nontrivial
coherent-memory quantity
\begin{equation}
d_{\star}(N)=\min_{\tilde{d}_q\ge2}\{\tilde{d}_q:\RFQ(\tilde{d}_q)\le\delta_\star\},\qquad
D_{q\star}=\log_2d_\star,
\end{equation}
at the fixed target $\delta_\star=10^{-2}$ bits per timestep. The rank-one
clock is itself a valid memoryless agent and falls below the per-timestep
target from $N=24$ onward; it is excluded only by the stated restriction
$\tilde{d}_q\ge2$. For context, the lifetime-normalised rates $N\RFQ$ for
$\tilde{d}_q=1$ and $2$ are respectively $0.207$ and $0.0907$ bits at
$N=64$, and $0.212$ and $0.0941$ bits at $N=256$. These are asymptotic rates
expressed per clock lifetime, not direct finite-window fidelities.

For the cyclic walk, translation covariance permits the retained subspace
to be constructed from complete conjugate Fourier sectors. At the system
sizes used in Fig.~\mainref{fig:compression_results}(b), we find
\begin{equation}
\begin{array}{c|rrrrrrrrrrr}
N&8&12&16&24&32&48&64&96&128&192&256\\ \hline
d_\star&7&9&9&11&13&16&17&20&24&26&27
\end{array}
\end{equation}
The $N=256$ crossing is bracketed by
$\RFQ(26)=1.0674\times10^{-2}$ and
$\RFQ(27)=9.9942\times10^{-3}$ bits per timestep. At every selected point,
the projected Gram operators obey
\begin{equation}
G_x=w_{d_\star}\Id_{d_\star},
\qquad
w_{d_\star}=1-\eps_M(d_\star),
\end{equation}
for each stimulus, with maximum numerical residual below
$1.6\times10^{-14}$. Using complete sectors avoids an arbitrary basis choice
within degenerate stationary eigenspaces.

For comparison, Fig.~\ref{sm:fig:discarded_weight} displays the
discarded-weight guide
\begin{equation}
\delta(\tilde{d}_q)=-\tfrac12\log_2\!\big(1-\eps_M(\tilde{d}_q)\big).
\end{equation}
As explained in Sec.~\ref{sm:sec:certificates}, this is an empirical diagnostic rather than a general
bound or the certificate used to select the reported dimensions.

\begin{figure}[t]
  \centering
  \includegraphics[width=0.96\textwidth]
  {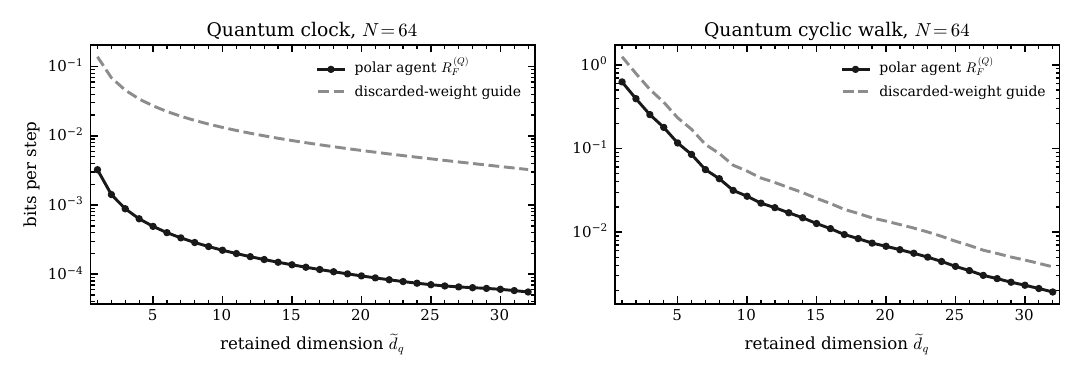}
  \caption{\textbf{Discarded-weight diagnostics at $N=64$.}
  Exact purified-history fidelity-divergence rates of the valid
  stimulus-wise polar-completed agents are compared with
  $-\tfrac12\log_2[1-\eps_M(\widetilde d_q)]$.
  The discarded-weight curve is shown only as an empirical diagnostic
  and is not used to select the retained dimensions in
  Fig.~\mainref{fig:compression_results}.}
  \label{sm:fig:discarded_weight}
\end{figure}

\subsection{Numerical parameters}

Panels~(a,b) of Fig.~\mainref{fig:compression_results} use
$N\in\{8,12,16,24,32,48,64,96,128,192,256\}$ for both agents.  For the clock, write the external-reset
rate as $\gamma=1/\tau_R$ and the dimensionless reset intensity as
$h=\gamma\tau$.  The design reference has $h_0=1/2$, hence per-step reset
probability
\begin{equation}
r_N(h_0)=1-e^{-h_0/N}=1-e^{-1/(2N)}.
\end{equation}
Conditional on the reset stimulus, the agent resets deterministically.  

\paragraph*{Panels (c) and (d): numerical parameters.}
Panels (c) and (d) of Fig.~\mainref{fig:compression_results} are evaluated at fixed problem size
$N=256$. In both panels the horizontal axis is the retained quantum-memory dimension
$\widetilde d_q$, and the vertical axis is the exact quantum fidelity-divergence rate
$R_F^{(Q)}(\widetilde d_q)$ of the corresponding stimulus-wise polar-completed reduced agent,
computed from the dominant eigenvalue of the mixed transfer operator as described in
Sec.~\ref{sm:sec:certificates}. The target certification threshold is
$\delta_\star=10^{-2}$ bits per time step.

Panel~(c) compares the two benchmark families at this fixed size. For the resettable clock,
the rate is evaluated under the design memoryless reference-input process with fixed reset
probability $p_{\mathcal R}(x=1)=0.04$ at each step. The clock curve is shown on the denser
scan $\widetilde d_q=1,2,\ldots,32,40,48,56,64,80,96,112,128$. For the adaptive cyclic walk,
the same exact rate is evaluated at $N=256$ under the i.i.d.\ reference-input process
$p_{\mathcal R}(x=0)=p_{\mathcal R}(x=1)=1/2$, using the scan
$\widetilde d_q=1,2,\ldots,128$. The highlighted selected dimensions in panel~(c) are
$\widetilde d_q=2$ for the clock and $\widetilde d_q=27$ for the cyclic walk, both defined
relative to the target threshold $\delta_\star$.

Panel~(d) shows the robustness of the clock certification curve to the choice of reference
input process. Here $N=256$ is again fixed, and for each value of the reference reset
probability $p_{\mathcal R}(x=1)\in\{0.01,0.04,0.07,0.10\}$ the stationary driven memory
state is recomputed, the retained subspace is reselected at each $\widetilde d_q$, the
stimulus-wise polar completion is rebuilt, and the exact rate $R_F^{(Q)}(\widetilde d_q)$ is
reevaluated. Thus the family of curves in panel~(d) differs only through the chosen
reference-input process used to define the stationary weighting and certification; each
resulting reduced instrument remains a valid agent for arbitrary stimulus strings.

The cyclic-walk transition laws are obtained by quadrature over the wrapped
shift distributions. The production grid uses $256$ source points per bin and
$4097$ shift points, with Gaussian truncation at $6\sigma$. At
$N=8,16,32,64,128,256$, the two dimensions bracketing every selected crossing
were independently checked using $512$ source points per bin and $8193$ shift
points; every crossing was unchanged.

\subsection{Implementation and reproducibility}

The public implementation is organised under \path{src/qaa/}, with numerical
and plotting workflows under \path{scripts/}. Figure-source data are stored
under \path{data/figure_source/}, and validation and convergence data under
\path{data/validation/}. The principal reproduction entry point is
\path{scripts/reproduce.py}; its archived-data mode checks the stored values
and regenerates Fig.~\mainref{fig:compression_results} and
Fig.~\ref{sm:fig:discarded_weight}, while its full mode repeats the numerical
calculation through $N=256$.

Dominant mixed-transfer eigenvalues are obtained by explicit dense
eigendecomposition when the rectangular transfer dimension is small and by
matrix-free Arnoldi otherwise. At the selected $N=128,256$ cyclic-walk
points, Arnoldi was independently compared with power iteration, with a
maximum absolute difference in the dominant eigenvalue of
$5.8\times10^{-10}$. Single-thread BLAS is set before NumPy is imported. The
calculations were performed in Python using NumPy, SciPy, pandas, and
Matplotlib. The complete source code, numerical
parameters, figure-source data, validation outputs, and exact environment
specifications are archived in Ref.~\cite{SundarElliott2026Software}.
\subsection{Validation checks}

For both quantum agents, the memory-state construction and controlled
instrument are validated before compression. At $N=128$ and $256$, the
maximum original-agent Gram-reconstruction and isometry Gram-preservation
residuals are $3.2\times10^{-14}$ and $6.9\times10^{-14}$, respectively; the
maximum output-probability and stationarity residuals are
$1.5\times10^{-15}$ and $5.1\times10^{-16}$. The largest original-agent
trace-preservation residual is $6.8\times10^{-11}$, attained for the $N=256$
cyclic walk.

For every selected reduced agent, the validation records the minimum
eigenvalue of each $G_x$, the stimulus-wise completeness residual
\begin{equation}
\left\|
\sum_{y,\eta}\widetilde K_{y,\eta}^{(x)\dagger}
\widetilde K_{y,\eta}^{(x)}-\Id_{\tilde{d}_q}
\right\|_{\mathrm F},
\end{equation}
and the residual and phase of the dominant mixed-transfer eigenpair. Across
the selected high-$N$ points, the completeness residual is at most
$1.14\times10^{-14}$ and the dominant-eigenpair residual is at most
$9.45\times10^{-15}$. Across all production- and fine-grid crossing checks,
the latter remains below $2.6\times10^{-14}$. Every tested cyclic-walk
crossing is stable under quadrature refinement. These checks confirm that
each plotted comparison involves a valid reduced adaptive agent and that the
reported threshold crossings are numerically resolved.

\ifdefined\supplementincluded
  \def\finishsupplement{ }
\else
  \def\finishsupplement{\end{document}}
\fi
\finishsupplement

\fi

\end{document}